\documentclass[12pt,a4paper]{article}
\usepackage{enumitem}
\usepackage[margin=2.2cm]{geometry}
\usepackage{authblk}

\usepackage[breaklinks]{hyperref}
\usepackage{booktabs}
\usepackage{newtxtext,newtxmath}
\usepackage{soul}
\usepackage[colorinlistoftodos,textsize=footnotesize]{todonotes}

\usepackage{amsmath,amssymb,amsthm,mathbbol,mathrsfs}
   \usepackage{dsfont}
\let\Re\relax
\let\Im\relax
\DeclareMathOperator{\Re}{Re}
\DeclareMathOperator{\Im}{Im}

\newcommand{\1}{\mathds{1}}                         
\newcommand{\NN}{\mathbb{N}}          
\newcommand{\ZZ}{\mathbb{Z}} 
\newcommand{\RR}{\mathbb{R}}           
\newcommand{\CC}{\mathbb{C}}           
\newcommand{\M}{\mathbb{M}} 	     

\newcommand{\Hcal}{\mathcal {H}}
\newcommand{\Lcal}{\mathcal {L}}
\newcommand{\Bcal}{\mathcal {B}}

\newcommand{\F}{\mathcal{F}}

\newcommand{\Ocal}{\mathcal{O}}


\newcommand{\fF}{\mathfrak{F}}

\newcommand{\MM}{\mathbb{M}}
\newcommand{\C}{\mathbb{C}}
\newcommand{\BB}{{\mathcal{B}}}
\newcommand{\FF}{{\mathfrak{F}}}
\newcommand{\HH}{{\mathcal{H}}}

\newcommand{\DD}{{\mathscr{D}}}

\newcommand{\Mf}{{\mathfrak{M}}}
\newcommand{\Nf}{{\mathfrak{N}}}

\newcommand{\II}{{\mathbb{1}}}

\newcommand{\kb}{{\boldsymbol{k}}}
\newcommand{\xb}{{\boldsymbol{x}}}
\newcommand{\yb}{{\boldsymbol{y}}}

\newcommand{\Ac}{{\mathcal{A}}}
\newcommand{\Cc}{{\mathcal{C}}}
\newcommand{\Bc}{{\mathcal{B}}}
\newcommand{\Fc}{{\mathcal{F}}}
\newcommand{\Ic}{{\mathcal{I}}}
\newcommand{\Jc}{{\mathcal{J}}}
\newcommand{\Pc}{{\mathcal{P}}}
\newcommand{\Uc}{{\mathcal{U}}}
\newcommand{\Vc}{{\mathcal{V}}}

\newcommand{\Wc}{{\mathcal{W}}}

\newcommand{\Ws}{{\mathsf{W}}}

\newcommand{\id}{{\textrm{id}}}
\newcommand{\ph}{\varphi}

\newcommand{\dvol}{{\textrm{dvol}}}

\newcommand{\CoinX}[1]{C_0^\infty(#1)}

\newcommand{\ip}[2]{\langle #1|#2\rangle}
\newcommand{\ket}[1]{|#1\rangle}
\newcommand{\bra}[1]{\langle #1|}

\newcommand{\be}{\begin{equation}}
\newcommand{\ee}{\end{equation}}

\newtheorem{theorem}{Theorem}
\newtheorem{proposition}[theorem]{Proposition}
\newtheorem{lemma}[theorem]{Lemma}
\newtheorem{corollary}[theorem]{Corollary}

\newtheorem*{remark*}{Remark}
\newtheorem{example}[theorem]{Example}
\newtheorem{exercise}[theorem]{Exercise}

{\theoremstyle{definition}
\newtheorem{definition}[theorem]{Definition}
\newtheorem{df}[theorem]{Definition}}

\DeclareMathOperator{\im}{im}
\DeclareMathOperator{\tr}{tr}

\DeclareMathOperator{\Prob}{Prob}
\DeclareMathOperator{\supp}{supp}

\DeclareMathOperator{\Sol}{Sol}
\begin{document}
	\title{Algebraic Quantum Field Theory -- an introduction} 

	\author[1]{Christopher J Fewster\thanks{\tt chris.fewster@york.ac.uk}}
	\author[1]{Kasia Rejzner\thanks{\tt kasia.rejzner@york.ac.uk}}
	\affil{Department of Mathematics,
		University of York, Heslington, York YO10 5DD, United Kingdom.}
 
	\date{\today}
	\maketitle 
	%
	%
	\begin{abstract}  
	We give a pedagogical introduction to algebraic quantum field theory (AQFT), with the aim of explaining its key structures and features. Topics covered include: algebraic formulations of quantum theory and the GNS representation theorem, the appearance of unitarily inequivalent representations in QFT (exemplified by the van Hove model), the main assumptions of AQFT and simple models thereof, the spectrum condition, Reeh--Schlieder theorem, split property, the universal type of local algebras, and the theory of superselection sectors. The abstract discussion is illustrated by concrete examples. One of our concerns is to explain various ways in which quantum field theory differs from quantum mechanics, not just in terms of technical detail, but in terms of physical content. 
	The text is supplemented by exercises and appendices that enlarge on some of the relevant mathematical background. These notes are based on lectures given by CJF for the International Max Planck Research School at the Albert Einstein Institute, Golm (October, 2018) and by KR at the Raman Research Institute, Bangalore (January, 2019). 
	\end{abstract} 
	\tableofcontents 
	
	\section{Introduction}
		
	Algebraic Quantum Field Theory (AQFT) is one of two axiomatic programmes for QFT that emerged in the 1950s, in response to the problem of making QFT mathematically precise.
	While Wightman's programme~\cite{StreaterWightman} maintains an emphasis on quantum fields, AQFT~\cite{Haag,Araki}, developed initially by Haag, Kastler, Araki and others, takes the more radical step on focussing on local observables, with the idea that fields can emerge as natural ways of labelling some of the observables. Like Wightman theory, its primary focus is on setting out a precise mathematical framework into which all QFTs worthy of the name should fit. This permits one to separate the general study of the structure and properties of QFT from the problem of constructing (by whatever means) specific QFT models that obey the assumptions. 
	The early development of AQFT is well-described in the monographs of Haag~\cite{Haag} and Araki~\cite{Araki}. Mathematically, it makes extensive use of operator algebra methods and indeed has contributed to the theory of von Neumann algebras in return. Relevant aspects of operator algebra theory, with links to the physical context, can be found in the monographs of Bratteli and Robinson~\cite{BratRob:vol1,BratRob:vol2}.
	AQFT also comprises a lot of machinery for treating specific QFT models, which have some advantages relative to other approaches to QFT. 
	During the last 20 years it has also been adapted to provide rigorous constructions of perturbative QFT, and also of some low-dimensional models, and its overall viewpoint
	has been particularly useful in the theory of quantum fields in curved spacetimes. A recent edited collection~\cite{AdvAQFT} summarises these developments, and the two recent monographs~\cite{Rejzner_book,duetsch2019classical} in particular describe the application to perturbation theory, while~\cite{hollands2018entanglement} concerns entanglement measures in QFT. An extensive survey covering some of the topics presented here in much greater depth can be found in~\cite{HalvorsenMueger:2006}.
	
	The purpose of these lectures is to present an introduction to AQFT
	that emphasises some of its original motivations and de-mystifies some of its terminology (GNS representations, spectrum condition, Reeh--Schlieder, type III factors, split property, superselection sectors...). We also emphasise features of QFT that sharply distinguish it from quantum mechanics and which can be seen particularly clearly in the AQFT framework. Our treatment is necessarily limited and partial; the reader is referred to the literature mentioned for more detail and topics not covered here. 
	
	The idea of \emph{algebraic} formulations of quantum theory, which we describe in Section~\ref{sec:algQM}, can be traced
	back to Heisenberg's matrix mechanics, in which the algebraic relations between
	observables are the primary data. Schr\"odinger's wave mechanics, by contrast,
	starts with spaces of wavefunctions, on which the observables of the theory act in specific ways.
	
	As far as position and momentum go, and for systems of finitely many degrees of freedom, the distinction is rather inessential, because the Stone-von Neumann theorem guarantees that any (sufficiently regular\footnote{To deal with the technical problems of using unbounded operators.}) irreducible representation
	of the commutation relations is \emph{unitarily equivalent} to the Schr\"odinger representation. However, the angular momentum operators provide a classic example in which inequivalent physical representations appear, and it is standard
	to study angular momentum as an algebraic representation problem. 
	However, it was a surprise in the development of QFT that unitarily inequivalent representations have a role to play here, and indeed turn out to be ubiquitous. Section~\ref{sec:vH} is devoted to the van Hove model, one of the first examples in which this was understood. The van Hove model concerns a free scalar field with an external source, and is explicitly solvable. However, one can easily find situations in which a naive interaction picture approach fails to reproduce the correct solution -- a failure that can be clearly ascribed to a failure of unitary equivalence between different representations of the canonical commutation relations (CCRs).
	
	After these preliminaries, we set out the main assumptions of Algebraic Quantum Field Theory in Sec.~\ref{sec:AQFT}. In fact there are many variants of AQFT and we give a liberal set of axioms that can be strengthened in various ways. We also describe how some standard QFT models can be formulated in terms of AQFT. Although we focus on free theories, it is important to emphasise that AQFT is intended as a framework for all quantum field theories worthy of the name, and successfully encompasses some
	nontrivial interacting models in low dimensions. AQFT distinguishes between two levels of structure: on the one hand, the algebraic relations among observables and on the other, their concrete realisation as Hilbert space operators. The link is provided by the GNS representation theorem (described in Sec.~\ref{sec:GNS}) once a suitable state has been given. For this reason we spend some time on states of the free scalar field, describing in particular the \emph{quasi-free} states, which have representations on suitable Fock spaces. These include the standard vacuum state as well as thermal states.
	
	The remaining parts of the notes concern general features of AQFT models, where the power of the technical framework begins to come through. Among other things, we prove the Reeh--Schlieder theorem and discuss some of its consequences in Section~\ref{sec:spectrum}, before turning in Section~\ref{sec:universaltype} to the structure of the local von Neumann algebras in suitable representations and the remarkable result (which we describe, but do not prove) that they are all isomorphic to the unique hyperfinite factor of type III${}_1$. The distinction between one theory and another therefore lies in the way these algebras are situated, relative to one another, within the algebra of bounded operators on the Hilbert space of the theory. Finally, Sections~\ref{sec:split} and~\ref{sec:sectors} discuss the split property and the theory of superselection sectors. Like the theory in Section~\ref{sec:universaltype}, these are deep and technical subjects and our aim here is to present the main ideas and some outline arguments, referring the dedicated reader to the literature. 
	On the subject of literature: in this pedagogical introduction we have tended to give references to monographs rather than original papers, so the reference list is certainly not intended as a comprehensive survey of the field. 
	
	These notes represent a merger and expansion of lectures given by CJF at the AEI in Golm (October, 2018) and by KR at the Raman Research Institute, Bangalore (January, 2019). We are grateful to the students and organisers of the lecture series concerned. We are also grateful to the organisers of the conference \emph{Progress and Visions in Quantum Theory in View of Gravity} (Leipzig, October 2018) for the opportunity to contribute to their proceedings.
	
\section{Algebraic quantum mechanics}\label{sec:algQM}
\subsection{Postulates of quantum mechanics}\label{sec:PosQM}
The standard formalism of quantum theory starts with a complex Hilbert space $\Hcal$, whose elements $\phi\in\Hcal$ are called \textit{state vectors}. (For convenience some basic definitions concerning operators on Hilbert space are collected in Appendix~\ref{appx:basic_fa}.)
The key postulates of quantum mechanics say that:
\begin{itemize}
\item \textbf{Pure states} of a quantum system are described by \textit{rays} in $\Hcal$, i.e. $[\psi]:=\{\lambda \psi|\lambda\in\C\}$. \textbf{Mixed states} are described by density matrices, i.e., positive \textit{trace-class operators} $\rho:\Hcal\rightarrow\Hcal$, with unit trace. 
\item \textbf{Observables} are described by \textit{self-adjoint operators} $A$ on $\Hcal$. However, the self-adjoint operators corresponding to observables may be a proper subset of the self-adjoint operators on $\Hcal$; in particular, this occurs if the system is subject to superselection rules.
\end{itemize}

The probabilistic interpretation of quantum mechanics\footnote{For simplicity, we restrict to sharp measurements, avoiding the introduction of positive operator valued measures.} is based on the idea that one can associate to a self-adjoint operator $A$ and a normalised state vector $\psi\in\Hcal$ a probability measure $\mu_{\psi,A}$, so that the probability of the measurement outcome to be within a Borel subset $\Delta\subset\RR$ (for instance, an interval $[a,b]$) is given by 
\[
\Prob(A\in\Delta;\psi)=\int_\Delta\mu_{\psi,A}(\lambda)= \ip{\psi}{P_{A}(\Delta)\psi}\,,
\]
where $P_{A}(\Delta)$ is the \emph{spectral projection} of the operator $A$ associated with $\Delta$, and indeed $\Delta\to P_{A}(\Delta)$ determines a \emph{projection-valued measure}. The probability measure $\mu_{\psi,A}$ depends only on the ray $[\psi]$ and has support contained within the spectrum of $A$. 
The moments of this measure are given by
\[
\nu_n:=\int \lambda^n\mu_{\psi,A}(\lambda)= \left<\psi,A^n\psi\right>\,;
\] 
conversely, the moments determine the measure uniquely subject to certain growth conditions. For example, the Hamburger moment theorem~\cite{Simon:1998}\index{Hamburger moment theorem} guarantees uniqueness provided that there are constants $C$ and $D$ such that
\begin{equation}
|\nu_n|\le C D^n n!\qquad\text{for all $n\in\NN_0$.}
\end{equation} 
Note, however, that
there are many examples in quantum mechanics in which the moments grow too fast for the unique reconstruction of a probability measure. 

\begin{example} 
Consider the quantum particle confined to an interval $(-a,a)$ subject to either Dirichlet or Neumann boundary conditions at the endpoints, with corresponding Hamiltonian operators $H_D$ or $H_N$ respectively. Measurements of the energy in a state $\psi\in L^2(-a,a)$ supported away from the endpoints are distributed according to probability distributions $\mu_{\psi,H_D}$ and $\mu_{\psi,H_N}$, which differ because the spectra of $H_D$ and $H_N$ differ. However, they share a common moment sequence because $H_D$ and $H_N$ agree on state vectors supported away from the endpoints.
\end{example}

One can combine effects from two physical states described by state vectors $\psi_1$ and $\psi_2$ by building their \textit{superposition}, which, however depends on the choice of representative state vectors, since the ray corresponding to 
\[
\psi=\alpha \psi_1+\beta\psi_2
\] 
typically depends on the choice of $\alpha$ and $\beta$. However, sometimes the relative phase between the state vectors we are superposing cannot be observed. For example, this occurs if $\psi_1$ is a state of integer angular momentum, while $\psi_2$ is a half-integer angular momentum state. The physical reason for this is that a $2\pi$-rotation cannot be distinguished from no rotation at all. Of course there are self-adjoint operators on the Hilbert space that do sense the relative phase: the point is that these operators are not physical observables. 

Let us give a brief argument for the existence of superselection sectors when the theory possesses a charge $Q$, which is supposed to be conserved in any interactions available to measure it. For simplicity, we assume that $Q$ has discrete spectrum. Let $\psi$ be any eigenstate $Q\psi=q\psi$ and let $P$ be any projection corresponding to a zero-one measurement. After an ideal measurement of $P$ in state $\psi$ returning the value $1$, the system is in state $P\psi$. But as charge is conserved, $P\psi$ must be an eigenstate of $Q$ with eigenvalue $q$, so
\[
QP\psi = q P\psi = PQ\psi.
\]
We deduce that $[Q,P]\psi=0$ and, as the Hilbert space is spanned by eigenstates of $Q$, it follows that $Q$ and $P$ commute. Furthermore, $Q$ commutes with every self-adjoint operator representing a physical observable because any spectral projection of such an operator also corresponds to a physical observable.  

Mathematically, the allowed physical observables are all block diagonal with respect to a 
decomposition of the Hilbert space $\Hcal$ as
\[
\Hcal=\bigoplus_{i\in I} \Hcal_i\,,
\]
where $I$ is some index set and the subspaces $\Hcal_i$ are called \textit{superselection sectors}, which would be the charge eigenspaces in our example above. The relative phases between state vectors belonging to different sectors cannot be observed. One of the main motivations behind AQFT was to understand how superselection sectors arise in QFT. We will see that the different superselection sectors correspond to unitarily inequivalent representations of the algebra of observables. 
\subsection{Algebraic approach}\label{sec:AQM}
The main feature of the algebraic description of quantum theory is that the focus shifts from states to observables, and their algebraic relations. It is worth pausing briefly to consider the motivation for an algebraic description of observables -- this is a long story if told in full (see~\cite{Emch}), but one can explain the essential elements quite briefly. 

The central issue is to provide an operational meaning for the linear combination and product of observables. Let us suppose that a given observable is measured by a certain instrument; for measurements conducted in each particular state, the numerical readout on the instrument is statistically distributed in a certain way, so the observable may be thought of as a mapping from states to random variables taking values in $\RR$. Given two such observables, we can form a third, by taking a fixed linear combination of the random variables concerned, restricting to real-linear combinations in the first instance. So there is a clear justification for treating the set of observables as a real vector space. Similarly, we may apply a function to an observable by applying it to the random variables concerned; this may be regarded as repainting the scale on the measuring instruments. Given any
two observables $A$ and $B$ it is now possible to form the observable
\begin{equation}
A\circ B:= \frac{1}{2}\left((A+B)^2 - A^2 - B^2\right),
\end{equation} 
simply by forming linear combinations and squares. This may be regarded as a symmetrised product of $A$ and $B$. The remaining problem, which is naturally where the hard work lies, is to find appropriate additional conditions under which the vector space of observables can be identified with the self-adjoint elements of a $*$-algebra, so that $A\circ B = AB + BA$. We refer the reader to~\cite{Emch}; however, it is clear that observables naturally admit some algebraic structure beyond that of a vector space.

The main postulates of quantum theory, in its algebraic form, are now formulated as follows: 
\begin{enumerate}
	\item A physical system is described by a unital $*$-algebra $\Ac$, 
	whose self-adjoint elements are interpreted as the \textbf{observables}. It is conventional though slightly imprecise to call $\Ac$ the \textit{algebra of observables}.
	In many situations we impose the stricter condition that $\Ac$ be a unital $C^*$-algebra.
	\item \textbf{States} are identified with positive, normalized linear functionals  $\omega:\Ac\to\CC$, i.e. we require $\omega(A^*A)\geq 0$ for all $A\in\Ac$ and $\omega(\1)=1$ as well as $\omega$ being linear. The state is
	\emph{mixed} if it is a convex combination of distinct states (i.e., $\omega=\lambda\omega_1+(1-\lambda)\omega_2$ with $\lambda\in(0,1)$, $\omega_1\neq \omega_2$) and \emph{pure} otherwise. 
\end{enumerate}
Some definitions are in order. 
\begin{df}
	A \emph{$*$-algebra} (also called an involutive complex algebra) $\Ac$ is an algebra over $\C$, together with a map, ${}^*:\Ac \rightarrow \Ac$, called an involution, with the following properties:
	\begin{enumerate}
		\item    for all $A, B \in \Ac$: $(A + B)^* = A^* + B^*$, $(A B)^* = B^* A^*$,
		\item   for every $\lambda\in\CC$ and every $A \in \Ac$: $(\lambda A)^* = \overline{\lambda} A^*$,
		\item    for all $A \in \Ac$: $(A^*)^* = A$.
	\end{enumerate}
	The $*$-algebra is \emph{unital} if it has an element $\II$ which is a unit for the algebraic product ($A\II=\II A=A$ for all $A\in\Ac$) and is therefore invariant under the involution. Unless explicitly indicated otherwise, a homomorphism $\alpha:\Ac_1\to\Ac_2$ between two $*$-algebras will be understood to be an algebraic homomorphism that respects the involutions ($(\alpha A)^*=\alpha (A^*)$) and preserves units ($\alpha \II_{\Ac_1}=\II_{\Ac_2}$).  
\end{df}
 The bounded operators $\Bcal(\Hcal)$ on a Hilbert space $\Hcal$ form a $*$-algebra, with the adjoint as the involution, but there are other interesting examples. 
\begin{exercise}\label{ex:LDH}
(Technical -- for those familiar with unbounded operators.) Given a dense subspace $\DD$ of a Hilbert space $\HH$, let $\Lcal(\DD,\HH)$
be the set of all (possibly unbounded) operators $A$ on $\HH$ defined on, and leaving invariant, $\DD$, (i.e., $D(A)=\DD$, $A\DD\subset\DD$) and having an adjoint with $\DD\subset D(A^*)$. Then $\Lcal(\DD,\HH)$ may be identified with a subspace of the vector space of all linear maps from $\DD$ to itself. Verify that $\Lcal(\DD,\HH)$ is an algebra with respect to composition of maps and that the map $A\mapsto A^*|_\DD$ is an involution on $\Lcal(\DD,\HH)$, making it a $*$-algebra. Show also that $\Lcal(\HH,\HH)=\BB(\HH)$. (Hint: Use the Hellinger--Toeplitz theorem~\cite[\S III.5]{ReedSimon:vol1}.)
\end{exercise}  
The algebra of bounded operators also carries a norm that is compatible with the algebraic structure in various ways. In general we can make the following definitions: 
\begin{df}
	A \emph{normed algebra} $\Ac$ is an algebra equipped with a norm $\|.\|$ satisfying 
	\[
	\|AB\|\leq\|A\|\|B\|\,.
	\] 
	If $\Ac$ is unital, then it is a \emph{normed unital algebra} if in addition $\|\1\|=1$. If $\Ac$ is complete in the topology induced by $\|\cdot\|$ then 
	$\Ac$ is a \emph{Banach algebra}; if, additionally, $\Ac$ is a $*$-algebra and $\|A^*\|=\|A\|$, then $\Ac$ is a \emph{Banach $*$-algebra} or $B^*$-algebra. 
\end{df} 
A \emph{$C^*$-algebra} is a particular type of $B^*$-algebra.  
\begin{df}
A \emph{$C^*$-algebra} $\Ac$ is a $B^*$-algebra whose norm has the $C^*$-property:
	\[
	\|A^* A \| = \|A\|\|A^*\| = \|A\|^2, \quad\forall A\in \Ac\,.
	\]
\end{df}
The bounded operators $\Bcal(\Hcal)$, with the operator norm, provide an important example of a $C^*$-algebra. A useful property of unital $C^*$-algebras is that homomorphisms between them are automatically continuous~\cite[Prop.~2.3.1]{BratRob:vol1}, with unit norm. 

Turning to our second postulate,  
the role of the state in the algebraic approach is to assign expectation values:
if $A=A^*$, we interpret $\omega(A)$ as the expected value of $A$ if measured in the state $\omega$. At first sight this definition seems far removed from the notion of a state in conventional formulations of quantum mechanics. Let us see that it is in fact a natural generalisation.

Suppose for simplicity (and to reduce notation) that $\Ac$ is an algebra of bounded operators acting on a Hilbert space $\HH$, with the unit of $\Ac$ coinciding with the unit operator on $\HH$. Then every unit vector $\psi\in\HH$ induces a \emph{vector state} on $\Ac$ by the formula
\[
\omega_\psi(A) = \ip{\psi}{A\psi},
\] 
as is seen easily by computing $\ip{\psi}{\II\psi}=1$ and $\ip{\psi}{A^*A\psi}=\|A\psi\|^2\ge 0$. 
\begin{exercise}
	Show that every density matrix (a positive trace-class operator $\rho$ on $\HH$ with $\tr\rho=1$) induces a state on $\Ac$ according to 
	\[
	\omega_\rho(A) = \tr\rho A\qquad A\in\Ac.
	\]
\end{exercise}
However it is important to realise that, in general, not all algebraic states on $\Ac$ need arise from vectors or density matrices in a given Hilbert space representation. 
A further important point is that the definition of a state is purely mathematical in nature. It is not guaranteed that all states correspond to physically realisable situations, and indeed a major theme of the subject is to identify classes of states and representations that, by suitable criteria, may be regarded as physically acceptable.

\begin{exercise}
	By mimicking the standard arguments from quantum mechanics or linear algebra, show that every state $\omega$ on a $*$-algebra $\Ac$ induces a Cauchy--Schwarz inequality
	\begin{equation}\label{eq:CauchySchwarz}
	|\omega(A^*B)|^2\le \omega(A^*A)\omega(B^*B)
	\end{equation}
	for all $A,B\in\Ac$. Show also that $\omega(A^*)=\overline{\omega(A)}$, for any $A\in\Ac$. (Hint: consider the linear combination $\II+\alpha A$, for $\alpha\in\CC$.)
\end{exercise}

\subsection{States and representations}\label{sec:GNS}
The Hilbert space formulation of quantum mechanics is too useful to be abandoned entirely and the study of Hilbert space representations forms an important part of AQFT. Let us recall a few definitions. 
\begin{df}
	A \emph{representation} of a unital $*$-algebra $\Ac$ consists of
	a triple $(\HH,\DD,\pi)$, where $\HH$ is a Hilbert space, $\DD$ a dense subspace of $\HH$, and $\pi$ a map from $\Ac$ to operators on $\HH$ with the following properties: 
	\begin{itemize}
	    \item each $\pi(A)$ has domain $D(\pi(A))=\DD$ and range contained in $\DD$, 
	    \item $\pi(\II)=\II|_\DD$,
	    \item $\pi$ respects linearity and products,
	    \[
	    \pi(A+\lambda B+CD) = \pi(A) + \lambda\pi(B) + \pi(C)\pi(D), \qquad A,B,C,D\in\Ac,~\lambda\in\CC
	    \]
	    \item each $\pi(A)$ has an adjoint with $\DD\subset D(\pi(A)^*)$,  whose restriction to $\DD$ obeys $\pi(A)^*|_\DD=\pi(A^*)$.
	\end{itemize}
	In short, $\pi$ is a homomorphism  
	from $\Ac$ into the $*$-algebra $\Lcal(\DD,\HH)$ defined in  Exercise~\ref{ex:LDH}. Note that every $\pi(A)$ is closable, due to the fact that $\pi(A)^*$ is densely defined. We will also use the shorthand notation $(\HH,\pi)$ for a representation $(\HH,\HH,\pi)$. In this case, 
	$\pi$ is a homomorphism	$\pi:\Ac\rightarrow\Bcal(\Hcal)$, and is necessarily continuous if $\Ac$ is a $C^*$-algebra.
	\\
	
	\noindent A representation $\pi$ is called \emph{faithful} if $\ker \pi=\{0\}$. It is called irreducible if there are no subspaces of $\Hcal$ invariant under $\pi(\Ac)$ that are not either trivial or dense in $\Hcal$.\\
\end{df} 
\begin{df}
	Two representations $(\HH_1,\DD_1,\pi_1)$ and $(\HH_2,\DD_2,\pi_2)$ of a $*$-algebra $\Ac$ are called \emph{unitarily equivalent}, if
	there is a unitary map $U:\HH_1\rightarrow\HH_2$ which 
	restricts to an isomorphism between $\DD_1$ and $\DD_2$, and 
	$U\pi_1(A)=\pi_2(A) U$ holds for all $A\in\Ac$. They are \emph{unitarily inequivalent} if they are not unitarily equivalent.
\end{df} 

On a first encounter, algebraic states feel unfamiliar because one is so used to the Hilbert space version. However algebraic states are not too far away from a Hilbert space setting. The connection is made by the famous GNS (Gel'fand, Naimark, Segal) representation theorem. 
\begin{theorem}  Let $\omega$ be a state on a unital $*$-algebra $\Ac$. 
	Then there is a representation $(\HH_\omega,\DD_\omega,\pi_\omega)$
	of $\Ac$ and a unit
	vector $\Omega_\omega\in\DD_\omega$ such that $\DD_\omega=\pi_\omega(\Ac)\Omega_\omega$ and
	\begin{equation}\label{eq:GNSprop}
	\omega(A) =\ip{\Omega_\omega}{\pi_\omega(A)\Omega_\omega}, \qquad \forall~A\in \Ac.
	\end{equation}
	Furthermore $(\HH_\omega,\DD_\omega,\pi_\omega,\Omega_\omega)$ are unique up to unitary equivalence. If $\Ac$ is a $C^*$-algebra,
	then, additionally, (i) each $\pi_\omega(A)$ extends to a bounded operator on $\HH_\omega$; (ii)
	$\omega$ is pure if and only if the representation is irreducible; 
	and (iii) if $\pi_\omega$ is faithful [i.e., $\pi_\omega(A)=0$ only if $A=0$] then $\|\pi_\omega(A)\|=\|A\|_{\Ac}$.  
\end{theorem}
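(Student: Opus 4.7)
The plan is the classical GNS construction: manufacture a Hilbert space directly from $\omega$ by using the sesquilinear form $(A,B)\mapsto\omega(A^*B)$, represent $\Ac$ on it by left multiplication, and take $\Omega_\omega$ to be the image of the unit. First I would introduce $\Ncal_\omega:=\{A\in\Ac:\omega(A^*A)=0\}$; using the Cauchy--Schwarz inequality~\eqref{eq:CauchySchwarz} with $X=C^*B$, $Y=A$ one gets $|\omega(B^*CA)|^2\le\omega(B^*CC^*B)\omega(A^*A)$, which shows both that $\omega(B^*A)=0$ whenever $A\in\Ncal_\omega$ (so $\omega(A^*B)$ descends to a well-defined sesquilinear pairing on the quotient $\Ac/\Ncal_\omega$) and, taking $B=CA$, that $\Ncal_\omega$ is a left ideal. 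Positivity of $\omega$ then makes $\langle [A]|[B]\rangle_\omega:=\omega(A^*B)$ a genuine inner product on $\Ac/\Ncal_\omega$, and I let $\HH_\omega$ be its completion, $\DD_\omega:=\Ac/\Ncal_\omega$, and $\Omega_\omega:=[\II]$ (a unit vector since $\omega(\II)=1$).

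Next I define $\pi_\omega(A)[B]:=[AB]$, which is well-defined on $\DD_\omega$ precisely because $\Ncal_\omega$ is a left ideal, and sends $\DD_\omega$ into itself. Linearity in $A$, preservation of products, and $\pi_\omega(\II)=\id|_{\DD_\omega}$ are immediate. For the involution, a direct computation gives $\langle[B]|\pi_\omega(A)[C]\rangle_\omega=\omega(B^*AC)=\langle\pi_\omega(A^*)[B]|[C]\rangle_\omega$, which shows $\pi_\omega(A^*)\subset\pi_\omega(A)^*$ and hence that $\pi_\omega(A)$ is closable with $\DD_\omega\subset D(\pi_\omega(A)^*)$. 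Cyclicity $\pi_\omega(\Ac)\Omega_\omega=\DD_\omega$ is built in, and the GNS identity~\eqref{eq:GNSprop} is simply $\langle\Omega_\omega|\pi_\omega(A)\Omega_\omega\rangle_\omega=\omega(\II^*A)=\omega(A)$.

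For uniqueness up to unitary equivalence, given another cyclic representation $(\HH',\DD',\pi',\Omega')$ realising $\omega$, I would define $U:\DD_\omega\to\DD'$ by $U\pi_\omega(A)\Omega_\omega:=\pi'(A)\Omega'$. The GNS property at both ends shows $\|\pi'(A)\Omega'\|^2=\omega(A^*A)=\|\pi_\omega(A)\Omega_\omega\|^2$, so $U$ is well-defined and isometric on a dense subspace; cyclicity on the target side gives dense range, so $U$ extends to a Hilbert space isomorphism, and $U\pi_\omega(A)=\pi'(A)U$ on $\DD_\omega$ by construction.

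For the $C^*$-algebra refinements, the main new ingredient is the order-theoretic bound $A^*A\le\|A\|^2\II$: positivity of $\omega$ on $B^*(\|A\|^2\II-A^*A)B$ yields $\|\pi_\omega(A)[B]\|^2\le\|A\|^2\|[B]\|^2$, proving (i). For (iii), if $\pi_\omega$ is faithful then $\|\pi_\omega(A)\|\le\|A\|$, while the general fact that an injective $*$-homomorphism of unital $C^*$-algebras is isometric (standard from the spectral characterisation of the norm, $\|A\|^2=\|A^*A\|=$ spectral radius of $A^*A$, which is preserved by injective $*$-homomorphisms) gives the reverse inequality. The part I expect to require the most care is (ii): I would show that any closed $\pi_\omega(\Ac)$-invariant subspace of $\HH_\omega$ corresponds to an orthogonal projection in the commutant $\pi_\omega(\Ac)'$, and that a nontrivial such projection $P$ produces a nontrivial convex decomposition $\omega=\lambda\omega_1+(1-\lambda)\omega_2$ with $\lambda=\|P\Omega_\omega\|^2$ and $\omega_i$ the vector states of the normalised components of $\Omega_\omega$ in $P\HH_\omega$ and $P^\perp\HH_\omega$, proving ``pure $\Rightarrow$ irreducible''. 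Conversely, given $\omega=\lambda\omega_1+(1-\lambda)\omega_2$, one produces a positive operator $T$ in $\pi_\omega(\Ac)'$ via $\langle\pi_\omega(A)\Omega_\omega|T\pi_\omega(B)\Omega_\omega\rangle=\lambda\omega_1(A^*B)$ (well-definedness follows from the domination $\lambda\omega_1\le\omega$ and Cauchy--Schwarz); irreducibility forces $T$ to be scalar and hence $\omega_1=\omega$, completing the equivalence.
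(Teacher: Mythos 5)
Your construction is the classical GNS argument and coincides step for step with the paper's proof: the same null ideal $\{A:\omega(A^*A)=0\}$ shown to be a left ideal via Cauchy--Schwarz, the quotient inner-product space and its completion, left multiplication for $\pi_\omega$, $\Omega_\omega=[\II]$, and the densely defined isometry intertwining cyclic vectors for uniqueness. The only difference is that you also sketch the $C^*$-algebra refinements (i)--(iii), which the paper simply defers to the literature; your sketches are the standard arguments and are sound.
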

Due to the fact that $\pi_\omega(\Ac)\Omega_\omega$ is dense in $\HH_\omega$, we say that $\Omega_\omega$ is \emph{cyclic} for the representation.
The existence of a link between purity and irreducibility is easily understood from the following example: if $\psi,\varphi\in\HH$ are linearly independent (normalised) vector states on a subalgebra $\Ac$ of $\BB(\HH)$, then the density matrix state
\[
\rho = \lambda \ket{\psi}\bra{\psi} + (1-\lambda)\ket{\varphi}\bra{\varphi}\qquad
0<\lambda<1
\]
can be realised as the vector state $\Psi_\rho=\sqrt{\lambda}\psi\oplus \sqrt{1-\lambda}\varphi$ in the reducible representation
\[
A\mapsto A\oplus A
\]
of $\Ac$ on $\HH\oplus\HH$.


\begin{proof}
	The construction of the GNS representation has several steps:
	\begin{itemize}
		\item Define a subset $\Ic_\omega\subset \Ac$ by 
		\[
		\Ic_\omega = \{A\in\Ac: \omega(A^*A)=0\}.
		\]
		Using the Cauchy--Schwarz identity~\eqref{eq:CauchySchwarz}, one may prove that $\Ic_\omega$ is a left ideal: if $A\in\Ic_\omega$ then
		\[
		|\omega((BA)^*(BA))|^2 = |\omega(C^*A)|^2\le \omega(C^*C)\omega(A^*A) = 0,
		\]
		where $C=B^*BA$, 
		so $BA\in\Ic_\omega$. A similar argument shows that $\Ic_\omega$ is a subspace of $\Ac$ and that
		the subspace $\Ic_\omega^*$ is a right ideal. \emph{(Exercises!)}
		\item Define a quotient vector space
		\[
		\DD_\omega :=\Ac/\Ic_\omega
		\]
		and note that $\DD_\omega$ carries an inner product defined by 
		\[
		\ip{[A]}{[B]} = \omega(A^*B) \qquad A,B\in\Ac,
		\]
		which is well-defined (i.e., independent of the choice of representatives, and has the properties of an inner product) due to the Cauchy--Schwarz inequality and definition of $\Ic_\omega$ \emph{(Exercises!)}. For example,
		\[
		\ip{[A]}{[A]} = 0\quad\iff\quad \omega(A^*A)=0\quad\iff\quad A\in\Ic_\omega \quad\iff\quad[A]=0.
		\] 
		\item Define the Hilbert space $\HH_\omega$ as the completion of $\DD_\omega$ with respect to the inner product just mentioned; by construction $\DD_\omega$ is dense in $\HH_\omega$. 
		\item Define $\Omega_\omega = [\II]\in\DD_\omega\subset\HH_\omega$, noting that
		$\|\Omega_\omega\|^2 = \omega(\II) = 1$.
		\item Define $\pi_\omega$ as follows: for each $A\in\Ac$, $\pi_\omega(A)$ is the linear operator $\DD_\omega\to\DD_\omega$ given by 
		\[
		\pi_\omega(A)[B] := [AB],\qquad B\in\Ac
		\]
		which is well-defined due to the left-ideal property yet again.
		The properties
		\begin{align*}
		\pi_\omega(\II) &= \II_\HH   \\
		\pi_\omega(A+\lambda B) &= \pi_\omega(A)+ \lambda\pi_\omega(B) \\
		\pi_\omega(AB) &= \pi_\omega(A)\pi_\omega(B) 
		\end{align*}
		are easily verified as identities of operators on $\DD_\omega$, and
		it is clear that $\pi_\omega(\Ac)\Omega_\omega=\DD_\omega$, so $\Omega_\omega$ is cyclic as $\DD_\omega$ is dense.
		
		Now observe that 
		\[
		\ip{\pi_\omega(A^*)[B]}{[C]} = \ip{[A^*B]}{[C]} =\omega(B^*AC) = \ip{[B]}{\pi_\omega(A)[C]}
		\] 
		for all $[B],[C]\in\DD_\omega$. This shows that $\pi_\omega(A)$ has an adjoint with dense domain including $\DD_\omega$ and indeed
		\begin{align*}
		\pi_\omega(A^*) &= \pi_\omega(A)^*|_{\DD_\omega}.
		\end{align*}
		Therefore $(\HH_\omega,\DD_\omega,\pi_\omega)$ is a representation of $\Ac$, and the calculation
		\[
		\ip{\Omega_\omega}{\pi_\omega(A)\Omega_\omega} = \ip{[\II]}{\pi_\omega(A)[\II]} = \ip{[\II]}{[A]}=\omega(A)
		\]
		verifies~\eqref{eq:GNSprop}. 
		\item For uniqueness, suppose that another Hilbert space $\HH$, dense domain $\DD$, distinguished vector $\Omega$ and representation $\pi$ are given with the properties stated above. Now define a linear map $U:\DD\to \DD_\omega$ by
		\[
		U\pi(A)\Omega = \pi_\omega(A)\Omega_\omega
		\]
		noting that this is well-defined because 
		\[
		\pi(A)\Omega=0 \quad \implies \quad 0=\|\pi(A)\Omega\|_\HH^2=\omega(A^*A)=\|\pi_\omega(A)\Omega_\omega\|_{\HH_\omega}^2 \quad\implies\quad  \pi_\omega(A)\Omega_\omega = 0.
		\]
		Essentially the same calculation shows that $U$ is an isometry, and as it is clearly invertible, $U$ therefore extends to a unitary $U:\HH\to\HH_\omega$ so that $U\DD=\DD_\omega$, $U\Omega=\Omega_\omega$ and
		\[
		U\pi(A)U^{-1} = \pi_\omega(A)
		\]
		acting on any vector in $\DD_\omega$. This is the promised unitary equivalence.
		\item The special features of $C^*$-algebras are addressed e.g., in \cite[Prop.~2.3.3 and \S 2.3.3]{BratRob:vol1} .
	\end{itemize}
\end{proof}
\begin{exercise}
	Work through all the details in the proof of the GNS representation theorem.
\end{exercise}
	\section{Case study: the van Hove model}\label{sec:vH}
	
	Our aim in this section is to present a version of van Hove's model, one of the first instances in which it became clear that unitarily inequivalent representations of the CCRs arise naturally in QFT. van Hove's work provided part of the motivation for Haag's theorem on the nonexistence of the interaction picture and the subsequent development of AQFT.
	
	The van Hove model~\cite{vanHove:1952} describes the interaction of a neutral scalar field with an external source; physically, it is a slightly simplified version of Yukawa's model~\cite{Yukawa:1935} of the mediation of
	internucleon forces by pions, and correctly accounts for the Yukawa interaction. The Lagrangian is
	\[
	\mathcal{L} =\frac{1}{2}(\nabla_\mu\phi)\nabla^\mu\phi - \frac{1}{2}m^2\phi^2
	-\rho\phi
	\]
	using the ${+}{-}{-}{-}$ signature conventions,\footnote{We also adopt units in which the speed of light and $\hbar$ are both set to $1$.}
	where the source $\rho=\rho(\xb)$ is a time-independent real-valued function, or even a distribution: a simple nucleon model might have $\rho$ as a linear combination of $\delta$-functions.\footnote{The Compton wavelength of the pions is approximately seven times that of a proton.} The field equation is
	\begin{equation}\label{eq:vHeq}
	(\Box+m^2)\phi=-\rho.
	\end{equation}
	Of course, $\rho=0$ is the familiar massive real scalar field, and in general
	we know how to solve an inhomogeneous equation by finding a particular integral
	and reducing to a homogeneous equation. So this model ought to be (and is) exactly solvable. However, let us forget this for a moment and proceed according to standard lore (and in the first instance, a little formally), so as to exhibit the limitations of naive QFT. 
	
	We shall outline three quantization approaches. The first two follow the spirit of wave mechanics. Namely, having established a Hilbert space representation for the free field, we stick with it.
	\paragraph{Method 1: Schr\"odinger picture} We work in the standard Fock space of the $\rho=0$ model, in which there are `time zero' fields $\varphi(\xb)$, $\pi(\xb)$ obeying the equal time commutation relations
	\begin{equation}\label{eq:ETCR}
	[\varphi(\xb),\pi(\yb)]=i\delta(\xb-\yb)\mathbb{1}.
	\end{equation}
	To treat the `interacting' model, one constructs its Hamiltonian $H_\rho$ in terms of the  canonical variables $\varphi(\xb)$, $\pi(\xb)$, obtaining the Schr\"odinger picture evolution $e^{-iH_\rho t}$ of states. If desired, one can then pass to 
	the interacting field in the Heisenberg picture as 
	\[
	\Phi_\rho(t,\xb)=e^{iH_\rho t}\varphi(\xb)e^{-iH_\rho t}.
	\]
	\paragraph{Method 2: Interaction picture} In the interaction picture, the time-dependent
	field is given by the free field $\Phi_0(t,\xb)$ acting on the usual Fock space. The interaction picture state evolution may be obtained e.g., using a perturbative Dyson expansion.	
	
	Methods 1 and 2 work perfectly well if $\rho$ is a smooth compactly supported function. However, one encounters problems if $\rho$ has $\delta$-singularities (for UV reasons) or if either $m=0$ or $\rho\equiv 1$ (for IR reasons). The third method does not suffer these problems. It is more algebraic in nature and does not start from a prejudice about what the Hilbert space of the theory should be. Rather than treating the van Hove model as a modification of the free field, we quantize it directly.  
	
	\paragraph{Method 3: Canonical quantization from scratch} Start again
	with \emph{classical} canonical variables $\varphi_\rho$ and $\pi_\rho$. Making a Fourier analysis we may (without loss) write them in the form
	\begin{align*}
	\varphi_\rho(\xb) &= \int\frac{d^3\kb}{(2\pi)^3}\frac{1}{\sqrt{2\omega}}
	(a(\kb) + a(-\kb)^*)e^{i\kb\cdot\xb}\\
	\pi_\rho(\xb) &= -i\int\frac{d^3\kb}{(2\pi)^3}\sqrt{\frac{\omega}{2}}
	(a(\kb) - a(-\kb)^*)e^{i\kb\cdot\xb}
	\end{align*} 
	for complex coefficients $a(\kb)$; here, the free mode frequency $\omega=\sqrt{\|\kb\|^2+m^2}$ has been inserted for later convenience.
	Now promote the $a(\kb)$'s to `operators' and impose the equal time commutation relations~\eqref{eq:ETCR}, whereupon the $a(\kb)$ must obey the CCRs
	\begin{equation}\label{eq:aCCRs}
	[a(\kb),a(\kb')^*]=(2\pi)^3\delta(\kb-\kb')\mathbb{1}
	\end{equation}
	and other commutators vanishing. Note that this is an entirely algebraic procedure and that nothing has yet been said about any Hilbert space representation. So the term `operator' is a bit misleading as there is, as yet, nothing to operate on.  
	
	The next step is to introduce the Hamiltonian, which differs from the free Hamiltonian by the addition of the $\rho\phi$ terms. Written 
	in normal order, it is 
	\begin{equation}\label{eq:HvH}
	H_\rho=\int\frac{d^3\kb}{(2\pi)^3} \left(\omega a(\kb)^*a(\kb) +\frac{1}{\sqrt{2\omega}}\left(\overline{\hat{\rho}(\kb)}a(\kb) + \hat{\rho}(\kb)a(\kb)^*\right) \right) ,
	\end{equation} 
	where 
	\begin{equation}
	\hat{\rho}(\kb)=\int d^3\xb\, \rho(\xb)e^{-i\kb\cdot\xb}
	\end{equation} 
	is the Fourier transform of $\rho$.
	
	We now need to find a Hilbert space representation in which (give or take a constant) $H_\rho$ is
	self-adjoint and the CCRs~\eqref{eq:aCCRs} are valid, and ideally we would
	like to find a vacuum state for $H_\rho$. This can be accomplished easily,
	by completing the square in~\eqref{eq:HvH} to give
	\begin{equation}
	H_\rho = \int\frac{d^3\kb}{(2\pi)^3} \omega \tilde{a}(\kb)^*\tilde{a}(\kb) + E_\rho\mathbb{1},
	\end{equation}
	where
	\begin{equation}\label{eq:atilde}
	\tilde{a}(\kb) = a(\kb) + \frac{\hat{\rho}(\kb)}{\sqrt{2\omega^3}}\mathbb{1},
	\end{equation}
	and the constant $E_\rho$ is
	\begin{equation}
	E_\rho= -\int\frac{d^3\kb}{(2\pi)^3} \frac{|\hat{\rho}(\kb)|^2}{2\omega^2} = \frac{1}{2}\int d^3\xb\, d^3\yb\,\rho(\xb) V_Y(\xb-\yb)\rho(\yb) \,, 
	\end{equation} 
	in which 
	\[
	V_Y(\xb)=-\frac{e^{-m\|\xb\|}}{4\pi\|\xb\|^2}
	\]
	is the Yukawa potential, responsible for the inter-nucleon force in this model. An important point is that the $\tilde{a}(\kb)$ operators clearly obey
	the same CCRs as the $a(\kb)$.
	
	It is now clear how to proceed: we should represent the $\tilde{a}(\kb)$'s and $\tilde{a}(\kb)^*$'s as Fock space annihilation and creation operators, 
	writing $\tilde{\Omega}$ for the `vacuum vector' annihilated by the $\tilde{a}(\kb)$'s. 
	Then $\tilde{\Omega}$ is automatically a state of lowest energy for $H_\rho$ and,
	by discarding the constant, we may arrange that it is a state of zero energy.
	Of course, the Fock space is just the usual bosonic Fock space over a $1$-particle space $L^2(\RR^3,d^3\kb)$ with $\tilde{\Omega}$ as the vacuum vector, 
	and our Hamiltonian is just the standard Hamiltonian of the free field on this space,
	\[
	\tilde{H}_0=\int\frac{d^3\kb}{(2\pi)^3} \omega \tilde{a}(\kb)^*\tilde{a}(\kb) .
	\] 
	It may seem that this simply reproduces the situation of methods 1 or 2. However, the redefinition~\eqref{eq:atilde} provides the crucial difference, because the time-zero fields here are defined in terms of the original operators $a(\kb)$. Indeed, a calculation shows that 
	the time-dependent Heisenberg picture field is
	\begin{equation}
	\tilde{\Phi}_\rho(t,\xb) = e^{i\tilde{H}_0 t}\varphi_\rho(\xb)e^{-i\tilde{H}_0 t} = \tilde{\Phi}_0(t,\xb) + \phi_\rho(\xb)\mathbb{1},
	\end{equation}
	where 
	\begin{equation}
	\phi_\rho(\xb) = -\int\frac{d^3\kb}{(2\pi)^3}\frac{\hat{\rho}(\kb)}{\omega^2} e^{i\kb\cdot\xb} = -(V_Y\star\rho)(\xb)
	\end{equation}
	is (formally) a time-independent solution to the field equation~\eqref{eq:vHeq}, and
	\begin{equation}
	\tilde{\Phi}_0(t,\xb) = \int\frac{d^3\kb}{(2\pi)^3}\frac{1}{\sqrt{2\omega}}
	(\tilde{a}(\kb)e^{-i\omega t + i\kb\cdot\xb} + \tilde{a}(\kb)^*e^{i\omega t - i\kb\cdot\xb})
	\end{equation} 
	is the standard free real scalar field on the Fock space.
	
	On reflection, we see that the canonical quantization procedure amounts precisely to solving the inhomogeneous equation using a particular integral and a solution to the homogeneous equation, which we then quantize as
	a free scalar field with $\tilde{\Omega}$ as its vacuum vector.	
		
	Provided that $\phi_\rho$ is at least a weak solution to~\eqref{eq:vHeq}, the final quantized model will be well-defined, in the sense that the fields give operator-valued distributions weakly solving the field equations and obeying the covariant commutation relations [which are actually the same as for the free field, as can be verified easily using Peierls method]. In particular, for $m>0$, $\rho$ could be any tempered distribution, so linear combinations of $\delta$-functions, or $\rho\equiv 1$ are certainly permitted.
			
	We can now discuss the relation of this model to the approach in Method~1,
	which started from the assumption that the time-zero fields of the van Hove model are exactly those of the free field. Our construction has actually shown that
	\begin{equation}
	\varphi_\rho(\xb) = \varphi_0(\xb) + \phi_\rho(\xb)\mathbb{1}, \qquad\pi_\rho(\xb) =
	\pi_0(\xb) 
	\end{equation}
	where $\varphi_0$ and $\pi_0$ are standard free-field time-zero fields in the vacuum representation. So methods 1 and 3 are equivalent if and only if there is a unitary
	map $U$ on the free field Fock space so that $\varphi_\rho(\xb)=U\varphi_0(\xb)U^{-1}$, $\pi_\rho(\xb)=U\pi_0(\xb)U^{-1}$, 
	or equivalently, $a(\kb)  = U \tilde{a}(\kb) U^{-1}$.
	
	As the following exercise shows, a necessary (and in fact sufficient) condition for the existence of $U$ is that 
	\begin{equation}\label{eq:vHint}
	\int \frac{d^3\kb}{(2\pi)^3} \frac{|\hat{\rho}(\kb)|^2}{2\omega^3} <\infty,
	\end{equation}
	which fails for both $\delta$-function singularities and the limit $\rho\to 1$.  Whenever the integral in~\eqref{eq:vHint} diverges, method~1, and similarly method~2, therefore fail to reproduce the exact solution to the van Hove model.
	
	\begin{exercise}
		Supposing that $U$ exists, deduce that $\psi = U\tilde{\Omega}$ is annihilated by all $a(\kb)$. Show that the $n$-particle wavefunction component of $\psi$ is
		\[
		\psi_n = \frac{1}{\sqrt{n}}\chi_\rho\otimes_s \psi_{n-1},\qquad \chi_\rho(\kb)= \frac{\hat{\rho}(\kb)}{\sqrt{2\omega^3}},
		\]
		which implies that $\psi=0$ unless $\chi_\rho\in L^2(\RR^3,d^3\kb/(2\pi)^3)$, i.e., only if~\eqref{eq:vHint} holds. If it does hold, show further that
		\[
		\psi = \mathcal{N} \sum_{n=0}^\infty \frac{\chi_\rho^{\otimes n}}{\sqrt{n!}}
		\]
		for some nonzero constant $\mathcal{N}\in\CC$,
		and hence $\|\psi\| = |\mathcal{N}| e^{\|\chi_\rho\|^2/2}$. 
	\end{exercise}
	
	The lessons to be drawn from the van Hove model are:
	\begin{itemize}
		\item Unitarily inequivalent representations of the CCRs appear naturally in QFT, even in simple models. In fact, were we to replace $\rho\mapsto  \lambda\rho$, we would find that the corresponding CCR representations are unitarily inequivalent for any distinct values of $\lambda$, when the integral in~\eqref{eq:vHint} diverges.
		\item The interaction picture does not exist in general.  \emph{Haag's theorem}, a general result proved partly in response to the difficulties pointed out by van Hove, shows that the interaction picture is of very limited applicability: any QFT on Minkowski space whose time-zero fields coincide with those of a free field theory is necessarily a free field theory. It is sometimes said that Haag's theorem applies only to translationally invariant theories; this is true of the formal statement, but the van Hove model with $\delta$-function potentials shows that the interaction picture can fail in nontranslationally invariant situations as well, or even in a finite volume box as in van Hove's original paper.
		\item The success of Method 3 may be attributed to the way it separates the problem of determining the algebraic relations, treated first, from the problem of finding a Hilbert space representation, treated second. 
	\end{itemize}

	\section{Algebraic QFT}\label{sec:AQFT}
	The appearance of unitarily inequivalent representations of the CCRs even in simple QFT models motivates an approach based in the first instance on algebraic relations. In this section we set out some minimal axioms for AQFT and then give constructions of some simple free QFTs in the AQFT framework and describe the class of quasifree states for the free scalar field. 

	\subsection{Basic requirements}
	The minimum requirements (more or less) of AQFT on Minkowski spacetime $\MM$ are:
	\begin{enumerate}[label=\bf A\arabic{enumi},leftmargin=*,widest=4] 
		\item\label{it:loc} {\bf Local algebras}\\
		There is a unital $*$-algebra $\Ac(\MM)$ and, 
		to each open causally convex\footnote{A subset is causally convex if it contains every causal curve whose endpoints belong to it; see Fig.~\ref{fig:spacetimeregions} for an illustration.} bounded region $\Ocal\subset\MM$, a subalgebra $\Ac(\Ocal)$ containing the unit of $\Ac(\MM)$, so that the $\Ac(\Ocal)$ collectively generate $\Ac(\MM)$. We call $\Ac(\MM)$ the \emph{quasi-local algebra} of the theory.
		\item\label{it:iso} {\bf Isotony}\footnote{Doubtless to avoid `monotony'.}\\
		Whenever $\Ocal_1\subset \Ocal_2$, the corresponding local algebras are nested, 
		\begin{equation}
		 \Ac(\Ocal_1)\subset \Ac(\Ocal_2).
		\end{equation} 
		\item\label{it:Eins} {\bf Einstein causality}\\ If $\Ocal_1$ and $\Ocal_2$ are causally disjoint then
		\begin{equation}
		[\Ac(\Ocal_1),\Ac(\Ocal_2)] = 0,
		\end{equation}
		i.e., 
		\begin{equation}
			[ A_1, A_2] = 0,\qquad\text{for all $A_i\in\Ac(\Ocal_i)$, $i=1,2$}.
		\end{equation} 
		\item\label{it:Poinc} {\bf Poincar\'e covariance}\\
		To every transformation $\rho$ in the identity connected component $\Pc_0$ of the Poincar\'e group, there is an automorphism $\alpha(\rho)$ of $\Ac(\MM)$ such that
		\begin{equation}
		\alpha(\rho):\Ac(\Ocal)\to \Ac(\rho \Ocal)
		\end{equation}
		such that $\alpha(\id)=\id_{\Ac(\MM)}$
		and naturally $\alpha(\sigma) \circ\alpha(\rho)  = \alpha(\sigma\circ\rho)$ is required for any  $\sigma,\rho\in\Pc_0$.  
		\item\label{it:dyn} {\bf Existence of dynamics}\\
		If $\Ocal_1\subset \Ocal_2$ and $\Ocal_1$ contains a Cauchy surface of $\Ocal_2$,\footnote{Recall that a Cauchy surface for $\Ocal_2$ is a subset met exactly once by every inextendible timelike curve in $\Ocal_2$.} then 
		\begin{equation}
		\Ac(\Ocal_2) = \Ac(\Ocal_1).
		\end{equation}
	 \emph{Sometimes this condition is weakened.}
	\end{enumerate}
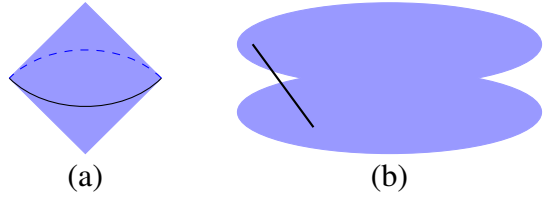
\begin{figure}
	\begin{center}
		\begin{tikzpicture}
		\filldraw[fill=blue!40!white,,color=blue!40!white](0,0) node[below,color=black]{(a)} --++(1,1)--++(-1,1)--++(-1,-1)--cycle;
		\draw[black] (1,1) .. controls (0.5,0.5) and (-0.5,0.5) .. (-1,1);
		\draw[dashed, blue] (1,1) .. controls (0.5,1.5) and (-0.5,1.5) .. (-1,1);
		\node[below] at (4,0) {(b)};
		\draw[fill=blue!40!white,color=blue!40!white] (4,1.45) ellipse (2 and 0.55);
		\draw[fill=blue!40!white,color=blue!40!white] (4,0.55) ellipse (2 and 0.55);
		\draw[thick] (2.2,1.45) -- ++ (0.8,-1.1);
		\end{tikzpicture}
	\end{center}\caption{Sketches of two spacetime regions. (a) A double cone region, which is an example of a causally convex set; (b) illustrates a non-causally convex set because points in the region may be joined by a causal curve (e.g., the black line) that leaves it.
		Sketch of a double cone region.}\label{fig:spacetimeregions}
\end{figure}
	The interpretation of this structure is that the self-adjoint elements of $\Ac(\Ocal)$,
	i.e., $A=A^*\in\Ac(\Ocal)$ are observables that are associated with the region $\Ocal$.
	Loosely, we may think of them as being measureable within $\Ocal$; this can be made more precise, see \cite{FV:2018} and the short account~\cite{Fewster:2019}. We will use the term \emph{local observable algebra} to describe $\Ac(\Ocal)$.
	The set of open causally convex bounded regions is a \emph{directed system} under inclusion: given any two regions $\Ocal_1$ and $\Ocal_2$, there exists a further region $\Ocal_3$ containing them both. For this reason the assignment $\Ocal\mapsto\Ac(\Ocal)$ is often called a \emph{net} of local algebras. In this situation, the condition that the $\Ac(\Ocal)$ collectively generate $\Ac(\MM)$ reduces to the issue of whether $\Ac(\MM)$ is equal to the union $\bigcup_{\Ocal_n}\Ac(\Ocal_n)$ [or its closure, if the algebras carry suitable topologies; see below], where $\Ocal_n$ is any nested family of regions with $\bigcup_n \Ocal_n=\MM$.
	
	We will add more conditions in due course.
	\paragraph{Remarks}
	\begin{itemize}
	\item The prototypical local region is a \emph{double cone}:\index{double cone} namely, the set of all points
	lying on smooth timelike curves between two points $p$ and $q$ (with $p$ and $q$ excluded). By scalings, boosts and translations, all double cones in Minkowski space can be obtained from the elementary example
	\[
	\Ocal = \{(t,\xb)\in\RR^4: |t|+\|\xb\| < \ell_0 \}
	\]
	for some length-scale $\ell_0>0$; see Fig.~\ref{fig:spacetimeregions}. All double cones are causally convex.   
	\item For technical reasons it is often useful to require that each $\Ac(\Ocal)$ is in fact a $C^*$-algebra, but we need not insist on this, nor even that the $\Ac(\Ocal)$ carry any topology. In the $C^*$-case, one would require $\Ac(\MM)$ to be generated in a $C^*$-sense by the local algebras -- technically it is their $C^*$-inductive limit. In particular, the union $\bigcup_{\Ocal}\Ac(\Ocal)$ would be dense in $\Ac(\MM)$. 	
	\item Sometimes (particularly in curved spacetimes) it is convenient to allow for local algebras indexed by unbounded ($=$ noncompact closure) regions. 
 
	\item Einstein causality requires elements of algebras of spacelike separated regions to commute. Therefore Fermi fields can only appear in products involving even numbers of fields. We return to this later.	
	\item As mentioned, these are minimal requirements for AQFT but do not, by themselves, suffice to distinguish a quantum field theory from other relativistic models. 
	\item Nothing has yet been said about Hilbert spaces, or about what algebraic states on the observable algebras are to be regarded as physical; we will discuss these issues later. Note that one can do quite a lot without ever going into Hilbert spaces. For example, 
		let $\Ac(\MM)$ be the algebra of the free real scalar field [see below], and let $\omega$ be a state on $\Ac(\MM)$. Then the smeared $n$-point function is
		\[
		W_n(f_1,f_2,\ldots,f_n) := \omega(\Phi(f_1)\Phi(f_2)\cdots \Phi(f_n)), \qquad f_1,\ldots,f_n\in\CoinX{\MM}
		\]
		and if this is suitably continuous w.r.t.\ the $f_i$, it defines a distribution $W_n\in\DD'(\MM^n)$.
		Here, $\Phi(f)$ denotes a `smeared field' as will be described shortly. 
		Therefore sufficiently regular states $\omega$ define a hierarchy of distributional $n$-point functions, without ever using a Hilbert space. Here, as elsewhere in these notes, $\CoinX{\MM}$ denotes the space of smooth, complex-valued functions on $\MM$ with compact support (i.e., vanishing outside a bounded set).
	\end{itemize}
	
	\subsection{Examples}\label{sec:examples}
	
	We continue by giving some specific examples of field theories in the framework of AQFT, drawing out various lessons as we go. 
	
	\paragraph{Free real scalar field} Consider the field equation
	\begin{equation}
	(\Box+ m^2)\phi = 0
	\end{equation}
	and let $E^+$ and $E^-$ be the corresponding retarded and advanced Green operators, i.e., 
	$\phi = E^\pm f$ solves the inhomogeneous equation
	\begin{equation}
	(\Box+ m^2)\phi = f
	\end{equation}
	and the support of $\phi$ is contained in the causal future ($+$, retarded) or causal past ($-$, advanced) of the support of $f$. Also define
	the advanced-minus-retarded solution operator $E=E^--E^+$ and write
	\begin{equation}
	E(f,g)= \int_\MM f(x) (Eg)(x)\,\dvol_\MM(x)\,,
	\end{equation}
	where $\dvol_\MM(x)\equiv d^4x$.
	The integral kernel is familiar from standard QFT:
	\begin{equation}
	E(x,y) = -\int \frac{d^3\kb}{(2\pi)^3} \frac{\sin k\cdot (x-y)}{\omega},
	\end{equation}
	where $k^\bullet=(\omega,\kb)$, $\omega=\sqrt{\|\kb\|^2+m^2}$, and we use standard inertial coordinates on Minkowski spacetime.
	\begin{exercise}\label{ex:preETCR}
		Show that $E(x,y)|_{y^0=x^0}=0$, and $\partial_{y^0}E(x,y)|_{y^0=x^0}=\delta(\xb-\yb)$.
	\end{exercise}
	
	To formulate the quantized field in AQFT, we give generators and relations for the desired algebra $\Ac(\MM)$, thus specifying it uniquely up to isomorphism. For completeness, a detailed description of the construction is given in Appendix~\ref{appx:presentation}. The generators are written $\Phi(f)$, labelled by test functions $f\in \CoinX{\MM}$ -- for the moment this just a convenient way of writing them; there is no underlying field $\Phi(x)$ to be understood here. The relations imposed, for all test functions $f,g\in\CoinX{\MM}$, are:
	\begin{enumerate}[label=\bf SF\arabic{enumi},leftmargin=*,widest=4] 
		\item\label{it:SFfieldlin} {\bf Linearity} $f\mapsto \Phi(f)$ is complex linear 
		\item {\bf Hermiticity} \label{it:SFfieldcong} $\Phi(f)^*=\Phi(\overline{f})$
		\item\label{it:SFfieldeq} {\bf Field equation}  $\Phi((\Box+ m^2)f)= 0$
		\item\label{it:SFfieldcom} {\bf Covariant commutation relations} $[\Phi(f),\Phi(g)] = iE (f,g)\II$.
	\end{enumerate}
As a consequence of the identities in Exercise~\ref{ex:preETCR},~\ref{it:SFfieldcom} is a covariant form of the equal time commutation relations; a nice way of seeing this directly is to follow the on-shell Peierls' method~\cite{Peierls:1952} to find a covariant Poisson bracket for the classical theory. The axioms we have just stated may be regarded as the result of applying Dirac's quantisation rule to this bracket, with $\Phi(f)$ regarded as the quantization of the observable
	\begin{equation}
	F_f[\phi]=\int f\phi\,\dvol_\MM
	\end{equation} 
	on a suitable solution space to the Klein--Gordon equation.  
	
	One should check, of course, that the algebra $\Ac(\MM)$ is nontrivial. It is not too hard [though we will not do this here] to show that the underlying vector space of $\Ac(\MM)$ is isomorphic to the symmetric tensor vector space
	\[
	\CC \oplus \bigoplus_{n=1}^\infty Q^{\odot n}, \qquad Q = \CoinX{\MM}/P \CoinX{\MM},
	\]
	where $P=\Box+m^2$ and $\odot$ denotes a symmetrised tensor product.  
	Therefore the nontriviality of $\Ac(\MM)$ reduces to the question of whether the quotient space $Q$ is nontrivial. 
	The latter follows from the properties of the Green operators, which can be summarised in an exact sequence~\cite{BarGinouxPfaffle}
	\begin{equation}\label{eq:exactsequence}
	0 \longrightarrow \CoinX{\MM}\stackrel{P}{\longrightarrow}  \CoinX{\MM} \stackrel{E}{\longrightarrow}
	C^\infty_{\textit{sc}} (\MM) \stackrel{P}{\longrightarrow}C^\infty_{\textit{sc}} (\MM) \longrightarrow 0,
	\end{equation} 
	where the subscript $sc$ denotes a space of functions with `spatially compact' support, which means that they vanish in the causal complement of a compact set. Together with the isomorphism theorems for vector spaces, this gives
	\begin{equation}
	Q =\CoinX{\MM}/\im P = \CoinX{\MM}/\ker E \cong \im E = \ker P =\{\phi\in C^\infty_{\textit{sc}} (\MM): P\phi=0\}=:\Sol(\MM).
	\end{equation}
	Thus $Q$ is isomorphic to the space of smooth Klein--Gordon solutions with spatially compact support, and is therefore nontrivial. Consequently, $\Ac(\MM)$ is a nontrivial unital $*$-algebra.
	
	Now define, for each causally convex open bounded $\Ocal\subset\MM$, the algebra $\Ac(\Ocal)$ to be the subalgebra of $\Ac(\MM)$ generated by elements $\Phi(f)$ for $f\in\CoinX{\Ocal}$, along with the unit $\II$. 
	Then it is clear that, if $\Ocal_1\subset \Ocal_2$, then $\Ac(\Ocal_1)\subset \Ac(\Ocal_2)$. Then properties~\ref{it:loc},~\ref{it:iso} are automatic.
	Next, because $E(f,g)=0$ when $f$ and $g$ have causally disjoint support (as $Eg$ is supported in the union of the causal future and past of $\supp g$) it is clear that all generators of $\Ac(\Ocal_1)$ commute with all generators of $\Ac(\Ocal_2)$; hence~\ref{it:Eins} holds. Next, let $\rho\in\Pc_0$. Then the Poincar\'e covariance of $\Box+m^2$ and $E$ can be used to show that
	the map of generators $\alpha(\rho)\Phi(f) = \Phi(\rho_* f)$, $(\rho_*f)(x)=f(\rho^{-1}(x))$, is compatible with the relations and extends to a well-defined unit-preserving $*$-isomorphism
	\begin{equation}
	\alpha(\rho):\Ac(\MM)\to\Ac(\MM)\,.
	\end{equation}
	Clearly $\alpha(\rho)$ maps each $\Ac(\Ocal)$ to $\Ac(\rho\Ocal)$; as we also have $\alpha(\sigma)\circ\alpha(\rho)=\alpha(\sigma\circ\rho)$,  condition~\ref{it:Poinc} holds. 
	
	Finally, let $\Ocal_1\subset \Ocal_2$ such that $\Ocal_1$ contains a Cauchy surface of $\Ocal_2$. Then any solution $\phi=Ef_2$ for $f_2\in\CoinX{\Ocal_2}$ can be written as
	$\phi=Ef_1$ for some $f_1\in\CoinX{\Ocal_1}$. An explicit formula is 
	\[
	f_1 = P \chi \phi
	\]
	where $\chi\in C^\infty(\Ocal_2)$ vanishes to the future of one Cauchy surface of $\Ocal_2$ contained in $\Ocal_1$ and equals unity to the past of another (since $\Ocal_1$ contains a Cauchy surface of $\Ocal_2$, it actually contains many of them).
	Then $\Phi(f_2)=\Phi(f_1)$, which implies that $\Ac(\Ocal_2)=\Ac(\Ocal_1)$ as required by~\ref{it:dyn}. 
	
	In fact this whole construction can be adapted to any globally hyperbolic spacetime, which is the setting in which~\eqref{eq:exactsequence} was proved~\cite{BarGinouxPfaffle}. Here, we recall that a globally hyperbolic spacetime is a time-oriented Lorentzian spacetime containing a Cauchy surface. 
	
	\paragraph{Real scalar field with external source (van Hove encore)} Let $\rho\in\DD'(\MM)$ be a distribution that is real in the sense $\overline{\rho(f)}=\rho(\overline{f})$, and let $\phi_\rho\in\DD'(\MM)$ be any weak solution to $(\Box+m^2)\phi_\rho= -\rho$. 
	
	The AQFT formulation of the real scalar field with external source $\rho$ is given in terms of \emph{the same algebras} $\Ac(\Ocal)$ as in the homogeneous case. The only difference is that we define smeared fields 
	\[
	\Phi_\rho(f) =\Phi(f) +\phi_\rho(f)\II ,
	\]
	where $\Phi(f)$ are the generators used to construct $\Ac(\MM)$ by~\ref{it:SFfieldlin}--\ref{it:SFfieldcom},
	and observe that they obey the algebraic relations 
		\begin{enumerate}[label=\bf vH\arabic{enumi},leftmargin=*,widest=4] 
		\item $f\mapsto \Phi_\rho(f)$ is complex linear 
		\item $\Phi_\rho(f)^*=\Phi_\rho(\overline{f})$ 
		\item $\Phi_\rho((\Box+ m^2)f)+\rho(f)\II=0$ 
		\item $[\Phi_\rho(f),\Phi_\rho(g)] = iE (f,g)\II$ 
	\end{enumerate}
	which are the relations that would be obtained from Dirac quantisation of the classical theory. 
	Two remarks are in order:
	\begin{itemize}
		\item We see that the algebra is not so specific to the theory in hand -- this is a general feature of AQFT: what is more interesting is how the local algebras fit together and how the elements can be labelled by fields.
		\item All the difficulties encountered in section~\ref{sec:vH} appear to have vanished. Actually, they have been moved into the question of the unitary (in)equivalence of
		representations of the algebra $\Ac(\MM)$. The separation between the algebra and its representations makes for a clean conceptual viewpoint. 
	\end{itemize}

	\paragraph{Weyl algebra} We return to the real scalar field and note two formal identities: first, that
	\begin{equation}\label{r1}
	(e^{i\Phi(f)})^* = e^{-i\Phi(f)} = e^{i\Phi(-f)}
	\end{equation}
	if $f$ is real-valued; second, from the 
	Baker--Campbell--Hausdorff formula, that
	\begin{equation}\label{r2}
	e^{i\Phi(f)}e^{i\Phi(g)} = e^{i\Phi(f)+i\Phi(g) - [\Phi(f),\Phi(g)]/2} = 
	e^{-iE(f,g)/2}e^{i\Phi(f+g)}.
	\end{equation} 
 As there is no topology on $\Ac(\MM)$ we cannot address any convergence questions, so these are to be understood as identities between formal power series in $f$ and $g$.
	We may also note that $\Phi(f)$ and $E(f,g)$ depend only on the equivalence classes of $f$ and $g$ in $\CoinX{\MM}/P\CoinX{\MM}\cong\Sol(\MM)$. Moreover, the exponent in~\eqref{r2} is related to the symplectic product on the space of real-valued Klein--Gordon solutions, $\Sol_\RR(\MM)\cong\CoinX{\MM;\RR}/P\CoinX{\MM;\RR}$ by
	\begin{equation}\label{eq:symp}
	\sigma([f],[g])= E(f,g).
	\end{equation}
	
	These considerations motivate the definition of a unital $*$-algebra, generated by symbols $\Ws([f])$ labelled by $[f]\in \CoinX{\MM;\RR}/P\CoinX{\MM;\RR}$ and satisfying relations mimicking \eqref{r1} and \eqref{r2}. In fact, any real symplectic space $(S,\sigma)$ determines a unital $*$-algebra, generated by symbols $\Ws(\phi)$, $\phi\in S$ and satisfying the relations:		
	\begin{enumerate}[label=\bf W\arabic{enumi},leftmargin=*,widest=2] 
		\item $\Ws(\phi)^*=\Ws(-\phi)$
		\item $\Ws(\phi)\Ws(\phi') = e^{-i\sigma(\phi,\phi')/2}\Ws(\phi+\phi')$.
	\end{enumerate}
It is a remarkable fact that this algebra can be given a $C^*$-norm and completed
to form a $C^*$-algebra in exactly one way (up to isomorphism)~\cite{BratRob:vol2}. This is the \emph{Weyl algebra} $\Wc(S,\sigma)$. In our case of interest $S=\CoinX{\MM;\RR}/P\CoinX{\MM;\RR}$, the symplectic form is given by~\eqref{eq:symp}, and we will denote the corresponding Weyl algebra by $\Wc(\MM)$ for short. As before, we can form local algebras by defining $\Wc(\Ocal)$ as the $C^*$-subalgebra generated by $\Ws([f])$'s with $\supp f\subset \Ocal$ and $\Ocal$ being any causally convex open bounded subset of $\MM$. 
\begin{exercise} 
		In a general Weyl algebra $\Wc(S,\sigma)$, prove that $\Ws(0)=\II$, the algebra unit.
\end{exercise}  
	
	It is worth pausing to examine the explicit construction of the Weyl algebra $\Wc(S,\sigma)$.  Consider
	the (inseparable) Hilbert space $\HH=\ell^2(S)$
	of square-summable sequences $a=(a_\phi)$ indexed by $\phi\in S$, and define
	\begin{equation}\label{eq:Weyl_conc}
	(\Ws(\phi')a)_\phi = e^{-i\sigma(\phi',\phi)/2} a_{\phi+\phi'}.
	\end{equation}
	Obviously the $\Ws(\phi)$'s are all unitary. 
	The Weyl algebra $\Wc(S,\sigma)$ is the closure of the $*$-algebra generated by the $\Ws(\phi)$'s in the norm topology on $\BB(\HH)$, equipped with the operator norm.

	\begin{exercise}
		Check that~\eqref{eq:Weyl_conc} implies $\Ws(\phi)=\Ws(-\phi)^*$ and $\Ws(\phi)\Ws(\phi') = e^{-i\sigma(\phi,\phi')/2}\Ws(\phi+\phi')$.
	\end{exercise} 
	\begin{exercise}
		Let $\Omega\in\ell^2(S)$ be the sequence $(\delta_{\phi,0})$, where
		\[
		\delta_{\phi,0}= \begin{cases}
		1 & \phi=0\\ 0 & \phi\neq 0.
		\end{cases} 
		\]
		If $\phi\neq\phi'$, show that
		$\Ws(\phi)\Omega$ and $\Ws(\phi')\Omega$ are orthogonal and deduce that 
		\[
		\|\Ws(\phi)-\Ws(\phi')\| =2.
		\]
		This shows that there are no nonconstant continuous curves in the Weyl algebra.
	\end{exercise}
	A corollary of the exercise is that one cannot differentiate $\lambda\mapsto \Ws([\lambda f])$ \emph{within the Weyl algebra} in the hope of recovering a smeared field operator, nor can we exponentiate $i\Phi(f)$ \emph{within the algebra $\Ac(\MM)$} to obtain a Weyl operator. The heuristic relationship 
	$\Ws([f]) = e^{i\Phi(f)}$ does not hold literally in either of these algebras. 
\begin{exercise}
		Show that the GNS representation of the (abstract) Weyl algebra over symplectic space $(S,\sigma)$ induced by the \emph{tracial state} $\omega_{\text{tr}}(\Ws(\phi)) = \delta_{\phi,0}$ 
		coincides with the concrete construction of a representation on $\Hcal=\ell^2(S)$
		given earlier, with the GNS vector   $\Omega_{\text{tr}}=(\delta_{\phi,0})$,  i.e., $\omega_{\text{tr}}(A)=\ip{\Omega_{\text{tr}}}{A\Omega_{\text{tr}}}$. 
	\end{exercise}
	
	\paragraph{Complex scalar field} The algebra of the free complex scalar field $\Cc(\MM)$
	may be generated by symbols $\Phi(f)$ ($f\in\CoinX{\MM}$) subject to the relations
	\begin{enumerate}[label=\bf CF\arabic{enumi},leftmargin=*,widest=4] 
		\item\label{it:CFfieldcong} {\bf Linearity} $f\mapsto \Phi(f)$ is complex linear 
		\item\label{it:CFfieldeq} {\bf Field equation} $\Phi((\Box+ m^2)f)= 0$ 
		\item\label{it:CFfieldcom} {\bf Covariant commutation relations} $[\Phi(f),\Phi(g)]=0$ and $[\Phi(f)^*,\Phi(g)] = iE (\bar{f},g)\II$.
	\end{enumerate}
	It is also usual to write $\Phi^\star(f):=\Phi(\bar{f})^*$, so that $f\mapsto\Phi^\star(f)$ is  also complex-linear. This algebra admits a family of automorphisms $\eta_\alpha$ given on the generators by 
	\begin{equation}\label{eq:gaugeCSF}
	\eta_\alpha(\Phi(f))= e^{-i\alpha} \Phi(f),
	\end{equation}
	corresponding to a global $U(1)$ gauge symmetry of the classical complex field.  
	\begin{exercise}
	Check that~\eqref{eq:gaugeCSF} extends to a well-defined automorphism for each $\alpha\in\RR$. Show also that there is an isomorphism between $\Cc(\MM)$ and
	the algebraic tensor product $\Ac(\MM)\otimes\Ac(\MM)$ of two copies of the real scalar field algebra (with the same mass $m$) given on generators by  
	\[
	\Phi(f)\mapsto \frac{1}{\sqrt{2}}\left(\Phi_r(f)\otimes \II+i \II\otimes \Phi_r(f) \right), \qquad f\in\CoinX{\MM},
	\]
	where we temporarily use $\Phi_r(f)$ to denote the generators of $\Ac(\MM)$. In this sense, the complex field is simply two independent real scalar fields.	
	\end{exercise}
	We may identify local algebras $\Cc(\Ocal)$ in the same way as before, and within each of these, identify the subalgebra $\Cc_{\text{obs}}(\Ocal)$ consisting of all elements
	of $\Cc(\Ocal)$ that are invariant under the global $U(1)$ gauge action. These are the local observable algebras. 
	
	The reader may notice that the real scalar field admits a global $\ZZ_2$ gauge symmetry generated by $\Phi(f)\mapsto -\Phi(f)$, and that the theory of two independent real scalar fields with the same mass admits an $O(2)$ gauge symmetry, of which the $U(1)$ symmetry corresponds to the $SO(2)$ subgroup. Why, then, do we not restrict the observable algebra of the real scalar field, or further restrict the observable algebras of the complex field? The answer is simply that these are physical choices. The $U(1)$ gauge invariance of the complex scalar field is related to charge conservation, while the additional $\ZZ_2$ symmetry
	in the isomorphism $O(2)\cong U(1)\rtimes \ZZ_2$ 
	corresponds to charge reversal. If the theory is used to model a charge that is conserved in nature, but for which states of opposite charge can be physically distinguished, then the correct approach is to proceed as we have done. 
	
	\paragraph{Dirac field} One can proceed in a similar way to define an algebra $\Fc(\MM)$ with generators $\Psi(u)$ and $\Psi^+(v)$ labelled by cospinor and spinor test 
	functions respectively, and with relations abstracted from standard QFT (this is left as an exercise). However, the resulting local algebras $\Fc(\Ocal)$ do not obey Einstein causality -- if $u$ and $v$ are spacelike separated then $\Psi(u)$ and $\Psi^+(v)$ anticommute. Of course we do not expect to be able to measure a smeared spinor field by itself, and what one can do instead is consider algebras generated by second degree products of the spinor and cospinor fields, labelled by (co)spinor test functions in supported in $\Ocal$. The resulting algebras $\Ac(\Ocal)$ then obey the axioms \ref{it:loc}--\ref{it:dyn}. 
	
	Meanwhile, the algebras $\Fc(\Ocal)$ obey~\ref{it:loc},~\ref{it:iso},~\ref{it:Poinc},~\ref{it:dyn} and a graded version of~\ref{it:Eins}. We describe them as constituting \emph{local field algebras}, to emphasise the fact that they contain elements carrying the
	interpretation of smeared unobservable fields. We will come back to the discussion of $\Fc(\Ocal)$ and their relation to $\Ac(\Ocal)$ in section~\ref{sec:sectors} on superselection sectors.  
	

%

\subsection{Quasifree states for the free scalar field}\label{sec:quasifree} 
 
We have seen how local algebras for the free scalar field may be constructed.  
As emphasised in Sec.~\ref{sec:AQM}, however, this is only half of the data needed for
a physical theory: we also need some states, and (for many purposes) the corresponding 
GNS representations. These are nontrivial problems in general: one needs to fix the value of $\omega(A)$ and verify the positivity
condition that $\omega(A^*A)\ge 0$ for every element $A\in\Ac(\MM)$;
furthermore, while the GNS representation is fairly explicit, it evidently involves a lot of work to do it by hand. 

Fortunately, in the case of free fields, there is a special family of \emph{quasifree states}\index{state!quasifree} where quite explicit constructions can be given. In particular, these states are determined by their two-point functions and all the conditions required of a state can be expressed in those terms. Moreover, the eventual GNS Hilbert space is a Fock space, and the smeared field operators in the representation may be given by explicit formulae. It is important to note that these include representations that are unitarily inequivalent to the representation built on the standard vacuum state. Once again, many of the arguments we will use carry over directly to curved spacetimes.

Let $W$ be any bilinear form on $\CoinX{\MM}$ obeying
\begin{equation}\label{eq:WE}
W(f,g) - W(g,f) = i E(f,g), \qquad \forall f,g\in\CoinX{\MM}
\end{equation}
and which induces a positive semidefinite sesquilinear form on $\Sol(\MM)$ by the formula 
\[
w(Ef,Eg) = W(\overline{f},g), \qquad \forall f,g\in\CoinX{\MM}.
\]  
In particular, $W$ must be a weak bisolution to the Klein--Gordon equation so that $w$ is well-defined. 
\begin{exercise}
Check that the positivity condition $W(\bar{f},f)\ge 0$ implies that 
$\overline{W(f,g)}=W(\bar{g},\bar{f})$. Also derive the Cauchy--Schwarz inequality
\begin{equation}\label{eq:CSw}
|\Im w(\phi,\phi')|^2 \le w(\phi,\phi)w(\phi',\phi'), \qquad \phi,\phi'\in\Sol(\MM).
\end{equation}
\end{exercise}

Under the above conditions, it may be proved (cf.~Prop~3.1 in~\cite{KayWald:1991}) that there is a complex Hilbert space $\HH$ and a real-linear map $K:\Sol_\RR(\MM)\to\HH$
so that $K\Sol_\RR(\MM)+iK\Sol_\RR(\MM)$ is dense in $\HH$ and
\begin{equation}\label{eq:1PS}
\ip{KEf}{KEg}_\HH = W(f,g),\qquad f,g\in\CoinX{\MM;\RR}.
\end{equation}
(These structures are unique up to unitary equivalence.)

The full construction is essentially explicit, but slightly involved. However, there is a particularly simple and interesting case, arising when the Cauchy--Schwarz inequality~\eqref{eq:CSw} is saturated, i.e.,
\begin{equation}\label{eq:saturation}
\sup_{\phi'\neq 0}\frac{|\Im w(\phi,\phi')|^2}{w(\phi,\phi)w(\phi',\phi')} = 1 \qquad\text{for all $0\neq \phi\in\Sol_{\RR}(\MM)$},
\end{equation}
where the supremum is taken over nonzero elements of $\Sol_\RR(\MM)$. This occurs if and only if the quasifree state constructed below is pure.
Under these circumstances, 
$\HH$ is first defined as a \emph{real} Hilbert space by completing $\Sol_\RR(\MM)$ in the norm $\|\phi\|_w=w(\phi,\phi)^{1/2}$ with inner product
$\ip{\phi_1}{\phi_2}_w=\Re w(\phi_1,\phi_2)$. 
Due to~\eqref{eq:saturation}, $\HH$ carries an isometry $J$ defined so that
\[
\Im w(\phi_1,\phi_2) = \ip{\phi_1}{J\phi_2}_w, \qquad \phi_i\in \HH,
\]
and which fulfils the conditions $J^2=-\II$, $J^\dagger =-J$. Hence $J$ is a complex structure on $\HH$, and we can make $\HH$ into a complex Hilbert space in which multiplication by $i$ is
(annoyingly) implemented by $-J$ and the inner product is
\[
\ip{\phi_1}{\phi_2}_\HH = \ip{\phi_1}{\phi_2}_w + i \ip{\phi_1}{J\phi_2}_w.
\] 
(See Appendix~\ref{appx:basic_fa} for more details.)
The map $K$ is just the natural inclusion of $\Sol_\RR(\MM)$ in $\HH$ and it is clear that $K\Sol_\RR(\MM)$ is dense in $\HH$ (hence $K\Sol_\RR(\MM)+iK\Sol_\RR(\MM)$ is also dense).
Verification of~\eqref{eq:1PS} is left as an exercise.

Returning to the general case, it may be shown that there is a state on $\Ac(\MM)$ given, as a formal series in $f$, by 
\begin{equation}\label{eq:QF}
\omega (e^{i\Phi(f)}) = e^{-W(f,f)/2},\qquad f\in\CoinX{\MM;\RR}.
\end{equation}
Expanding each side of~\eqref{eq:QF} in powers of $f$, and equating terms at each order, all expectation values of the form $\omega(\Phi(f)^n)$ are fixed. Arbitrary expectation values may then be formed using multilinear polarisation identities (see e.g.,~\cite{Thomas:2014} and references therein) and linearity.
It may be shown that all odd $n$-point functions vanish, while
\[
\omega(\Phi(f_1)\cdots\Phi(f_{2n}))= \sum_{G\in\mathcal{G}_{2n}} \prod_{e\in G}
W(f_{s(e)},f_{t(e)}),
\]
where $\mathcal{G}_{2n}$ is the set of directed graphs with vertices labelled $1,\ldots,2n$, such that each vertex is met by exactly one edge and the source and target of each edge obey $s(e)<t(e)$. An example for $n=4$ is given in Figure~\ref{fig:qf}. Another characterisation of the $n$-point functions is that all the \emph{truncated $n$-point functions} vanish for $n\neq 0,2$.
This type of state is described as \emph{quasifree}.

\begin{figure}[t]
	\begin{center}
		\begin{tikzpicture}
		\foreach \x in {1,...,8} {\node[draw,circle] (\x) at (\x,0) {$\x$};}
		\draw[->,>=stealth] (1) to[bend right] (4);
		\draw[->,>=stealth] (2) to[bend left] (7);
		\draw[->,>=stealth] (3) to[bend left] (5);
		\draw[->,>=stealth] (6) to[bend right] (8);
		\end{tikzpicture}
	\end{center}
	\caption{An example graph in $\mathcal{G}_8$.}\label{fig:qf}
\end{figure}
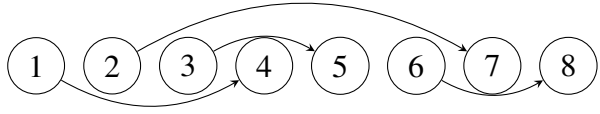 

\begin{exercise}
	Using~\eqref{eq:QF}, show that $\omega(\Phi(f)^{2n+1})=0$ for all $n\in\NN_0$, while 
	\[
	\omega(\Phi(f)^{2n}) = (2n-1)!!\, W(f,f)^n, \qquad\forall n\in\NN_0.
	\]
	Deduce that the sequence $\mu_n = \omega(\Phi(f)^n)$ satisfies the growth conditions in the Hamburger moment theorem. Therefore these are the moments of at most one probability measure -- which is of course a Gaussian probability distribution.
\end{exercise}

Although we have called $\omega$ a state, it is not yet clear that it has the required positivity property. This is most easily justified by noting that there is an explicit Hilbert space representation of $\Ac(\MM)$ containing a unit vector whose expectation values match those of $\omega$. To be specific, the Hilbert space is
the bosonic Fock space 
\begin{equation}\label{eq:qfF}
\FF(\HH) = \bigoplus_{n=0}^\infty \HH^{\odot n}
\end{equation}
over $\HH$, on which the field is represented according to the formula
\begin{equation}\label{eq:qfpi}
\pi_\omega(\Phi(f)) = a(KE f) + a^*(KEf)  ,\qquad f\in\CoinX{\MM;\RR},
\end{equation}
and
\[
\pi_\omega(\Phi(f)):=\pi_\omega(\Phi(\Re f)) + i\pi_\omega(\Phi(\Im f))
\]
for general complex-valued $f\in\CoinX{\MM}$.
Here $a(\phi)$ and $a^*(\psi)$ are the annihilation and creation operators on the Fock space which obey the CCRs
\begin{equation}\label{eq:CCRs}
[a(\phi),a^*(\psi)] = \ip{\phi}{\psi}_\HH\II, \qquad \phi,\psi\in\HH,
\end{equation}
on a suitable dense domain in $\FF(\HH)$ 
(note that $a$ is antilinear in its argument, and that $a(\phi)=a^*(\phi)^*$). Readers unfamiliar with the basis-independent notation used here should refer to Appendix~\ref{sec:Fock}. 

We will not give a detailed proof of the claims~(\ref{eq:QF}--\ref{eq:CCRs}), which would require consideration of operator domains.
At the level of formal calculation, however, it is easily checked that these operators do indeed lead to a representation of $\Ac(\MM)$.
\begin{exercise} Verify formally that $f\mapsto \pi_\omega(\Phi(f))$ is $\CC$-linear, obeys the field equation in the sense $\pi_\omega(\Phi(Pf))=0$, and is hermitian in the sense that $\pi_\omega(\Phi(f))^*=\pi_\omega(\Phi(\overline{f}))$.
\end{exercise}
For the CCRs, we compute
\[
[\pi_\omega(\Phi(f)),\pi_\omega(\Phi(g))] = \left(\ip{KEf}{KEg}_\HH - \ip{KEg}{KEf}_\HH\right)\II = 
\left(W(f,g)-W(g,f)\right)\II = iE(f,g)\II,
\]
using~\eqref{eq:CCRs},~\eqref{eq:1PS} and~\eqref{eq:WE}. Finally, 
it may be verified that the Fock vacuum vector $\Omega_\omega$ satisfies
\[
\ip{\Omega_\omega}{\pi_\omega(\Phi(f_1))\cdots \pi_\omega(\Phi(f_n))\Omega_\omega} = \omega(\Phi(f_1)\cdots \Phi(f_n)).
\]
Consequently, $\omega$ is seen to be a vector state on $\Ac(\MM)$ 
and the quadruple $(\FF(\HH), \DD_\omega,\pi_\omega,\Omega_\omega)$ is its GNS representation, where
the dense domain $\DD_\omega$ consists of finite linear combinations of finite products of operators $a^*(KEf)$ acting on $\Omega_\omega$.

The `one-particle space' $\HH$ may be interpreted as follows.  By~\eqref{eq:qfpi}, we see that
\[
\pi_\omega(\Phi(f))\Omega_\omega = a^*(KEf)\Omega_\omega = (0,KEf,0,\ldots)\in\FF(\HH) ,
\]
which is an eigenstate of $N$ with unit eigenvalue. Elements of $\HH$ can be identified with (complex linear combinations of) vectors generated from $\Omega_\omega$ by a single application of the field. Due to assumption~\ref{it:SFfieldeq}, these vectors may be identified with certain complex-valued solutions to the field equation, which may be regarded as wavepackets of `positive frequency modes' relative to a decomposition induced by the choice of quasifree state. 

\paragraph{Examples} We describe two important examples. The Minkowski vacuum state\footnote{A general definition of what a vacuum state should be will be given in Definition~\ref{def:vacuum}.} $\omega_0$ is a quasifree state with two-point function 
\[
W(f,g) = \int\frac{d^3\kb}{(2\pi)^3} \frac{\hat{f}(-k) \hat{g}(k)}{2\omega}, \qquad \text{where}~k^\bullet =(\omega,\kb),
\qquad \hat{g}(k) = \int d^4x\, e^{ik\cdot x}g(x),
\]
and $\omega=\sqrt{|\boldsymbol{k}|^2+m^2}$.\footnote{There is a notational conflict with the symbol used to denote states but it will always be clear from context which is meant.} The corresponding
one-particle space is $\HH=L^2(H^+_{m},d\mu)$, where
$H^+_m$ is the hyperboloid $k\cdot k=m^2$, $k^0>0$ in $\RR^4$,
and the measure of $S\subset H^+_m$ is
\[
\mu(S) = \int \frac{d^3\kb}{(2\pi)^3} \frac{\chi_S(k)}{2\omega}\qquad \chi_S(k)=\begin{cases} 1 & k\in S\\ 0 & \text{otherwise.}\end{cases}
\]
The map $K:\Sol_\RR(\MM)\to\HH$ is 
\[
KEg = \hat{g}|_{H^+_m},
\]
which already has dense range -- as mentioned above, this signals that the vacuum state is pure. Consequently, the vacuum representation $\pi_0$ is given by
\[
\pi_0(\Phi(g))=a(\hat{g}|_{H^+_m})+a^*(\hat{g}|_{H^+_m})\,, \qquad g\in\CoinX{\MM;\RR},
\]
which may be written in more familiar notation using sharp-momentum annihilation and creation operators obeying~\eqref{eq:aCCRs} using
\[
a(\hat{g}|_{H^+_m})=\int\frac{d^3\boldsymbol{k}}{(2\pi)^3}\frac{1}{\sqrt{2\omega}} \overline{\hat{g}(k)} a(\boldsymbol{k}),\qquad a^*(\hat{g}|_{H^+_m})=\int\frac{d^3\boldsymbol{k}}{(2\pi)^3}\frac{1}{\sqrt{2\omega}} \hat{g}(k) a^*(\boldsymbol{k})\,.
\]
Recalling that $\overline{\hat{g}(k)}=\hat{g}(-k)$ for real-valued $g$, we retrieve the field with sharp position as the operator-valued distribution
\[
\pi_0(\Phi(x))=\int \frac{d^3\boldsymbol{k}}{(2\pi)^3}\frac{1}{\sqrt{2\omega}}\left(a(\boldsymbol{k})e^{-ik\cdot x}+a^*(\boldsymbol{k})e^{ik\cdot x}
\right)\,.
\]

Our second example is the thermal state of inverse temperature $\beta$, with two-point function
\[
W_\beta(f,g) = \int\frac{d^3\kb}{(2\pi)^3}\frac{1}{2\omega} \left(\frac{\hat{f}(-k) \hat{g}(k)}{1-e^{-\beta \omega}} +
\frac{\hat{f}(k) \hat{g}(-k)}{e^{\beta \omega}-1}\right).
\]
Here, $\HH_\beta=\HH\oplus\HH$ with $\HH$ as before, and (cf.~\cite{Kay:1985a})
\[
(K_\beta \phi)(k) = \frac{(K\phi)(k)}{\sqrt{1-e^{-\beta\omega}}} \oplus  \frac{\overline{(K\phi)(k)}}{\sqrt{e^{\beta\omega}-1}} 
\]
(remember that $K_\beta$ only has to be real-linear!). The range of $K_\beta E$ is not dense in $\HH_\beta$, but its complex span is, reflecting the fact that the thermal states are mixed.
\begin{exercise}
	Check that the analogue of~\eqref{eq:1PS} holds for $\HH_\beta$, $K_\beta$ and $W_\beta$.
\end{exercise}

A nice feature of the algebraic approach is that, while the representations corresponding to the vacuum and thermal states are unitarily inequivalent, they can be treated `democratically' as states on the algebra $\Ac(\MM)$. There are many other quasifree states; indeed one can start with any state and construct its `liberation', the quasifree state with the same two-point function.

All quasifree representations carry a representation $\tilde{\pi}_\omega$ of the Weyl algebra as well, 
so that
\[
\tilde{\pi}_\omega(\Ws([f])) = e^{i\pi_\omega(\Phi(f))}
\]
and also 
\[
\pi_\omega(\Phi(f)) = \left.\frac{1}{i}\frac{d}{d\lambda} \Ws([\lambda f])\right|_{\lambda=0}.
\]
Remember that these relationships cannot hold literally either in $\Ac(\MM)$ or $\Wc(\MM)$, but here we see that they do hold in (sufficiently regular) representations.

Summarising, the quasifree states provide a class of states for which explicit Hilbert space representations may be given with the familiar Fock space structure, and in which the Weyl operators and smeared field operators are related as just described.

\section{The spectrum condition and Reeh--Schlieder theorem} \label{sec:spectrum}

In this section we begin to draw general conclusions about the properties of QFT in the AQFT framework, proceeding from the basic axioms and additional requirements that will be introduced along the way. We shall emphasise features that distinguish QFT from quantum mechanics. The starting point is a more detailed discussion of the action of Poincar\'e transformations.
\subsection{The spectrum condition}
Assumption~\ref{it:Poinc} required that
the Poincar\'e group should act by automorphisms of $\Ac(\MM)$. An important question concerning Hilbert space representations of the theory is whether or not these automorphisms are \emph{unitarily implemented}, i.e., whether there are unitaries $U(\rho)$ on the representation Hilbert space such that
\[
\pi(\alpha(\rho)A) = U(\rho) \pi(A) U(\rho)^{-1}.
\]
In such cases, we say that the representation is \emph{Poincar\'e covariant}.
As we now show, the GNS representation of a Poincar\'e invariant state $\omega$ is always Poincar\'e covariant. Here, Poincar\'e invariance of $\omega$ means that 
\[
\omega(\alpha(\rho)A) = \omega(A), \qquad\forall A\in\Ac(\MM),~\rho\in\Pc_0
\] 
(written equivalently as $\alpha(\rho)^*\omega=\omega$ for all $\rho\in\Pc_0$, with the star denoting the dual map). Of course, the same question can be asked of any automorphism or automorphism group on a $*$-algebra.
\begin{theorem}
	Let $\alpha$ be an automorphism of a unital $*$-algebra $\Ac$. If a state $\omega$ on $\Ac$ is invariant under $\alpha$, i.e., $\alpha^*\omega=\omega$, then $\alpha$ is unitarily implemented in the GNS representation of $\omega$ by a unitary that leaves the GNS vector invariant. Any group of automorphisms leaving $\omega$ invariant is unitarily represented in $\HH_\omega$. 
\end{theorem}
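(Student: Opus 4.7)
The plan is to construct the unitary directly, mimicking the uniqueness argument in the proof of the GNS theorem. I would define a candidate operator $U$ on the dense domain $\DD_\omega = \pi_\omega(\Ac)\Omega_\omega$ by the formula
\[
U\pi_\omega(A)\Omega_\omega := \pi_\omega(\alpha(A))\Omega_\omega, \qquad A\in\Ac,
\]
and then show that $U$ is well-defined, isometric, has dense range, and intertwines the representation with $\alpha$.

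For well-definedness, I would observe that $\pi_\omega(A)\Omega_\omega = 0$ if and only if $\omega(A^*A)=0$, and since $\alpha$ is a $*$-homomorphism with $\alpha^*\omega=\omega$, one has
\[
\omega(\alpha(A)^*\alpha(A)) = \omega(\alpha(A^*A)) = \omega(A^*A) = 0,
\]
so $\pi_\omega(\alpha(A))\Omega_\omega = 0$ as well. The same calculation, applied to $\|\pi_\omega(A)\Omega_\omega\|^2$ and $\|\pi_\omega(\alpha(A))\Omega_\omega\|^2$, establishes that $U$ is isometric on $\DD_\omega$. Density of the range follows because $\alpha$ is surjective, so $U\DD_\omega = \pi_\omega(\alpha(\Ac))\Omega_\omega = \pi_\omega(\Ac)\Omega_\omega = \DD_\omega$. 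Hence $U$ extends uniquely to a unitary on $\HH_\omega$. The invariance of the GNS vector is automatic: $U\Omega_\omega = U\pi_\omega(\II)\Omega_\omega = \pi_\omega(\alpha(\II))\Omega_\omega = \Omega_\omega$ since $\alpha$ preserves units.

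For the intertwining property, I would compute on the dense domain using the homomorphism property: for $A,B\in\Ac$,
\[
U\pi_\omega(A)\pi_\omega(B)\Omega_\omega = \pi_\omega(\alpha(AB))\Omega_\omega = \pi_\omega(\alpha(A))\pi_\omega(\alpha(B))\Omega_\omega = \pi_\omega(\alpha(A))U\pi_\omega(B)\Omega_\omega,
\]
so $U\pi_\omega(A) = \pi_\omega(\alpha(A))U$ on $\DD_\omega$, and by continuity on all of $\HH_\omega$.

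For the group statement, applying the above to each element $g$ of a group $G$ of $\omega$-invariant automorphisms produces a unitary $U(g)$. The representation property $U(g)U(h) = U(gh)$ is checked by comparing both sides on $\DD_\omega$: both send $\pi_\omega(A)\Omega_\omega$ to $\pi_\omega(\alpha(g)\alpha(h)(A))\Omega_\omega = \pi_\omega(\alpha(gh)(A))\Omega_\omega$, and the identity element is sent to $\II_{\HH_\omega}$ by construction. I do not expect any serious obstacle; the only subtle point is the well-definedness step, which depends crucially on the invariance hypothesis $\alpha^*\omega=\omega$ (without it, there is no reason the kernel of the map $A\mapsto\pi_\omega(A)\Omega_\omega$ should be preserved under $\alpha$). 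Continuity properties of $g\mapsto U(g)$, if the group carries a topology, would require further regularity assumptions on the action and on $\omega$, but are not asserted in the theorem.
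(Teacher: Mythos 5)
Your proposal is correct and follows essentially the same route as the paper: both define $U$ on the dense GNS domain by $\pi_\omega(A)\Omega_\omega\mapsto\pi_\omega(\alpha(A))\Omega_\omega$, use $\alpha^*\omega=\omega$ to show the GNS null space is preserved (well-definedness and isometry), invoke invertibility of $\alpha$ for surjectivity, and verify the intertwining and group laws on the dense domain. The only cosmetic caveat is that for a general unital $*$-algebra the operators $\pi_\omega(A)$ may be unbounded with domain $\DD_\omega$, so the intertwining relation should simply be asserted on $\DD_\omega$ (as the paper does) rather than extended ``by continuity'' to all of $\HH_\omega$.
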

\begin{proof}
	Observing that $\omega(\alpha(A)^*\alpha(A))=(\alpha^*\omega)(A^*A)=\omega(A^*A)$, we see that $\alpha$ maps the GNS ideal $\Ic_\omega$ to itself. Therefore the formula
	\[
	U[A] = [\alpha(A)]
	\]
	gives a well-defined map $U$ from the GNS domain $\DD_\omega$ to itself, which fixes the GNS vector $\Omega_\omega=[\II]$ and is obviously invertible (consider $\alpha^{-1}$). Now
	\[
	\ip{U[A]}{U[B]} = \omega(\alpha(A)^*\alpha(A)) = \omega(A^*A) = \ip{[A]}{[B]},
	\]	
	so $U$ is a densely defined invertible isometry, and therefore extends uniquely to a unitary on $\HH_\omega$. The calculation
	\[
	\pi_\omega(\alpha(A))[B] = [\alpha(A)B] = [\alpha(A\alpha^{-1}(B))] = U[A\alpha^{-1}(B))] = U\pi_\omega(A)[\alpha^{-1}B] = U\pi_\omega(A)U^{-1}[B]
	\]
	shows that $U$ implements $\alpha$. 
	
	If $\beta$ is another automorphism leaving $\omega$ invariant, let $V$ be its unitary implementation as above. Then 
	\[
	UV[A]= [\alpha(\beta(A))] = [(\alpha\circ\beta)(A)]
	\]
	shows that $UV$ implements $\alpha\circ\beta$. 
\end{proof}
Among other things, this result may be applied to states that are   translationally invariant. Thermal equilibrium states, for example, are spatially homogeoneous but not Poincar\'e invariant because they have a definite rest frame. Although typical states are not translationally invariant, many interesting states are in a suitable sense local deviations from such invariant states. Even so, not all invariant states are physically acceptable. One way of doing narrowing the field is to require the spectrum condition:
\begin{definition}
	Let $\omega$ be a translationally invariant state so that the unitary implementation of the translation group $U(x)$ is \emph{strongly continuous}, i.e., the map $x\mapsto U(x)\psi$ is continuous from $\RR^4$ to $\HH_\omega$ for each fixed $\psi\in\HH_\omega$, where $x=(x^0,\ldots,x^3)\in\RR^4$.
	By Stone's theorem, there are four commuting self-adjoint operators $P^\mu$ such that 
	(lowering the index using the metric)
	\[
	U(x) = e^{iP_\mu x^\mu}.
	\]
	To any (Borel) subset $\Delta\subset \RR^4$ we may assign a projection operator $E(\Delta)$
	corresponding to the binary test of whether the result of $4$-momentum measurement $P^\mu$ would be found to lie in $\Delta$. The assignment $\Delta\mapsto E(\Delta)$ is a \emph{projection-valued measure}, and in fact one can write 
	\[
	U(x) = \int e^{ip_\mu x^\mu}dE(p^\bullet).
	\]
	The state $\omega$ is said to satisfy the \emph{spectrum condition} if the support of $E$ lies in the \emph{closed forward cone} $\overline{V^+}=\{p^\bullet\in\RR^4: p^\mu p_\mu\ge 0,~p^0\ge 0\}$, i.e.,
	\[
	\supp E \subset \overline{V^+}.
	\]
	This is sometimes expressed by saying that the joint spectrum of the momentum operators $P^\mu$ lies in $\overline{V^+}$. 
\end{definition}	 
An important consequence of the spectrum condition is that the definition of $U(x)$ can be extended to complex vectors $x\in \RR^4+iV^+$, with \emph{analytic dependence on $x$}: to be precise, $U(x)$ is strongly continuous on $\RR^4+i\overline{V^+}$ and holomorphic on $\RR^4+i\,V^+$, where $V^+=\text{int} \overline{V^+}$.  

One can check that the usual vacuum and thermal states of the free field obey the spectrum condition. More generally, we will make the following definition:
\begin{definition}\label{def:vacuum}
	A \emph{vacuum state}\footnote{The term `vacuum' is used in various ways by various authors, differing, for example, on whether Poincar\'e invariance is required as well. Somewhat remarkably, there is an algebraic criterion on the state that implies translational invariance, the spectrum condition and (if the state is pure) that the there is no other translationally invariant vector state in its GNS representation. See~\cite{Araki}.} is a translationally invariant state obeying the spectrum condition, whose GNS vector is the unique translationally invariant vector (up to scalar multiples) in the GNS Hilbert space. The corresponding GNS representation is called the \emph{vacuum representation.}
\end{definition}

\subsection{The Reeh--Schlieder theorem} We come to some general results that show how different
QFT is from quantum mechanics. For simplicity, suppose that the theory is given in terms of $C^*$-algebras. Let $\Omega$ be the GNS vector of a state obeying the spectrum condition (we drop the $\omega$ subscripts). Suppose the theory obeys the following condition: 
\begin{enumerate}[label=\bf A\arabic{enumi},leftmargin=*,widest=4]\setcounter{enumi}{5}
	\item {\bf Weak additivity}\\
	For any causally convex open region $\Ocal$,  $\pi(\Ac(\MM))$ is contained in the weak closure\footnote{A sequence of operators converges in the weak topology, $\text{w-}\lim A_n = A$, if $\ip{\psi}{A_n\varphi}\to \ip{\psi}{A\varphi}$ for all vectors $\psi$ and $\varphi$.}  of $\Bc_\Ocal$, the $*$-algebra generated by the algebras $\pi(\Ac(\Ocal+x))$ as $x$ runs over $\RR^4$. 
\end{enumerate}
Weak additivity asserts that arbitrary observables can be built as limits of
algebraic combinations of translates of 
observables in any given region $\Ocal$ (as one would expect in a quantum field theory). In combination with
the spectrum condition it has a striking consequence. Our proof is based on that of~\cite{Araki}.

\begin{theorem}[Reeh--Schlieder] Let $\Ocal$ be any causally convex bounded open region. Then (a)
	vectors of the form $A\Omega$ for $A\in\pi(\Ac(\Ocal))$ are dense in $\HH$; (b) if $A\in\pi(\Ac(\Ocal))$ annihilates the vacuum, $A\Omega=0$, then $A=0$. 
\end{theorem}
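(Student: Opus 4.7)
The plan is to attack part (a) by duality: fix $\Psi\in\HH$ with $\Psi\perp\pi(\Ac(\Ocal))\Omega$ and show $\Psi=0$. The three ingredients to combine are cyclicity of $\Omega$ for $\pi(\Ac(\MM))$ (immediate from the GNS construction), weak additivity, and the spectrum condition; together these will promote the local orthogonality hypothesis to a global one.

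First I would pick a smaller region $\Ocal_0$ with $\overline{\Ocal_0}\subset\Ocal$ and $\epsilon>0$ small enough that $\Ocal_0+x\subset\Ocal$ whenever $|x|<\epsilon$. For arbitrary $A_1,\ldots,A_n\in\pi(\Ac(\Ocal_0))$, and writing $\alpha_x(B):=U(x)BU(-x)$, set
\[
F(x_1,\ldots,x_n):=\langle \Psi, \alpha_{x_1}(A_1)\cdots\alpha_{x_n}(A_n)\Omega\rangle.
\]
By isotony, when all $|x_j|<\epsilon$ each factor lies in $\pi(\Ac(\Ocal))$, so $F$ vanishes on a real neighbourhood of $0\in\RR^{4n}$. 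Collecting the intermediate $U(-x_j)U(x_{j+1})$ products and using $U(-x_n)\Omega=\Omega$, the change of variables $y_1=x_1$, $y_j=x_j-x_{j-1}$ ($j\geq 2$) rewrites this as
\[
\tilde F(y_1,\ldots,y_n)=\langle\Psi,U(y_1)A_1 U(y_2)A_2\cdots U(y_n)A_n\Omega\rangle,
\]
which still vanishes on a real neighbourhood of $0$.

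The main obstacle, and the place where the spectrum condition is decisive, is to turn this local vanishing into global vanishing on the real boundary. Since $\supp E\subset\overline{V^+}$, the operator $U(y)=\int e^{ip\cdot y}\,dE(p)$ extends to a bounded operator for every $y\in\RR^4+i\overline{V^+}$, holomorphic on $\RR^4+iV^+$ and strongly continuous up to the boundary. Sandwiching the bounded $A_j$ between these $U(y_j)$ and pairing with $\Psi$ makes $\tilde F$ holomorphic on the tube $T=\RR^{4n}+i(V^+)^n$ and continuous on its closure. A function of this type which vanishes on a real open subset of the distinguished boundary must vanish on all of $T$: in one variable this is a direct Schwarz reflection, and in the general case one reduces to it by restricting to one-parameter families $z\mapsto \tilde F(y^{(0)}+zv_1,\ldots,y^{(0)}+zv_n)$ with $v_j\in V^+$ (alternatively, invoke Bogolyubov's edge-of-the-wedge theorem). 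Hence $\tilde F\equiv 0$ on the whole real boundary, so $F\equiv 0$ on $\RR^{4n}$ and $\Psi$ is orthogonal to every vector of the form $B\Omega$ with $B\in\Bc_{\Ocal_0}$. By weak additivity $\pi(\Ac(\MM))$ lies in the weak closure of $\Bc_{\Ocal_0}$, so $\Psi\perp\pi(\Ac(\MM))\Omega$; cyclicity of $\Omega$ then forces $\Psi=0$, proving (a).

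Part (b) is a short consequence of part (a) plus Einstein causality. Since $\Ocal$ is bounded, there exists a causally convex bounded open region $\Ocal_1$ spacelike separated from $\Ocal$ (e.g.\ a small double cone in its causal complement). For every $B\in\pi(\Ac(\Ocal_1))$, \ref{it:Eins} gives $[A,B]=0$, so $AB\Omega=BA\Omega=0$. Part (a) applied to $\Ocal_1$ yields density of $\{B\Omega:B\in\pi(\Ac(\Ocal_1))\}$ in $\HH$; as $A$ is bounded (since $\Ac$ is a $C^*$-algebra and $\pi$ is continuous), $A=0$ on a dense set forces $A=0$ on $\HH$. The delicate step throughout is the analytic continuation argument; everything else is essentially bookkeeping.
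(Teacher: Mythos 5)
Your proposal is correct and follows essentially the same route as the paper: orthogonality to $\pi(\Ac(\Ocal))\Omega$, translation of operators within a slightly shrunken region, holomorphic extension of $\tilde F$ into the tube $(\RR^4+iV^+)^n$ via the spectrum condition, edge-of-the-wedge (the paper's choice; your Schwarz-reflection slicing is the standard way to prove that step), then weak additivity and cyclicity. Part (b) is also the paper's argument, phrased directly on $A$ rather than via $A^*$ as the paper does.
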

Part (a) says that $\Omega$ is cyclic for every local algebra; part (b) that it is also \emph{separating}. The existence of a cyclic and separating vector on a (von Neumann) algebra is the starting point of \emph{Tomita--Takesaki theory} (see, e.g.,~\cite{BratRob:vol1}). 

These are quite remarkable statements: if a local element of $\Ac(\Ocal)$ corresponds to an operation that can be performed in $\Ocal$, it seems that we can produce any state of the theory up to arbitrarily small errors by operations in any small region anywhere. To give an extreme example: by making a local operation in a laboratory on earth one could in principle modify the state of the theory to one approximating a situation in which there is a starship behind the moon, if such things can be modelled by the theory (e.g., as a complicated state of the standard model). It is an expression of how deeply entangled states in QFT typically are.  

\begin{proof} (a) Suppose to the contrary that the set of vectors mentioned is not dense. Then it has a nontrivial orthogonal complement, so there is a nonzero vector $\Psi\in\HH$ such that 
\[
\ip{\Psi}{A\Omega} = 0,\qquad \forall A\in\pi(\Ac(\Ocal)).
\]
Now let $\Ocal_1$ be a slightly smaller region with $\overline{\Ocal_1}\subset \Ocal$. For any $n\in\NN$ and any $Q_1,\ldots,Q_n\in\pi(\Ac(\Ocal_1))$ we have $U(x)Q_i U(x)^{-1}\in\pi(\Ac(\Ocal))$ for sufficiently small $|x|$, whereupon
\begin{equation}\label{eq:RSa}
\ip{\Psi}{U(x_1)Q_1 U(x_2-x_1) Q_2 U(x_3-x_2)\cdots U(x_n-x_{n-1})Q_n\Omega} = 0
\end{equation}
for sufficiently small $x_1,\ldots,x_n$. By what was said above, the function
\[
F: (\zeta_1,\ldots,\zeta_n)\mapsto \ip{\Psi}{U(\zeta_1)Q_1 U(\zeta_2) Q_2\cdots U(\zeta_n)Q_n\Omega} 
\]
is a continuous function on $(\RR^4)^n$ extending to an holomorphic function in $(\RR^4+iV^+)^n\subset (\CC^4)^n$, whose boundary value on $(\RR^4)^n$ moreover vanishes in some neighbourhood of the origin. The `edge of the wedge' theorem \cite{StreaterWightman,Vladimirov} implies that $F$ vanishes identically in its domain of holomorphicity, which means that the boundary value also vanishes identically. This means that \eqref{eq:RSa} holds for all $x_i$, or put another way, that
\[
\ip{\Psi}{\Bc_\Ocal\Omega}=0
\]
where $\Bc_\Ocal$ is the algebra generated by the algebras $\pi(\Ac(\Ocal_1+x))$ for $x\in\RR^4$. 
By weak additivity, we now know that $\ip{\Psi}{\pi(\Ac(\MM))\Omega} = 0$. But $\Omega_\omega$ is cyclic, so $\Psi=0$. This proves the first assertion. For the second, suppose that $A\in\pi(\Ac(\Ocal_1))$ annihilates $\Omega$. Choose $\Ocal_2$ causally disjoint from $\Ocal_1$, and note that for each $\psi\in\HH$ and all $B\in\pi(\Ac(\Ocal_2))$ we have
\[
\ip{A^*\psi}{B\Omega} = \ip{\psi}{AB\Omega} = \ip{\psi}{BA\Omega} = 0
\]
using Einstein causality. By the first part of the theorem, $A^*\psi$ is orthogonal to a dense set and therefore vanishes; as $\psi\in\HH$ is arbitrary we have $A^*=0$ and hence $A=0$. 
\end{proof}

\begin{corollary} 
	All nontrivial sharp local binary tests (with possible outcomes `success' or `failure') have a nonzero success probability in the vacuum state. (All local detectors exhibit `dark counts'). 
\end{corollary}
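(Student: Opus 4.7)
The plan is to interpret the corollary in terms of the Reeh--Schlieder separating property. A sharp local binary test in a region $\Ocal$ is modelled by a projection $P=P^*=P^2$ belonging to the local algebra $\pi(\Ac(\Ocal))$ (or its weak closure, i.e., the local von Neumann algebra). The outcome ``success'' corresponds to the spectral value $1$, and the success probability in the vacuum state is
\[
\omega(P) = \ip{\Omega}{P\Omega}=\ip{P\Omega}{P\Omega}=\|P\Omega\|^2,
\]
since $P$ is a self-adjoint idempotent. Nontriviality of the test means $P\neq 0$, and I want to conclude $\|P\Omega\|^2>0$.

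First I would handle the case where $P\in\pi(\Ac(\Ocal))$ directly: if $\|P\Omega\|^2=0$ then $P\Omega=0$, so part (b) of the Reeh--Schlieder theorem (the separating property of $\Omega$ for every local algebra) forces $P=0$, contradicting nontriviality. This already gives the result for sharp tests implemented by projections in $\pi(\Ac(\Ocal))$ itself.

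Next I would extend the separating property to the weak closure, so that the corollary covers projections obtained as spectral projections of local self-adjoint operators (which need not lie in $\pi(\Ac(\Ocal))$ but do lie in its weak closure). Suppose $P$ is in the weak closure of $\pi(\Ac(\Ocal))$ and $P\Omega=0$. Pick any causally disjoint region $\Ocal'$. By Einstein causality~\ref{it:Eins} and taking weak limits, $P$ commutes with every $B\in\pi(\Ac(\Ocal'))$. Hence for any $\psi\in\HH$ and $B\in\pi(\Ac(\Ocal'))$,
\[
\ip{P^*\psi}{B\Omega}=\ip{\psi}{PB\Omega}=\ip{\psi}{BP\Omega}=0.
\]
By Reeh--Schlieder~(a) applied to $\Ocal'$, the vectors $B\Omega$ are dense in $\HH$, so $P^*\psi=0$ for all $\psi$, i.e., $P=0$. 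This closes the argument.

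No real obstacle is expected: the entire content of the corollary is a re-packaging of the separating property (b), combined with a routine weak-closure argument that itself relies on applying the cyclic property (a) in a causally disjoint region. The only point that requires a little care is ensuring the separating property transfers from $\pi(\Ac(\Ocal))$ to its weak closure, which is precisely where one uses that $\MM$ has causally disjoint regions so that Reeh--Schlieder can be invoked twice (once for cyclicity in $\Ocal'$, once implicitly for separation in $\Ocal$).
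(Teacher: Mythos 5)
Your argument is correct and, for the case $P\in\pi(\Ac(\Ocal))$, is exactly the paper's proof: write $\omega(P)=\|P\Omega\|^2$ using $P=P^*=P^2$, then invoke the separating property (part (b) of Reeh--Schlieder) to conclude $P\neq 0$ implies $P\Omega\neq 0$. Your second step, extending the separating property to the weak closure of $\pi(\Ac(\Ocal))$ by noting that weak limits preserve commutation with the fixed operators of a causally disjoint local algebra and then re-running the part (b) argument via cyclicity in that disjoint region, is a valid strengthening that the paper's proof does not include; it is worth having, since spectral projections of local self-adjoint operators generally live only in the local von Neumann algebra $\Mf_\omega(\Ocal)=\pi(\Ac(\Ocal))''$ rather than in $\pi(\Ac(\Ocal))$ itself.
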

\begin{proof}
	A binary test can be represented by a projector $P$, with `failure' corresponding to its kernel. Suppose $P\in\pi(\Ac(\Ocal))$ is a projector with vanishing vacuum expectation value, i.e., a zero success probability. Then 
	\[
	\|P\Omega\|^2=  \ip{P\Omega}{P\Omega} = \ip{\Omega}{P\Omega} = 0,
	\]
	so
	$P\Omega=0$ and hence $P=0$ by the Reeh--Schlieder theorem (b). 
\end{proof}

\begin{corollary}\label{cor:vaccor}
For every pair of local regions $\Ocal_1$ and $\Ocal_2$ there are vacuum correlations between $\Ac(\Ocal_1)$ and $\Ac(\Ocal_2)$ (assuming $\dim \HH\ge 1$).
\end{corollary}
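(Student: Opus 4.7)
The plan is to argue by contradiction, using the Reeh--Schlieder theorem in both its cyclic and separating forms. Suppose the vacuum state $\omega$ exhibits \emph{no} correlations between $\Ac(\Ocal_1)$ and $\Ac(\Ocal_2)$, i.e., $\omega(A_1A_2)=\omega(A_1)\omega(A_2)$ for every $A_1\in\pi(\Ac(\Ocal_1))$ and every $A_2\in\pi(\Ac(\Ocal_2))$. Fixing $A_1$ and letting $B:=A_1-\omega(A_1)\II$, this assumption yields $\omega(BA_2)=0$ for all $A_2\in\pi(\Ac(\Ocal_2))$, which we rewrite as
\[
\ip{B^*\Omega}{A_2\Omega}=0\qquad\text{for all } A_2\in\pi(\Ac(\Ocal_2)).
\]

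Next I would invoke Reeh--Schlieder (a) to conclude that $\{A_2\Omega : A_2\in\pi(\Ac(\Ocal_2))\}$ is dense in $\HH$, so $B^*\Omega=0$. Then Reeh--Schlieder (b) (the separating property of $\Omega$) forces $B^*=0$, and hence $A_1=\omega(A_1)\II$. Since $A_1\in\pi(\Ac(\Ocal_1))$ was arbitrary, this shows $\pi(\Ac(\Ocal_1))\subseteq\CC\II$.

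Finally, applying Reeh--Schlieder (a) once more to $\Ocal_1$ gives that $\pi(\Ac(\Ocal_1))\Omega$ is dense in $\HH$, but under the above conclusion this set is contained in $\CC\Omega$, so $\HH=\CC\Omega$. This forces $\dim\HH\le 1$, contrary to the hypothesis, and establishes the existence of a correlated pair $(A_1,A_2)$.

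The argument is essentially a short corollary of Reeh--Schlieder, so there is no serious obstacle; the only subtle point is verifying that the contradiction assumption can be applied not merely to self-adjoint elements but to arbitrary $A_1\in\pi(\Ac(\Ocal_1))$, which is what allows the passage from $\omega(BA_2)=0$ to $B^*\Omega\perp A_2\Omega$. This is unproblematic since $\pi(\Ac(\Ocal_i))$ are $*$-subalgebras of $\BB(\HH)$, so the hypothesis of no correlations is naturally a statement about all algebra elements, and $B^*$ lies in $\pi(\Ac(\Ocal_1))$ whenever $B$ does. One should also note that the hypothesis $\dim\HH\ge 1$ in the statement should be read as $\dim\HH>1$ (the case $\dim\HH=1$ being the trivial algebra where correlations obviously cannot exist).
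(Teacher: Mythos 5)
Your proof is correct and follows essentially the same route as the paper's: both argue by contradiction from the factorization of vacuum correlations and use the Reeh--Schlieder cyclicity of $\Omega$ for one region to force $\pi(\Ac(\Ocal_i))\Omega\subset\CC\Omega$ for the other, contradicting cyclicity unless $\dim\HH=1$. The only (harmless) difference is your extra detour through the separating property to conclude $\pi(\Ac(\Ocal_1))=\CC\II$, which the paper avoids by working directly with $\pi(A_2)\Omega=\ip{\Omega}{\pi(A_2)\Omega}\Omega$; your remark that the hypothesis should read $\dim\HH>1$ is consistent with the paper's own proof, which derives a contradiction ``unless $\dim\HH=1$''.
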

\begin{proof}
	For suppose to the contrary that there are states $\omega_i$ on $\Ac(\Ocal_i)$ such that 
	\[
	\ip{\Omega}{\pi(A_1)\pi(A_2)\Omega} = \omega_1(A_1)\omega_2(A_2)\qquad A_i\in\Ac(\Ocal_i).
	\]
	By setting $A_1=\II$, and then repeating for $A_2$, this implies 
	\[
	\ip{\Omega}{\pi(A_1)\pi(A_2)\Omega} = 	\ip{\Omega}{\pi(A_1) \Omega}	\ip{\Omega}{ \pi(A_2)\Omega}\qquad A_i\in\Ac(\Ocal_i)
	\] 
	Fixing $A_2$ and letting $A_1$ vary in $\Ac(\Ocal_1)$, the Reeh--Schlieder theorem gives
	\[
	\pi(A_2)\Omega = \ip{\Omega}{ \pi(A_2)\Omega}\Omega,\qquad A_2\in\Ac(\Ocal_2)
	\]
	which contradicts the Reeh--Schlieder theorem (a) unless $\dim\HH=1$. 
\end{proof}
The correlations indicated by Corollary~\ref{cor:vaccor} become small at spacelike separation due to the
\emph{cluster property} (a general feature of vacuum states in AQFT), which implies 
\[
\ip{\Omega}{\pi(A_1)\pi(\alpha(x)A_2)\Omega} \to	\ip{\Omega}{\pi(A_1) \Omega}	\ip{\Omega}{ \pi(A_2)\Omega}
\]
as $x\to\infty$ in spacelike directions, with exponentially fast convergence if the theory has a mass gap, i.e., $\sigma(P\cdot P)\subset\{0\}\cup[M^2,\infty)$ for some $M>0$. See~\cite[\S 4.3-4.4]{Araki}. The Reeh--Schlieder theorem does not present a very practical method for constructing starships.

\section{Local von Neumann algebras and their universal type} \label{sec:universaltype}

So far, we have encountered AQFTs given in terms of $*$-algebras or $C^*$-algebras. The theory is considerably enriched when expressed in terms of
von Neumann algebras. 

Consider any net of local $C^*$-algebras $\Ocal\mapsto\Ac(\Ocal)$ obeying~\ref{it:loc}--\ref{it:dyn}.
Given any state $\omega$ on $\Ac(\MM)$, form its GNS representation. For any open bounded causally convex $\Ocal\subset\MM$ we can define an algebra
\[
\Mf_\omega(\Ocal) := \pi_\omega(\Ac(\Ocal))'',
\]
namely, the \emph{double commutant} of the represented local algebra. Recall that
the commutant is defined for any subalgebra $\Bc$ of the bounded operators $\BB(\HH)$ on Hilbert space $\HH$ by 
\[
\Bc' = \{A\in\BB(\HH): [A,B]=0~\forall B\in\Bc\}.
\]
A basic result asserts that the algebra $\Mf_\omega(\Ocal)$ is also the closure of $\pi_\omega(\Ac(\Ocal))$ in the weak topology of $\HH_\omega$ -- as such, and because it contains the unit operator and is stable under the adjoint, $\Mf_\omega(\Ocal)$ is a (concrete) \emph{von Neumann algebra}. 
\begin{exercise}
	Show that for any subalgebra $\Bc$ of $\Bc(\HH)$ one has $\Bc\subset \Bc''$ and $\Bc'''=\Bc'$.
	Therefore $\Mf_\omega(\Ocal)''=\Mf_\omega(\Ocal)$. 
\end{exercise}

There is a good rationale for this construction. The norm topology of the $C^*$-algebra (which coincides with the norm topology $\BB(\HH_\omega)$ in a faithful representation) is quite stringent: we have already seen that there are no nonconstant continuous curves in the Weyl algebra. For example, a Weyl generator differs from all its nontrivial translates by operators of norm $2$. The situation is different in the weak topology: if translations are implemented in a strongly continuous way then matrix elements of an observable change continuously as the observable is translated. This motivates the weak topology as a better measure of proximity than the norm topology. Note also that Einstein causality implies that the commutant $\pi_\omega(\Ac(\Ocal))'$ contains all local algebras $\pi_\omega(\Ac(\Ocal_1))$ 
where $\Ocal_1$ is causally disjoint from $\Ocal$. As $\pi_\omega(\Ac(\Ocal))'=\pi_\omega(\Ac(\Ocal))'''=\Mf_\omega(\Ocal)'$, we see that also $\Mf_\omega(\Ocal_1) \subset \Mf_\omega(\Ocal)'$. It is natural to interpret $\Mf_\omega(\Ocal)'$ in terms of (limits of) observables localised in the causal complement of $\Ocal$ and further natural to assume that
there are no nontrivial observables that can be localised both in $\Ocal$ and its causal complement. This motivates an assumption that
\[
\Mf_\omega(\Ocal)\cap \Mf_\omega(\Ocal)' = \CC\II,
\]
that is, that the local algebras are \emph{factors}, so called because an equivalent statement is that
\[
\Mf_\omega(\Ocal)\vee \Mf_\omega(\Ocal)' = \BB(\HH),
\]
i.e., $\Mf_\omega(\Ocal)$ and its commutant generate [in the sense of von Neumann algebras] the full algebra of bounded operators on $\HH$.

The connection to von Neumann algebras permits a vast body of technical work to be brought to bear on AQFT, and indeed some of it was spurred by developments in AQFT. One of the most striking examples is the 1985 result of Fredenhagen~\cite{Fredenhagen:1985} that for physically reasonably states $\omega$ of reasonable theories obeying the AQFT axioms, every $\Mf_\omega(\Ocal)$ is a \emph{type  III${}_1$ factor}. Shortly afterwards, Haagerup~\cite{Haagerup:1987} proved that there was a unique \emph{hyperfinite} type III${}_1$ factor. A further paper of Buchholz, D'Antoni and Fredenhagen~\cite{BucDAnFre:1987} pointed out the local algebras of QFT are hyperfinite, therefore fixing them uniquely up to isomorphism. Remarkably, the local algebras themselves are completely independent of the theory! Therefore the distinction between different theories lies purely in the `relative position' of the local algebras within the bounded operators on Hilbert space.

To appreciate these results and their consequences, we need to delve into the classification of von Neumann factors. Type I factors are easily defined and familiar.
\begin{definition}\label{def:typeI}
	A von Neumann factor $\Mf$ on Hilbert space $\HH$ is of \emph{type I} if there is a unitary $U:\HH\to\HH_1\otimes\HH_2$, for some Hilbert spaces $\HH_i$, so that $\Mf=U^*(\Bc(\HH_1)\otimes\II_{\HH_2}) U$. 
	The type is further classified according to the dimension of $\HH_1$.
\end{definition}
One could say that type I factors are the natural playground of quantum mechanics: in an obvious way, they are the observables of one party in a bipartite system.
The result mentioned above indicates that QFT prefers type III for its local algebras.
\begin{definition}
	A von Neumann factor $\Mf$ on an infinite-dimensional separable Hilbert space $\HH$ is of \emph{type III} if every nonzero projection $E\in\Mf$ may be written in the form 
	\[
	E = WW^*, \qquad \text{for some $W\in\Mf$ obeying $W^*W=\II_\HH$.}
	\]
	(It is then the case that every two projections $E$ and $F$ in $\Mf$ can be written as $E=WW^*$, $F=W^*W$ for some $W\in\Mf$.) 
\end{definition} 
Type III${}_1$ is a further subtype whose definition would take us too far afield, but it should already be clear that types I and III are quite different. The foregoing definitions are quite technical in nature. However the proof that the local algebras have type III is founded on physical principles, namely that (a) the theory should have a description in terms of quantum fields, (b) that the $n$-point functions of these fields have a well-behaved scaling limit at short distances and (c) that there are no local observables localised at a point, other than multiples of the unit. In more detail, the assumptions are:
\begin{enumerate}[label=(\alph*)]
	\item There is a dense domain $\DD\subset \HH_\omega$ and a linear map $\phi$
	from real-valued test functions to symmetric operators $f\mapsto\phi(f)$ defined on $\DD$ and obeying $\phi(f)\DD\subset\DD$, so that $\phi(f)$ has a closure affiliated\footnote{That is, in the polar decomposition $\overline{\phi(f)}=U|\overline{\phi(f)}|$ of the closure $\overline{\phi(f)}$ of $\phi(f)$, the operator $U$ and any bounded function of $|\overline{\phi(f)}|$ belong to $\Mf_\omega(\Ocal)$.} 
	to $\Mf_\omega(\Ocal)$ for any $\Ocal$ containing $\supp f$. Furthermore, the $n$-point functions of the fields $\phi(f)$ define distributions
	\[
	W_n(f_1,\ldots,f_n)= \ip{\Omega_\omega}{\phi(f_1)\cdots\phi(f_n)\Omega_\omega}.
	\]
	\item Defining scaling maps on test functions by $(\beta_{p,\lambda}f)(x)=f((x-p)/\lambda)$, there should be a positive monotone function $\nu$ so that the scaling limit $n$-point functions
	\[
	W_n^{\text{s.l.}}(f_1,\ldots,f_n) = \lim_{\lambda\to 0^+} \nu(\lambda)^n W_n(\beta_{p,\lambda}f_1,\ldots,\beta_{p,\lambda} f_n)
	\] 
	exist and
	satisfy the (vacuum) Wightman axioms~\cite{StreaterWightman}. One may think of this theory as living in the tangent space at $p$; it is not usually the same theory as the one we started with.
	\item It is required that $\bigcap_{\Ocal\owns p}\Mf_\omega(\Ocal)=\CC\II$, where the intersection is taken over all local regions containing the point $p$.
\end{enumerate}
Assuming condition (a), we say that $\omega$ has a \emph{regular scaling limit} at $p\in\MM$ if the conditions (b) and (c) hold. Fredenhagen's result~\cite{Fredenhagen:1985} can now be stated precisely.
\begin{theorem}\label{thm:universaltype}
	Let $\Ocal$ be a double-cone. If the state $\omega$ has a regular scaling limit at some point of the spacelike boundary of $\Ocal$ (i.e., its equatorial sphere) then $\Mf_\omega(\Ocal)$ has type III${}_1$.   
\end{theorem}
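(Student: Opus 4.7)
The strategy is to verify that $\Mf_\omega(\Ocal)$ has type III${}_1$ by showing that its \emph{Connes spectrum} $S(\Mf_\omega(\Ocal))$ equals $[0,\infty)$, which is the standard modular-theoretic characterisation. The Reeh--Schlieder theorem (applicable because the spectrum condition and weak additivity hold in the vacuum representation of a theory with a regular scaling limit) guarantees that $\Omega_\omega$ is cyclic and separating for $\Mf_\omega(\Ocal)$, so Tomita--Takesaki theory furnishes a modular operator $\Delta_\Ocal$ and a modular automorphism group $\sigma_t^\Ocal$. What we need to control is the spectrum of $\log\Delta_\Ocal$, or more precisely the Connes invariant built from it.

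The next step exploits the scaling limit at $p$. Fix a reference double cone $\Ocal_0$ with $p$ on its closure and form $\Ocal_\lambda := p + \lambda(\Ocal_0-p)$ for $\lambda>0$. Because $p$ lies on the spacelike (equatorial) boundary of $\Ocal$, for some choice of $\Ocal_0$ and all sufficiently small $\lambda$ one has $\Ocal_\lambda\subset\Ocal$, hence $\Mf_\omega(\Ocal_\lambda)\subset\Mf_\omega(\Ocal)$. Using assumption (a), the field $\phi(f)$ smeared with test functions supported in $\Ocal_\lambda$ may be related by the scaling map $\beta_{p,\lambda}$ to fields smeared with $\Ocal_0$-supported functions, up to the normalisation $\nu(\lambda)^{-1}$. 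Assumption (b) then says that the $n$-point functions of these rescaled operators converge to the $n$-point functions of a Wightman theory at $p$, giving a scaling-limit GNS representation with vacuum $\Omega^{\text{s.l.}}$ and local von Neumann algebras $\Mf^{\text{s.l.}}(\tilde{\Ocal})$.

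The decisive physical input is that the scaling limit theory is manifestly \emph{dilation covariant}: for each $\mu>0$, the geometric dilation $D_\mu$ about $p$ is unitarily implemented in the scaling limit representation, fixes $\Omega^{\text{s.l.}}$, and maps $\Mf^{\text{s.l.}}(\Ocal_0)$ onto $\Mf^{\text{s.l.}}(D_\mu\Ocal_0)$. A standard modular argument (of the Bisognano--Wichmann / Borchers type) combined with this dilation symmetry forces the modular spectrum associated to $\Mf^{\text{s.l.}}(\Ocal_0)$ to be invariant under multiplication by every $\mu>0$, hence to coincide with $[0,\infty)$. This already shows that $\Mf^{\text{s.l.}}(\Ocal_0)$ is type III${}_1$. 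Assumption (c) is used here to rule out the degenerate possibility that the scaling-limit algebra collapses to multiples of the identity, since it ensures that $\bigcap_\lambda \Mf_\omega(\Ocal_\lambda)=\CC\II$ but each $\Mf_\omega(\Ocal_\lambda)$ is genuinely non-trivial.

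The main obstacle, and the technical heart of the proof, is transporting the spectral information from $\Delta^{\text{s.l.}}$ on the scaling limit Hilbert space back to $\Delta_\Ocal$ on $\HH_\omega$, since these live a priori on different spaces. Fredenhagen's idea is to realise, for each $\lambda$ in the spectrum of the modular operator of the scaling limit theory, an approximate spectral vector of $\Delta_\Ocal$ of eigenvalue $\lambda$ by applying suitably rescaled local operators to $\Omega_\omega$ and invoking the convergence of $n$-point functions in (b) together with weak-operator continuity of the modular structure along the nested inclusions $\Mf_\omega(\Ocal_\lambda)\subset\Mf_\omega(\Ocal)$. This yields $S(\Mf_\omega(\Ocal))\supseteq[0,\infty)$; since the Connes spectrum is always a closed subsemigroup of $[0,\infty)$, equality follows, so $\Mf_\omega(\Ocal)$ is a type III${}_1$ factor. (Factoriality itself is a separate standing assumption, or follows from a further cluster-type argument using the Reeh--Schlieder property.)
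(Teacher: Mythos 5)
The paper itself does not prove this theorem --- it states it as Fredenhagen's result and refers to \cite{Fredenhagen:1985} --- so your proposal has to be measured against that proof. Your overall architecture (scaling limit plus modular theory plus the Connes characterisation of type III${}_1$, followed by transporting spectral information back to $\Mf_\omega(\Ocal)$) is the right shape, but the decisive step is wrong as stated. You claim that dilation covariance of the scaling-limit theory forces the modular spectrum of $\Mf^{\text{s.l.}}(\Ocal_0)$ to be all of $[0,\infty)$. This does not follow: a dilation $D_\mu$ about $p$ does not preserve the bounded region $\Ocal_0$ (it maps it to a different double cone), so it tells you nothing about the modular operator of $(\Mf^{\text{s.l.}}(\Ocal_0),\Omega^{\text{s.l.}})$; and even for symmetries that do normalise the algebra and fix the vacuum, the conclusion is that they commute with $\Delta$, not that they translate its spectrum. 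Moreover, the modular group of a double-cone algebra in a general (non-conformal) scaling-limit theory is not geometric, so no Bisognano--Wichmann-type identification is available for $\Ocal_0$ itself. Symptomatically, your argument never genuinely uses the hypothesis that $p$ lies on the \emph{equatorial sphere} of $\Ocal$ --- any $p\in\overline{\Ocal}$ would do for the inclusion $\Ocal_\lambda\subset\Ocal$ --- whereas that hypothesis is the engine of Fredenhagen's proof: near a point of the spacelike boundary the rescaled double cones converge, in the tangent-space theory, to a \emph{wedge} with edge through $p$. It is the wedge algebra of the scaling-limit Wightman theory to which Bisognano--Wichmann applies, identifying its modular group with the boost one-parameter group, whose generator has spectrum all of $\RR$; that is where $[0,\infty)$ actually comes from.

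Two further points are glossed over. First, to conclude type III${}_1$ you need the Connes invariant $S(\Mf_\omega(\Ocal))=\bigcap_\phi\sigma(\Delta_\phi)$ over all faithful normal states, not merely $\sigma(\Delta_\Omega)=[0,\infty)$ for the one vector state; Fredenhagen handles this by showing the scaling-limit construction at $p$ is insensitive to replacing $\omega$ by any state in its folium. Second, the ``approximate spectral vector'' step that transports the wedge modular spectrum back to $\Delta_\Ocal$ is the technical core of the paper you are sketching, and asserting it by appeal to ``weak-operator continuity of the modular structure'' is not a proof --- the modular objects do not depend continuously on the algebra in any naive sense, and the actual argument proceeds by explicit control of functions $t\mapsto\ip{\Omega}{A_\lambda^*\Delta^{it}A_\lambda\Omega}$ for rescaled field operators $A_\lambda$ and their convergence to the corresponding wedge expressions in the limit theory.
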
 
Conversely, if any local von Neumann algebra of a double-cone has type other than III${}_1$, then at least one of the assumptions (b),(c) must fail at every point on its spacelike boundary, or (a) fails. One application of this argument has been to prove that `SJ states' of the free field on double-cone regions do not extend to Hadamard states on a larger region - typically because the stress-energy tensor diverges as the boundary is approached~\cite{FewsterVerch:2013}. We will say a little more about this below.

Various key distinctions between quantum mechanics and QFT can be attributed to the differences between factors of types I and III, and the fact that the local algebras of QFT are typically type III. To conclude this section, we collect some properties of type III factors and their consequences for QFT, gathered under some catchy slogans. We assume that the Hilbert space on which the type III factors act is infinite-dimensional and separable. 

\paragraph{All eigenvalues have infinite degeneracy}
No type III factor can contain a finite-rank projection; for if $E$ 
is finite rank, then writing $E=WW^*$ for an isometry $W$, we see that 
$W^*EW=W^*WW^*W=\II_\HH$ is also finite rank, contradicting our assumption on the dimension of $\HH$.
It follows that no self-adjoint element of a local algebra $\Mf(\Ocal)$ can have a finite-dimensional eigenspace, which is far removed from the situation of elementary textbook quantum mechanics. In particular, if $\Omega$ is the vacuum vector,
then the projection $\ket{\Omega}\bra{\Omega}$ belongs to no local algebra $\Mf(\Ocal)$, nor to any commutant $\Mf_\omega(\Ocal)'$ [because the commutant of a type III factor is also of type III].

\paragraph{Local states are impure} Any state $\omega$ on $\Ac(\MM)$ restricts in a natural way to a state $\omega|_{\Ac(\Ocal)}$ on each local algebra $\Ac(\Ocal)$. 
What sort of state is it? To start, we note the simple:
\begin{lemma}
	Suppose $\omega$ is a pure state on a $C^*$-algebra $\Ac$. Then the von Neumann algebra $\Mf=\pi_\omega(\Ac)''$ in the GNS representation of $\omega$ is a type I factor. 
\end{lemma}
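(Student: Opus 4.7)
The plan is to chain together two standard facts already in the background of the paper: purity is equivalent to irreducibility of the GNS representation (stated explicitly in the GNS theorem in the excerpt), and irreducibility of a $*$-representation on Hilbert space is equivalent to triviality of the commutant. Once both are in hand, the conclusion will be immediate because taking a second commutant will yield the full algebra $\BB(\HH_\omega)$, which manifestly fits Definition~\ref{def:typeI} of a type I factor with $\HH_1=\HH_\omega$ and $\HH_2=\CC$.

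In detail, I would first invoke the purity clause of the GNS theorem to obtain that $\pi_\omega$ is irreducible, i.e.\ the only $\pi_\omega(\Ac)$-invariant closed subspaces of $\HH_\omega$ are $\{0\}$ and $\HH_\omega$ (the definition given in the excerpt uses ``trivial or dense,'' but for closed subspaces these alternatives coincide with $\{0\}$ and $\HH_\omega$). Next I would apply the Hilbert-space Schur lemma: if $T\in\pi_\omega(\Ac)'$ is self-adjoint, its spectral projections commute with every $\pi_\omega(A)$, and because $\pi_\omega$ is a $*$-representation the ranges of these projections are closed invariant subspaces; by irreducibility every such projection is $0$ or $\II$, forcing $T\in\CC\II$ by the spectral theorem. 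Writing a general $T\in\pi_\omega(\Ac)'$ as $T=\tfrac{1}{2}(T+T^*)+\tfrac{i}{2i}(T-T^*)$ then gives $\pi_\omega(\Ac)'=\CC\II$.

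Taking the commutant of both sides yields
\[
\Mf=\pi_\omega(\Ac)''=(\CC\II)'=\BB(\HH_\omega),
\]
since the commutant of the scalars is all bounded operators. Finally I would point out that $\BB(\HH_\omega)$ is a type I factor: choosing $\HH_1=\HH_\omega$, $\HH_2=\CC$ and $U=\id$, Definition~\ref{def:typeI} is satisfied, while the center reduces to $\BB(\HH_\omega)\cap \BB(\HH_\omega)'=\BB(\HH_\omega)\cap \CC\II=\CC\II$, confirming the factor property.

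There is no real obstacle here; the only point requiring a moment's care is the equivalence between topological irreducibility (no nontrivial closed invariant subspaces) and algebraic triviality of the commutant, which is precisely Schur's lemma in its $*$-algebraic Hilbert-space form. Everything else is bookkeeping and an appeal to the already-quoted GNS theorem and to Definition~\ref{def:typeI}.
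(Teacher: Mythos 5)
Your proof is correct and follows essentially the same route as the paper's: purity gives irreducibility of the GNS representation, hence $\pi_\omega(\Ac)'=\CC\II$ and $\Mf=\BB(\HH_\omega)$, which is manifestly type I. The only difference is that you spell out the Schur-lemma step (spectral projections of self-adjoint elements of the commutant) that the paper leaves implicit.
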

\begin{proof}
The GNS representation of a pure state is irreducible. So the commutant $\pi_\omega(\Ac)'$ consists only of multiples of the unit, and $\Mf=\BB(\HH)$,
which is evidently type I.
\end{proof}
In the light of this result, it may not be so much of a surprise that:
\begin{theorem}
	Under the hypotheses of Theorem~\ref{thm:universaltype}, the algebra
	$\pi_{\omega|_{\Ac(\Ocal)}}(\Ac(\Ocal))''$ is of type III. Therefore $\omega|_{\Ac(\Ocal)}$ is not pure, nor is it a normal state in the GNS representation of a pure state of $\Ac(\Ocal)$. 
	Furthermore, in the case where $\Mf_\omega(\Ocal)$ is a factor, the previous statements
	hold if $\omega$ is replaced by any state in its folium. 
\end{theorem}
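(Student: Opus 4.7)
The plan is to identify $\pi_{\omega|_{\Ac(\Ocal)}}(\Ac(\Ocal))''$ as a reduction of the type III${}_1$ factor $\Mf_\omega(\Ocal)$ delivered by Theorem~\ref{thm:universaltype}, then exploit the fact that reductions of factors preserve type. First I would set $\Kcal:=\overline{\pi_\omega(\Ac(\Ocal))\Omega_\omega}$ and let $P$ denote the orthogonal projection of $\HH_\omega$ onto $\Kcal$. Because $\Kcal$ is $\pi_\omega(\Ac(\Ocal))$-invariant, $P\in\pi_\omega(\Ac(\Ocal))'=\Mf_\omega(\Ocal)'$. The restriction of $\pi_\omega|_{\Ac(\Ocal)}$ to $\Kcal$ with cyclic vector $\Omega_\omega$ implements $\omega|_{\Ac(\Ocal)}$, and so by the uniqueness clause of the GNS theorem it is unitarily equivalent to the GNS representation of $\omega|_{\Ac(\Ocal)}$. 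Consequently $\pi_{\omega|_{\Ac(\Ocal)}}(\Ac(\Ocal))''$ is isomorphic to the reduction $\Mf_\omega(\Ocal)_P$ acting on $\Kcal$.

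Next I invoke the standard fact that for any factor $\Mf$ and any nonzero $P\in\Mf'$ the reduction map $A\mapsto AP$ is a $*$-isomorphism from $\Mf$ onto $\Mf_P$: its kernel is a weakly closed two-sided ideal of $\Mf$, which must be $\{0\}$ since $\II\cdot P=P\neq 0$. Applied to the type III${}_1$ factor $\Mf_\omega(\Ocal)$ this shows $\pi_{\omega|_{\Ac(\Ocal)}}(\Ac(\Ocal))''$ is itself a type III${}_1$ factor. Purity then fails because the GNS representation of a pure state on a $C^*$-algebra is irreducible, whence its double commutant equals $\BB(\HH)$ (type I), which is incompatible with type III${}_1$. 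Similarly, if $\omega|_{\Ac(\Ocal)}$ were normal in the GNS representation of some pure state $\psi$ of $\Ac(\Ocal)$, then $\omega|_{\Ac(\Ocal)}$ would lie in the folium of $\psi$ and quasi-equivalence would force $\pi_{\omega|_{\Ac(\Ocal)}}(\Ac(\Ocal))''$ to be a reduction of $\pi_\psi(\Ac(\Ocal))''=\BB(\HH_\psi)$, hence type I, contradicting type III${}_1$.

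For the folium extension, assume $\Mf_\omega(\Ocal)$ is a factor and let $\phi$ lie in the folium of $\omega$. Then $\pi_\phi$ is quasi-equivalent to a subrepresentation of $\pi_\omega$, which produces a normal surjective $*$-homomorphism $\Mf_\omega(\Ocal)\to\pi_\phi(\Ac(\Ocal))''$ obtained by normal extension of $\pi_\omega(A)\mapsto\pi_\phi(A)$. Its kernel is again a weakly closed two-sided ideal in the factor $\Mf_\omega(\Ocal)$, hence trivial, so $\pi_\phi(\Ac(\Ocal))''$ is a type III${}_1$ factor. The argument of the previous paragraph then applies verbatim to $\phi|_{\Ac(\Ocal)}$ in place of $\omega|_{\Ac(\Ocal)}$.

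The main obstacle is the first step: identifying $\pi_{\omega|_{\Ac(\Ocal)}}(\Ac(\Ocal))''$ with the reduction $\Mf_\omega(\Ocal)_P$ rather than with the (generally much smaller) algebra $\pi_\omega(\Ac(\Ocal))$ itself. This requires both the GNS uniqueness statement and the observation that taking the double commutant within the reduced Hilbert space $\Kcal$ picks up all limits present in $\Mf_\omega(\Ocal)$, since $P\in\Mf_\omega(\Ocal)'$. The factor property is then essential for converting reduction into honest isomorphism and preserving type; without it one can only conclude that the reduction is a quotient, and type preservation may fail.
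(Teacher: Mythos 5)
Your argument is correct, but there is nothing in the paper to compare it against line by line: the paper's ``proof'' of this theorem consists of the single sentence deferring to Corollary 3.3 of \cite{FewsterVerch:2013} and the remark that ``a few standard results from von Neumann theory'' are required. What you have written is precisely the standard argument that citation alludes to: GNS uniqueness identifies $\pi_{\omega|_{\Ac(\Ocal)}}$ with the restriction of $\pi_\omega|_{\Ac(\Ocal)}$ to $\Kcal=\overline{\pi_\omega(\Ac(\Ocal))\Omega_\omega}$; the identity $(\Ncal P)''=(\Ncal'')_P$ for $P\in\Ncal'$ converts the double commutant into the reduction $\Mf_\omega(\Ocal)_P$; injectivity of the reduction map follows because its kernel is a $\sigma$-weakly closed two-sided ideal of a factor; and type III is incompatible with the type I algebras generated by pure states and by states normal in a pure GNS representation. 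The folium clause via a normal surjection with trivial kernel is likewise the intended route. One minor refinement: for the first assertion (type III, without the subscript) factoriality of $\Mf_\omega(\Ocal)$ is not actually needed, since $\Mf_P\cong\Mf c_P$ where $c_P$ is the central support of $P$, and a central reduction of a type III algebra is again type III; so your closing caveat that type preservation ``may fail'' without the factor property is more pessimistic than necessary for the property actually claimed, although your more conservative route is perfectly valid given that Theorem~\ref{thm:universaltype} delivers a type III${}_1$ factor.
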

\begin{proof} This requires a few standard results from von Neumann theory and can be found as Corollary 3.3 in~\cite{FewsterVerch:2013}.
\end{proof}
The SJ states mentioned above induce pure states and therefore it follows that they cannot be induced as restrictions of states on $\Ac(\MM)$ with good scaling limit properties. See~\cite{FewsterVerch:2013}. 

\paragraph{Local experiments can be prepared locally} 
Suppose $E$ is any projection in $\Mf(\Ocal)$ and let $\omega$ be any state
on the $C^*$-algebra formed as the closure of $\bigcup_{\Ocal}\Mf(\Ocal)$ in $\BB(\HH)$. By the type III property, we may write
\[
E= WW^*
\]
for some isometry $W\in\Mf(\Ocal)$. Then the modified state \emph{(Exercise: check that it is a state!)}
\[
\omega_W(A):= \omega(W^* A W)
\]
obeys
\[
\omega_W(E) = \omega(W^*WW^*W) =1
\]
so the yes/no test represented by $E$ is certainly passed in state $\omega_W$.
On the other hand, if $A\in\Mf(\Ocal_1)$ where $\Ocal_1$ is causally disjoint from $\Ocal$, then 
$A\in\Mf(\Ocal)'$ and so
\[
\omega_W(A) = \omega(W^*AW) = \omega(W^*WA) = \omega(A).
\]
So by changing the state in this way we can prepare a state with a desired property in our lab without changing the rest of the world.

For a brief survey of this and other features of type III factors and their relevance to QFT, see~\cite{Yngvason:2005,Yngvason:2015}.

\section{The split property}\label{sec:split}

The fundamental postulates of relativistic physics entail that two spacelike separated laboratories should not be able to communicate. Consequently,
an experimenter situated in one of these laboratories should be able to conduct experiments independently of the actions (or even the presence) of an experimenter in the other region. The Einstein causality condition~\ref{it:Eins} reflects this idea, because any observable
from one local algebra will commute, and be simultaneously measurable, with any observable from the other.  However, this is far from being the only way in which the two regions must be independent. For example, the two observers should be able to prepare their experiments for measurement independently, too.
In quantum mechanics this independence is modelled by assigning each of the two local systems their own Hilbert space, $\HH_1$ and $\HH_2$, on which the observables in the two laboratories respectively act. If the experimenters prepare states $\psi_i$, the global state is then taken to be $\psi_1\otimes\psi_2$ on the tensor product $\HH_1\otimes\HH_2$. 
This section describes the \emph{split property} which provides conditions under which a similar level of independence may be established in QFT.

First, we introduce some terminology: two von Neumann algebras $\Mf_1$ and $\Mf_2$ acting on a Hilbert space $\HH$ are said to form a \emph{split inclusion} if there is a type I von Neumann factor $\Nf$ so that
\[
\Mf_1\subset \Nf\subset \Mf_2.
\]
By Definition~\ref{def:typeI}, this means that there is a unitary $U:\HH\to \HH_1\otimes\HH_2$ with $\Nf = U^* (\BB(\HH_1)\otimes\II_{\HH_2})U$ for some Hilbert spaces $\HH_1$ and $\HH_2$. The commutant of $\Nf$ is easily described:
\[
\Nf' = U^* (\II_{\HH_1}\otimes\BB(\HH_2))U
\]
and it follows from the split inclusion that
\[
\Mf_1\subset  U^* (\BB(\HH_1)\otimes\II_{\HH_2})U,\qquad   \Mf_2' \subset U^* (\II_{\HH_1}\otimes\BB(\HH_2))U.
\]
Suppose that states $\omega_1$ and $\omega_2$ are given on $\Mf_1$ and $\Mf_2'$ that may be expressed in terms of density matrices $\rho_i$ on $\HH_i$, so that
\[
\omega_1(A)= \tr ((\rho_1\otimes\II_{\HH_2})UAU^*), \qquad \omega_2(B)= \tr ((\II_{\HH_1}\otimes\rho_2)UBU^*)
\]
for $A\in\Mf_1$, $B\in\Mf_2'$. Then there is an obvious joint state
\[
\omega(C) = \tr ((\rho_1\otimes\rho_2)UAU^*)
\]
with the property \emph{(Exercise: prove it!)} that
\[
\omega(AB)= \omega_1(A)\omega_2(B) = \omega(BA), \qquad \text{for all}~A\in\Mf_1, ~B\in\Mf_2'.
\]
With a little more technical work it can be shown that this construction is possible whenever $\omega_i$ are given as density matrices on $\HH$ (see e.g.,~\cite{Fewster_Abh:2016} for an exposition). 

Returning to QFT, we make the following definition.
\begin{definition}
	A net $\Ocal\mapsto\Mf(\Ocal)$ of von Neumann algebras has the \emph{split property} if, whenever $\overline{\Ocal_1}\subset \Ocal_2$, the inclusion $\Mf(\Ocal_1)\subset \Mf(\Ocal_2)$ is a split inclusion. 
\end{definition}
Note that we only require a split inclusion when $\Ocal_2$ contains the closure of $\Ocal_1$, so that there is a `safety margin' or collar around $\Ocal_1$ within $\Ocal_2$. We can also make the less restrictive assumption that the split inclusion holds when this safety margin has a minimum size, in which case one speaks of the \emph{distal split property}. 

If the split property holds, and $\Ocal_1$ and $ \Ocal_3$ are spacelike separated pre-compact regions whose closures do not intersect, then we may certainly find a pre-compact neighbourhood $\Ocal_2$ of $\overline{\Ocal}_1$ within the causal complement of $\overline{\Ocal_3}$. Then $\Mf(\Ocal_1)\subset \Mf(\Ocal_2)$ is split and $\Mf( \Ocal_3)\subset \Mf(\Ocal_2)'$. It follows that experimenters in $\Ocal_1$ and $ \Ocal_3$ are able to independently prepare and measure observables of the field theory. Moreover, there is an isomorphism of von Neumann algebras
\[
\Mf(\Ocal_1)\bar{\otimes}\Mf( \Ocal_3)\cong \Mf(\Ocal_1)\vee \Mf( \Ocal_3)
\]
extending the map $A\otimes B\mapsto AB$, 
where the left-hand side is a spatial tensor product of von Neumann algebras and the right-hand side is the von Neumann algebra generated by sums and products of elements in $\Mf(\Ocal_1)$ and $\Mf(\Ocal_3)$.

In situations where the Reeh--Schlieder theorem applies, and there is a vector that is cyclic and separating for all causally convex bounded regions, the split inclusions have more structure and are called \emph{standard split inclusions}. A deep analysis by Doplicher and Longo~\cite{DopLon:1984} shows, among many other things, that there is a canonical choice for the type I factor appearing in standard split inclusions. 

The split property is enjoyed by free scalar fields of mass $m\ge 0$ and the observable algebra for the Dirac field, but also for certain interacting models in $1+1$-dimensions (see~\cite{Lech_chap:2015} for a survey and exposition).
It is intimately connected to the way in which the number of local degrees of freedom available to the theory grows with the energy scale.
These are expressed technically in terms of various \emph{nuclearity conditions}, which we will not describe in detail here (but see~\cite{Fewster_Abh:2016} for discussion and a relation to yet a further topic -- Quantum Energy Inequalities). Instead, we limit ourselves to some examples involving a theory comprising countably many independent free scalar fields of masses $m_r$ ($r\in \NN$). It may be shown, for instance, that this theory has the split property if the function
\[
G(\beta):= \sum_{r=1}^\infty e^{-\beta m_r/4}
\]
is finite for all $\beta>0$ and grows at most polynomially in $\beta^{-1}$ as $\beta\to 0^+$. This is the case if $m_r=rm_1$, for instance. On the other hand, if 
\[
m_r = (2d_0)^{-1}\log(r+1)
\]
for some constant $d_0>0$ then the series defining $G(\beta)$ diverges for $\beta\le 8d_0$. 
Further analysis~\cite[Thm 4.3]{DAnDopFreLon:1987} shows that the split property fails in this situation, but that the distal split condition holds provided that the `safety margin' is sufficiently large. For concentric arrangements of double cones of base radii $r$ and $r+d$, splitting fails if $d<d_0$ and succeeds if $d>2d_0$. 

The overall message is that a (well-behaved) tensor product structure across regions at spacelike separation can only be expected in well-behaved QFTs and with a safety margin between the regions. It turns out that the split property is closely related both to the existence of well-behaved thermal states (absence of a Hagedorn temperature) and to whether the theory satisfies quantum energy inequalities -- see~\cite{Fewster_Abh:2016} for discussion and original references. Intuitively, the reason for this is that there is a cost associated with disentangling the degrees of freedom between the two regions. For example, a joint state `glued together' from two states given on their local algebras might be expected to have a higher energy than the sum of the two original energies, with the excess in energy being higher if the `gluing' has to take place over a smaller gap between the regions. If the number of degrees of freedom available grows excessively with the energy scale, it becomes impossible to achieve the gluing with a finite excess energy, or possible only if the regions are sufficiently distant from one another.

\section{Superselection sectors}\label{sec:sectors}
We come back to the discussion of superselection sectors, mentioned in section~\ref{sec:PosQM}. In the algebraic viewpoint on quantum theory, superselection sectors correspond to a class of unitarily inequivalent representations of the algebra of observables. One of the major structural results in algebraic quantum field theory \cite{DHR1,DHR2,DHR3,DHR4,DR89} was to show that these sectors are related to irreducible representations of some compact Lie group $G$ (the global gauge group). The key idea of algebraic QFT is that \textit{all the relevant information is contained in the net of observables}, and from this net one can construct an algebra of fields $\F(\Ocal)$, which then contains non-observable objects, e.g. smeared Dirac fields $\psi(f)$ (see section~\ref{sec:examples}). This algebra is \textit{uniquely fixed by the net} and it carries the action of the gauge group $G$, so that local algebra $\Ac(\Ocal)$ consists of elements of $\Fc(\Ocal)$ invariant under $G$. The reconstruction of both $\Fc$ and $G$ is achieved through the DHR (Doplicher--Haag--Roberts) analysis \cite{DHR1,DHR2,DHR3,DHR4} together with the Doplicher--Roberts reconstruction theorem \cite{DR89}. The brief exposition in Sections~\ref{interest}-\ref{sec:intertwiners} is based on~\cite{KlausSuperselection}.

As motivating example, we consider the \emph{complex scalar field}, whose algebra
$\Cc(\MM)$ was described in Sec.~\ref{sec:examples}. Its vacuum representation $\pi$ is given by
\begin{align*}
\pi(\Phi(g)) &= a(\hat{g}|_{H_m^+})\otimes \II + \II\otimes a^*(\hat{g}|_{H_m^+})\\
\pi(\Phi^\star(g)) &= a^*(\hat{g}|_{H_m^+}) \otimes \II + \II\otimes a(\hat{g}|_{H_m^+})
\end{align*}
for real-valued $g$ [with $\Phi(g)=\Phi(\Re g) + i \Phi(\Im g)$ in general, compare with the real scalar field in Section~\ref{sec:quasifree}], on $\FF(\Hcal)\otimes \FF(\Hcal)$, where $\Hcal=L^2(H^+_{m},d\mu)$ and $a^*,a$ are creation and annihilation operators on the Fock space  $\FF(\Hcal)$. The charge operator is 
\[
Q = N\otimes \II - \II \otimes N\,,
\]
where $N$ is the number operator on $\FF(\Hcal)$ (see Appendix~\ref{sec:Fock} for definition). The theory has a global $U(1)$ gauge symmetry generated by $Q$:
\begin{equation}
\label{eq:QPhi} 
e^{i\alpha Q}\pi(\Phi(f)) e^{-i\alpha Q} = e^{-i\alpha} \pi(\Phi(f)),\qquad \alpha\in\RR,
\end{equation} 
which implements the automorphisms $\eta_\alpha$ described earlier, 
and the Fock space decomposes into charged sectors
\[
\FF(\Hcal)\otimes \fF(\Hcal) = \bigoplus_{q\in\ZZ} \HH_q,
\]
each $\HH_q$ being the eigenspace of $Q$ with eigenvalue $q$. Assuming that $Q$ is conserved in all interactions available to the observer, the argument
described in Sec.~\ref{sec:PosQM} shows that the observables of the theory should commute with $Q$ and be block-diagonal in this decomposition. Clearly, this is not the case for the smeared fields $\pi(\Phi(f))$ or $\pi(\Phi^\star(f))$, which are consequently unobservable. On the other hand, operators of the form $\pi(\Phi^\star(f)\Phi(f))$ is gauge-invariant, as is (any smearing of) the Wick product of the field $\pi(\Phi^\star(x))$ with $\pi(\Phi(x))$. 

The main point of interest for us is that the local observable algebras $\Cc_{\text{obs}}(\Ocal)$ are gauge-invariant by definition and therefore $\pi(\Cc_{\text{obs}}(\Ocal))$ consists of block-diagonal operators. More than that, we can see that there are representations $\pi_q$ of $\Cc_{\text{obs}}(\Ocal)$ on each $\HH_q$, given by 
	\[
	\pi_q(A) = P_q \pi(A) P_q,
	\]
	where $P_q$ is the orthogonal projector onto $\HH_q$ within $\FF(\Hcal)\otimes \FF(\Hcal)$. In particular, $\pi_0$ is a representation in $\HH_0$, the charge-zero sector, which  contains the vacuum vector $\Omega\otimes\Omega\in\FF(\Hcal)\otimes\FF(\Hcal)$ and will be called the vacuum sector for the observables.

We now have a whole family of representations of the algebras $\Cc_{\text{obs}}(\Ocal)$ on different Hilbert spaces. It is easy to argue (at least heuristically) that these are mutually unitarily inequivalent, because the charge operator itself can be regarded as a limit of local observables, i.e., local integrals of the Noether charge density associated with the global gauge symmetry. Therefore, if there were a unitary $U:\HH_q\to\HH_{q'}$ obeying $\pi_q'(A) = U\pi_q(A)U^{-1}$ for all $A\in\Cc_{\text{obs}}(\Ocal)$ and all local regions $\Ocal$, it would follow that
the charge operators in the two representations should be equivalent under the same mapping, giving
\[
q'\II_{\HH_{q'}} = q U\II_{\HH_{q}} U^{-1} = q \II_{\HH_{q'}}
\] 
which can only happen in the case $q'=q$. The physical distinction between different sectors is precisely that they have different charge content, and the central insight of the DHR programme is that the relevant charges might be gathered in some local region, outside which the distinction can be, as it were, gauged away.  An extra charged particle behind the moon ought not to change our description of particle physics on earth.\footnote{This picturesque statement applies to confined charges rather than those with  long range interactions such as electric charges. We discuss this issue briefly later on.} As we will describe this physical insight allows a remarkable reversal of the process we have just followed: instead of starting with an algebra of unobservable fields transforming under a global gauge group and obtaining from it a collection of unitarily inequivalent representations of the algebra of observables, one starts from a suitable class of representations of the observable algebras and attempts to reconstruct the unobservable fields and the a unifying gauge group. 

\subsection{Representations of interest in particle physics}\label{interest}
In the first step of DHR analysis we want to single out a class of representations relevant for study of superselection sectors. We focus on theories without long range effects, for example strong interactions in hadron physics. Loosely speaking, the representations of interest correspond to states that are generated from the vacuum by (possibly unobservable) local field operators. The intuition is that these states have different global charges (e.g. QCD charges) but which are localized in compact regions.

Let $\Ocal\mapsto \Ac(\Ocal)$ be a net of $C^*$-algebras
and let $\Ac$ be the quasilocal algebra (to reduce clutter, we write $\Ac(\M)\equiv \Ac$ in this section). Let $\pi_0$ be the vacuum representation (assumed here to be faithful).

We are interested in reconstructing the field algebras $\F(\Ocal)$ realized as von Neumann algebras of operators on some common Hilbert space $\Hcal_{tot}$, i.e. $\F(\Ocal)\subset \Bcal(\Hcal_{tot})$. We want them to satisfy (among others) the following properties:
	\begin{enumerate}[label=\bf F\arabic{enumi},leftmargin=*,widest=4] 
	\item $\F(\Ocal)$ carries a strongly continuous representation of the (covering group) of the Poincar\'e group and there exists a unique vector $\Omega\in \Hcal_{tot}$ (up to  phase) that is invariant under this action.
	\item There exists a compact group $G$ (the \textit{gauge group}) and a strongly continuous faithful representation $U$ of $G$ in  $\Hcal_{tot}$ such that the automorphism $A\mapsto \alpha_g(A)= \mathrm{Ad}_{U(g)}(A)=U(g)A U(g)^{-1}$ obeys
\[
	\alpha_g(\F(\Ocal))=\F(\Ocal)\,,\qquad U(g)\Omega=\Omega\,,
\]
	and the $U(g)$'s commute with the representation of the Poincar\'e group mentioned above.
	
	On general grounds, the Hilbert space $\Hcal_{tot}$ decomposes into \textit{superselection sectors} as:
	\begin{equation}\label{eq:H_decomp}
	\Hcal_{tot}=\bigoplus_\sigma \tilde{\Hcal}_\sigma\otimes \Hcal_\sigma\,,\qquad
	U(g) = \bigoplus_\sigma  U_\sigma(g)\otimes \II_{{\Hcal}_\sigma}\,,
	\end{equation}
	where the sum is taken over equivalence classes $\sigma$ of unitary irreps $(\tilde{\Hcal}_\sigma,U_\sigma)$ of $G$, with representation space $\tilde{\Hcal}_\sigma$ and $\Hcal_\sigma$ reflects the multiplicity with which these representations appear (including the possibility that  $\Hcal_\sigma$ has zero dimension, in which case $\sigma$ does not appear).
	
	\item Reconstructing the observables: $\Hcal_{tot}$ carries a representation $\pi$ of $\Ac(\Ocal)$ and
	\[
	\pi(\Ac(\Ocal))''=\Fc(\Ocal)\cap U(G)' =\{A\in\F(\Ocal), \alpha_g(A)=A, \forall g\in G\}.
	\]
	By construction, this representation decomposes w.r.t.\,~\eqref{eq:H_decomp} as
	\[
	\pi = \bigoplus_\sigma  \II_{\tilde{\Hcal}_\sigma}\otimes \pi_\sigma
	\]
	where each $\pi_\sigma$ is a representation of $\Ac(\Ocal)$ on $\Hcal_\sigma$.
	In particular, the representation corresponding to the trivial representation of $G$ should coincide with $\pi_0$, the vacuum representation.  
\end{enumerate}

The task of reconstructing the field algebra therefore amounts to determining the relevant representations $\pi_\sigma$, and corresponding (irreducible) representations $U_\sigma$ of $G$, which
can then be assembled to form $\Hcal_{tot}$. It may be shown that the representations $\pi_\sigma$ of interest are those which satisfy the following criterion (due to DHR \cite{DHR3}):
\vskip 0.1in
\begin{df}
	A representation $\pi$ of a net of $C^*$-algebras $\Ocal\mapsto\Ac(\Ocal)$ is a \emph{DHR representation} if it is Poincar\'e covariant and the following holds:
	\be\label{DHRcrit}
	\pi\big|_{\Ac(\Ocal')}\cong \pi_0\big|_{\Ac(\Ocal')}
	\ee
	for all double-cones\footnote{The criterion can be weakened so as to refer to sufficiently large double-cones.} $\Ocal$, where $\Ocal'$ is the causal complement of $\Ocal$ and $\cong$ means unitary equivalence, i.e. there exists a unitary operator between the appropriate Hilbert spaces $V:\Hcal_{0}\rightarrow \Hcal_{\pi}$ such that $V\pi_0(A)=\pi(A) V$ for all $A\in \Ac(\Ocal')$.
\end{df}
In the above definition, the region $\Ocal'$ is unbounded, so $\Ac(\Ocal')$ is not given a priori in the specification of the theory. It is defined to be the $C^*$-algebra generated by all local algebras $\Ac(\Ocal_1)$ for bounded $\Ocal_1\subset \Ocal'$. 

The intertwining unitary $V$ has the interpretation of a \emph{charge-carrying field}. To get some intuition about these objects, consider the example of a complex scalar field mentioned at the beginning of this section.
Consider a test function $f\in\CoinX{\MM;\RR}$ with $\Phi^\star(f)\neq 0$. Firstly, we note that the charge is related to the phase of the field $\pi(\Phi^\star(f))$, so in order to extract an intertwiner $V$, we consider the polar decomposition $\overline{\pi(\Phi^\star(f))}=V_f |\overline{\pi(\Phi^\star(f))}|$, where the overline denotes an operator closure. The partial isometry $V_f$ may be taken to be unitary because $\overline{\pi(\Phi^\star(f))}$ is normal; acting on vectors in the vacuum sector $\HH_0$, it has the property $V_f^q \pi_0(A)= \pi_q(A)V_f^q$, where $A\in\Cc(\M)$ and $\pi_q$ is the representation with charge $q\in\ZZ$. We see that $V_f$ creates a single unit of charge and the support of $f$ determines the region, where the charge is localized. If $f$ is supported in a double cone $\Ocal$ and we take $f_1$ supported inside another double cone $\Ocal_1$, then $V_{f_1}^q V_f^{-q}$ transports $q$ units of charge from $\Ocal$ to $\Ocal_1$. 

There exists a generalization of the DHR framework to the situation, where charges are not localized in bounded regions, but rather in cone-like regions (\emph{space-like} cones formed as a causal completions of spacial cones). This is relevant if one wants to apply this analysis to theories with long-range interactions, e.g. quantum electrodynamics (QED). It is expected that the electron is a charged particle with this type of localization. The corresponding version of the DHR construction has been developed by Buchholz and Fredenhagen (BF analysis) in \cite{BF82} and the full analysis of superselection sectors for QED has been achieved in \cite{BR14}.

We may now turn things around and phrase  our problem as follows: starting with the abstract algebra of observables and its vacuum representation we want to classify the equivalence classes of its (irreducible) representations satisfying \eqref{DHRcrit}. Following the literature \cite{Haag,DHR1,DHR2,DHR3,DHR4}, we call each of these equivalence classes a superselection sector (also called \textit{charge superselection sectors}, whereupon the labels $\sigma$ are referred to as \textit{charges}, though they need not be numbers). 
%
%

 A special case is the situation where $(U_\sigma,\tilde{\Hcal}_\sigma)$ is a one-dimensional representation. 
  Remarkably, such \emph{simple sectors} are distinguished by the following property of $\pi_\sigma$:
\begin{df}
	We say that a representation $\pi$ satisfies \emph{Haag duality}, if  
				\be\label{Haagdual}
				\pi(\Ac(\Ocal'))''=\pi(\Ac(\Ocal))'\cap \pi(\Ac)''\,,
				\ee 
				for any double-cone $\Ocal$. 
				If $\pi$ is irreducible then the intersection with $\pi(\Ac)''$ is superfluous, 
				whereupon one also has
				\be\label{Haagdual2}
			 \pi(\Ac(\Ocal'))'=\pi(\Ac(\Ocal))''\,.
			 \ee
			(\emph{Exercise!})
\end{df}
One can show that $\pi_\sigma$ satisfies Haag duality if and only if $\tilde{\Hcal}_\sigma$ is one-dimensional \cite{DHR1}.  In particular, the vacuum sector is always simple; however, if the 
global gauge group is nonabelian, there will necessarily be some non-simple sectors.

\begin{exercise}
	If $\Ac$ is a $C^*$-algebra with Hilbert space representations $\pi_1$ and $\pi_2$ on Hilbert spaces $\HH_1$ and $\HH_2$, define the representation $(\pi_1\oplus\pi_2)(A)=\pi_1(A)\oplus \pi_2(A)$ on $\HH_1\oplus\HH_2$. Compute
	the commutant $(\pi_1\oplus\pi_2)(\Ac)'$ and double commutant $(\pi_1\oplus\pi_2)(\Ac)''$ within $\BB(\HH_1\oplus\HH_2)$, which may be regarded as consisting of $2\times 2$ `block matrices' of operators, and compare the results with $\pi_1(\Ac)''\oplus \pi_2(\Ac)''$. 
\end{exercise}

\subsection{Localized endomorphisms}
We now want to introduce some algebraic structures on the space  of representations of interest, following closely the exposition presented in \cite{KlausSuperselection}. Our standing assumptions are that the vacuum representation $\pi_0$ is faithful and irreducible, satisfies Haag duality, and that the local algebras $\pi_0(\Ac(\Ocal))$ are weakly closed, $\pi_0(\Ac(\Ocal))''=\pi_0(\Ac(\Ocal))$, so they are actually von Neumann algebras.\footnote{If weak closure does not hold, one can apply this discussion to the net $O\mapsto \Mf_0(\Ocal):=\pi_0(\Ac(\Ocal))''$.} Together with Haag duality, this gives
\be\label{Haagdual3}
			 \pi_0(\Ac(\Ocal'))'=\pi_0(\Ac(\Ocal))
\ee
for every double-cone $\Ocal$.

Let us fix a double cone $\Ocal$. Consider a representation $\pi$ of $\Ac$ satisfying the DHR criterion with a unitary $V:\Hcal_0\rightarrow\Hcal_{\pi}$ implementing the equivalence in~\eqref{DHRcrit} for $\Ocal$, and
define a representation $\tilde{\pi}$ on $\Hcal_0$ by
\[
\tilde{\pi}(A)=V^{-1}\pi(A) V\,,\quad A\in\Ac\,.
\]
Take $\Ocal_1\supset \Ocal$ and $A\in\Ac(\Ocal_1)$, $B\in\Ac(\Ocal'_1)$. Since $\Ac(\Ocal'_1)\subset \Ac(\Ocal')$ and $\tilde{\pi}=\pi_0$ on $\Ac(\Ocal')$ (by the DHR criterion), we have
\[
[\pi_0(B),\tilde{\pi}(A)]=\tilde{\pi}([B,A])=0\,,
\]
so $\tilde{\pi}(A)\in \pi_0(\Ac(\Ocal'_1))'=\pi_0(\Ac(\Ocal_1))$, where the last assertion follows from Haag duality~\eqref{Haagdual3}. We conclude that
 $\tilde{\pi}(\Ac(\Ocal_1))\subset \pi_0(\Ac(\Ocal_1))$, so $\tilde{\pi}(\Ac)\subset \pi_0(\Ac)$, as local operators are dense in $\Ac$. Since $\pi_0$ is faithful, \emph{there exists an endomorphism $\rho$ of the abstract $C^*$-algebra $\Ac$ with $\rho=\pi_0^{-1}\circ \tilde{\pi}$}. 

There is an obvious equivalence relation on endomorphisms:
\begin{equation}\label{equiv}
\rho_1\sim\rho_2 \Leftrightarrow \rho_1=\iota\circ\rho_2\,, 
\end{equation}
for some  inner automorphism $\iota$, i.e. one that can be written as $\iota(A)=\mathrm{Ad}_U(A)=UAU^{-1}$ for some unitary $U\in \Ac$. 

Endomorphisms $\rho$ obtained from representations satisfying the DHR criterion for a given $\Ocal$ have the following properties: \cite{KlausSuperselection}
	\begin{enumerate}[label=\bf LE\arabic{enumi},leftmargin=*,widest=4] 
	\item \textbf{Localised} in $\Ocal$: $\rho(A)=A$, $A\in\Ac(\Ocal')$.\label{it:localised}
	\item \textbf{Transportable}: $\forall \Ocal_1,\Ocal_2$ with $\Ocal\cup\Ocal_1\subset \Ocal_2$, there is a unitary $U\in \Ac(\Ocal_2)$ with $\mathrm{Ad}_U\circ\rho(A)=A$, $A\in\Ac(\Ocal_1')$, i.e. for every region $\Ocal_1$ there exists an endomorphism equivalent to $\rho$  under \eqref{equiv} that is localized in $\Ocal_1$. \label{it:transp}
	\item $\rho(\Ac(\Ocal_1))\subset \Ac(\Ocal_1)$, $\forall \Ocal_1\supset\Ocal$.  In fact, this is a consequence of \ref{it:localised} and \ref{it:transp} by an argument similar to show $\tilde{\pi}(\Ac(\Ocal_1))\subset \pi_0(\Ac(\Ocal_1))$ above. \label{it:incl}
\end{enumerate}
Changing the perspective, we can use the properties above  
as defining properties for the following class of endomorphisms of $\Ac$:
\begin{df}
	Given a double-cone $\Ocal$, let $\Delta(\Ocal)$ be the set of all transportable endomorphisms of $\Ac$ localised in $\Ocal$, i.e.  those satisfying \ref{it:localised} and \ref{it:transp} (and hence \ref{it:incl}). We define
	 $\Delta\doteq \bigcup_\Ocal \Delta(\Ocal)$. 
\end{df}
It can be shown that the equivalence classes $\Delta/\sim$ (where $\sim$ is given by \eqref{equiv}) are in one to one correspondence with unitary equivalence classes of representations satisfying \eqref{DHRcrit} (i.e. with superselection sectors).  $\Delta/\sim$ is equipped with a natural product (given by the composition of endomorphisms of $\Ac$), which induces a product on the space of sectors, namely:
	\be\label{prod}
	[\pi_1\cdot \pi_2]\doteq [\pi_0\circ \rho_1\rho_2]\,,
	\ee
	where $\pi_i\doteq \pi_0\circ \rho_i$, $i=1,2$.
\begin{exercise}
	Show that the composition of representations introduced in \eqref{prod} is well defined and the resulting representation satisfies the DHR criterion.
\end{exercise}
It can also be shown (see e.g.~\cite{KlausSuperselection}) that the $\cdot$ product of two representations that are Poincar\'e covariant and satisfy the spectrum condition also has these two features. The space of sectors  equipped with the composition product is a semigroup with the vacuum sector as the identity. One can verify that simple sectors correspond to morphisms $\rho$ that are in fact automorphisms of $\Ac$, i.e. $\rho(\Ac)=\Ac$. Hence the space of simple sectors equipped with $\cdot$ is a group.

Transportability of endomorphisms \ref{it:transp} is crucial for the DHR analysis, since it allows us to ``move morphisms around''. First we establish the following: 
\begin{proposition}\label{prop:loccomm}
	Endomorphisms $\rho$ are locally commutative, i.e. for $\Ocal_1\subset\Ocal_2'$ and $\rho_i\in \Delta(\Ocal_i)$ we have $\rho_1\rho_2=\rho_2\rho_1$.
\end{proposition}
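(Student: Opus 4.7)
The plan is as follows. First, I dispose of the case $A \in \Ac(\Ocal_1)$. Since $\Ocal_1 \subset \Ocal_2'$, property LE1 applied to $\rho_2$ gives $\rho_2(A) = A$, while LE3 yields $\rho_1(A) \in \Ac(\Ocal_1) \subset \Ac(\Ocal_2')$, so LE1 again gives $\rho_2(\rho_1(A)) = \rho_1(A)$. Hence $\rho_1\rho_2(A) = \rho_1(A) = \rho_2\rho_1(A)$; by symmetry the identity also holds on $\Ac(\Ocal_2)$.

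For a general local element $A \in \Ac(\Ocal_3)$ with $\Ocal_3$ an arbitrary bounded causally convex region, the strategy is to transport $\rho_1$ via LE2 to an endomorphism localised well away from all of $\Ocal_1,\Ocal_2,\Ocal_3$. Fix a double-cone $\Ocal_4 \supset \Ocal_1 \cup \Ocal_2 \cup \Ocal_3$ and choose a double-cone $\hat{\Ocal} \subset \Ocal_4'$ (which exists in Minkowski space by translating $\Ocal_1$ sufficiently far in a spatial direction). Set $\Ocal^* := \Ocal_1 \cup \hat{\Ocal}$: this is open, bounded, causally convex (the union of two mutually spacelike causally convex regions is causally convex), and $\Ocal^* \subset \Ocal_2'$ since both $\Ocal_1$ and $\hat{\Ocal} \subset \Ocal_4'$ are spacelike to $\Ocal_2$. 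Applying LE2 with $\Ocal^*$ as the enclosing region furnishes a unitary $U \in \Ac(\Ocal^*) \subset \Ac(\Ocal_2')$ such that $\hat{\rho}_1 := \mathrm{Ad}_U \circ \rho_1$ lies in $\Delta(\hat{\Ocal})$.

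Three identities now follow directly from LE1 and LE3: (i) $\rho_2(U) = U$, since $U \in \Ac(\Ocal_2')$; (ii) $\hat{\rho}_1(A) = A$, since $A \in \Ac(\Ocal_3) \subset \Ac(\hat{\Ocal}')$; (iii) $\hat{\rho}_1(\rho_2(A)) = \rho_2(A)$, since LE3 gives $\rho_2(A) \in \Ac(\Ocal_4) \subset \Ac(\hat{\Ocal}')$. Writing $\rho_1 = \mathrm{Ad}_{U^{-1}} \circ \hat{\rho}_1$ and using that $\rho_2$ is a $*$-homomorphism, a direct calculation yields
\begin{align*}
\rho_1\rho_2(A) &= U^{-1}\,\hat{\rho}_1(\rho_2(A))\,U = U^{-1}\,\rho_2(A)\,U, \\
\rho_2\rho_1(A) &= \rho_2\bigl(U^{-1}\,\hat{\rho}_1(A)\,U\bigr) = \rho_2(U^{-1})\,\rho_2(A)\,\rho_2(U) = U^{-1}\,\rho_2(A)\,U,
\end{align*}
so $\rho_1\rho_2(A) = \rho_2\rho_1(A)$ holds on $\Ac(\Ocal_3)$ for every bounded causally convex $\Ocal_3$. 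Since such local algebras generate $\Ac$ as a $C^*$-algebra and both $\rho_1\rho_2$ and $\rho_2\rho_1$ are norm-continuous (being $*$-endomorphisms of a $C^*$-algebra), the equality extends to all of $\Ac$.

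The only step requiring any care is the geometric one, namely that $\Ocal^* = \Ocal_1 \cup \hat{\Ocal}$ can be taken causally convex and contained in $\Ocal_2'$; this is handled cleanly by the observation that a union of two mutually spacelike causally convex regions is itself causally convex, which avoids having to build a more intricate enclosing region. Everything else is a direct application of LE1--LE3 and the ring-theoretic identities implementing the transport.
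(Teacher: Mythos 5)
There is a genuine gap, and it sits exactly where you flagged "the only step requiring any care": the localisation of the transporter $U$ in $\Ac(\Ocal^*)$ with $\Ocal^* = \Ocal_1\cup\hat{\Ocal}$ a \emph{disconnected} region. Everything in your argument funnels into the single identity $\rho_2(U)=U$, which via \ref{it:localised} needs $U\in\Ac(\Ocal_2')$. You obtain this by applying \ref{it:transp} with the enclosing region taken to be the exact union $\Ocal_1\cup\hat{\Ocal}$. But transportability, as it is actually established for endomorphisms arising from DHR representations, only places the transporter in $\Ac(\Ocal_2)$ for regions $\Ocal_2$ where Haag duality is available — in practice, double cones containing both the old and the new localisation region (this is how the paper uses it, with the double cones $\tilde{\Ocal}_i\supset\Ocal_i\cup\hat{\Ocal}_i$). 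A double cone containing both $\Ocal_1$ and a far-away $\hat{\Ocal}\subset\Ocal_4'$ necessarily sweeps across the region between them and will not, in general, lie inside $\Ocal_2'$; so you cannot get $U\in\Ac(\Ocal_2')$ this way. The stronger claim you need — that $U$ can be chosen in $\Ac(\Ocal_1)\vee\Ac(\hat{\Ocal})$ — is essentially Haag duality for disconnected regions, and its failure is a well-known phenomenon tied precisely to the existence of nontrivial sectors (in the complex scalar field example, the transporter is a gauge-invariant product of two charged field operators, which lies in the gauge-invariant part of $\Fc(\Ocal_1)\vee\Fc(\hat{\Ocal})$ but not in $\Ac(\Ocal_1)\vee\Ac(\hat{\Ocal})$). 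Reading \ref{it:transp} literally for arbitrary causally convex $\Ocal_2$ would make your proof formally valid, but then \ref{it:transp} itself would no longer be a property one can verify for the representations of interest.

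The paper's proof circumvents exactly this obstruction by transporting \emph{both} morphisms: it picks $\hat{\Ocal}_i\subset\Ocal'$ and \emph{double cones} $\tilde{\Ocal}_i\supset\Ocal_i\cup\hat{\Ocal}_i$ with $\tilde{\Ocal}_1\subset\tilde{\Ocal}_2'$, so each transporter $U_i$ lives in a bona fide local algebra $\Ac(\tilde{\Ocal}_i)$. Then $\rho_i(A)=\mathrm{Ad}_{U_i^*}(A)$ for $A\in\Ac(\Ocal)$, the identity $\rho_1(U_2^*)=U_2^*$ follows from \ref{it:localised} because $\tilde{\Ocal}_2\subset\Ocal_1'$ is bounded, and the crucial commutation $U_1^*U_2^*=U_2^*U_1^*$ comes from Einstein causality \ref{it:Eins} applied to the spacelike separated double cones $\tilde{\Ocal}_1,\tilde{\Ocal}_2$ — no localisation of a transporter in a disconnected region is ever invoked. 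Your opening paragraph (the case $A\in\Ac(\Ocal_1)$) and your closing density argument are both fine, but the core of the general case needs to be replaced by a two-sided transport of this kind.
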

\begin{proof}
	Fix an arbitrary double cone $\Ocal$ and choose double cones $\hat{\Ocal}_i$, $\tilde{\Ocal}_i$, $i=1,2$ with the following properties:	 $\hat{\Ocal}_i\subset \Ocal'$, $\hat{\Ocal}_i\cup \Ocal_i\subset \tilde{\Ocal}_i$, $i=1,2$ and $\tilde{\Ocal}_1\subset \tilde{\Ocal}_2'$ (see Figure~\ref{fig:loccomm}). 
	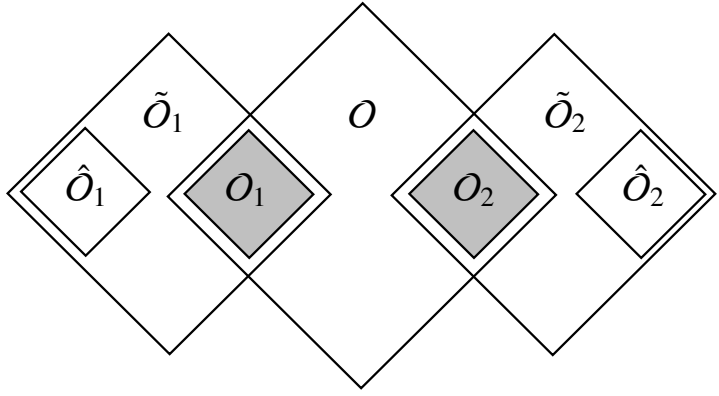
\begin{figure}
\centering
\begin{tikzpicture}[x=1.00mm, y=1.00mm, inner xsep=0pt, inner ysep=0pt, outer xsep=0pt, outer ysep=0pt,scale=0.6]
\path[line width=0mm] (20.34,15.35) rectangle +(178.75,89.08);
\definecolor{L}{rgb}{0,0,0}
\path[line width=0.30mm, draw=L] (100.16,102.44) [rotate around={225:(100.16,102.44)}] rectangle +(60.40,59.92);
\path[line width=0.30mm, draw=L] (142.16,95.61) [rotate around={225:(142.16,95.61)}] rectangle +(50.39,49.91);
\path[line width=0.30mm, draw=L] (57.62,95.70) [rotate around={225:(57.62,95.70)}] rectangle +(49.91,50.23);
\path[line width=0.30mm, draw=L] (39.33,74.86) [rotate around={225:(39.33,74.86)}] rectangle +(19.87,20.03);
\definecolor{F}{rgb}{0.753,0.753,0.753}
\path[line width=0.30mm, draw=L, fill=F] (75.18,74.37) [rotate around={225:(75.18,74.37)}] rectangle +(19.87,20.03);
\path[line width=0.30mm, draw=L, fill=F] (124.44,74.37) [rotate around={225:(124.44,74.37)}] rectangle +(19.87,20.03);
\path[line width=0.30mm, draw=L] (161.10,74.37) [rotate around={225:(161.10,74.37)}] rectangle +(19.87,20.03);
\draw(97,75.18) node[anchor=base west]{\fontsize{14.23}{17.07}\selectfont $\mathcal{O}$};
\draw(70,59) node[anchor=base west]{\fontsize{14.23}{17.07}\selectfont $\mathcal{O}_1$};
\draw(120,59) node[anchor=base west]{\fontsize{14.23}{17.07}\selectfont $\mathcal{O}_2$};
\draw(35,59) node[anchor=base west]{\fontsize{14.23}{17.07}\selectfont $\hat{\mathcal{O}}_1$};
\draw(157.15,59) node[anchor=base west]{\fontsize{14.23}{17.07}\selectfont $\hat{\mathcal{O}}_2$};
\draw(52,75.18) node[anchor=base west]{\fontsize{14.23}{17.07}\selectfont $\tilde{\mathcal{O}}_1$};
\draw(140,75.18) node[anchor=base west]{\fontsize{14.23}{17.07}\selectfont $\tilde{\mathcal{O}}_2$};
\end{tikzpicture}
\caption{Configuration of doublecones in the proof of Proposition~\ref{prop:loccomm}.\label{fig:loccomm}}
\end{figure}
	\ref{it:transp} implies that there exist unitaries $U_i\in\Ac(\tilde{\Ocal}_i)$ such that $\mathrm{Ad}_{U_i}\circ \rho_i=\hat{\rho}_i\in\Delta(\hat{\Ocal}_i)$, $i=1,2$. Hence for any $A\in\Ac(\Ocal)$ we have $\rho_i(A)= \mathrm{Ad}_{U_i^*}\circ\hat{\rho}_i(A)=\mathrm{Ad}_{U_i^*}(A)$, as $\Ocal$ is causally disjoint from both $\hat\Ocal_1$ and $\hat\Ocal_2$ and we have also used \ref{it:incl}. Hence
	\[
	\rho_1\rho_2(A)=\rho_1\circ \mathrm{Ad}_{U_2^*}(A)=\mathrm{Ad}_{\rho_1(U_2^*)}\circ \mathrm{Ad}_{U_1^*}(A)=\mathrm{Ad}_{U_2^*U_1^*}(A)=\mathrm{Ad}_{U_1^*U_2^*}(A)=\rho_2\rho_1(A)\,,
	\]
	where we used \ref{it:localised} to conclude that $\rho_1(U_2^*)=U_2^*$, while $U_1^*U_2^*=U_2^*U_1^*$ follows from \ref{it:Eins}. We can repeat the same reasoning for any double cone $\Ocal$, so $\rho_1\rho_2=\rho_2\rho_1$.
\end{proof}
We can now use this result to show that the product of representations is commutative, i.e. $[\pi_1\cdot \pi_2]=[\pi_2\cdot \pi_1]$. Let $\rho_1,\rho_2\in\Delta(\Ocal)$
and take two spacelike separated double cones $\Ocal_1$, $\Ocal_2$. By \ref{it:transp} there are morphisms $\tilde{\rho}_i$, $i=1,2$ localized in $\Ocal_i$ and unitaries $U_i$, $i=1,2$ such that $\tilde{\rho}_i=\mathrm{Ad}_{U_i}\circ\rho_i$. It follows by Prop.~\ref{prop:loccomm} that
\[
\rho_2\rho_1=\mathrm{Ad}_{\rho_2(U_1^*)U_2^*}\circ \tilde{\rho}_2\tilde{\rho}_1=\mathrm{Ad}_{\rho_2(U_1^*)U_2^*}\circ\tilde{\rho}_1\tilde{\rho}_2=
\mathrm{Ad}_{\varepsilon(\rho_1,\rho_2)}\circ \rho_1\rho_2\,,
\]
with the unitary 
\be\label{eq:statop}
\varepsilon(\rho_1,\rho_2)=\rho_2(U_1^*)U_2^*U_1\rho_1(U_2)\in\Ac\,.
\ee
This proves the equivalence of $\rho_2\rho_1$ and $\rho_1\rho_2$, so the commutativity of the product of representations follows. Hence the space of all sectors, equipped with $\cdot$ is an abelian semigroup and the space of all simple sectors equipped with $\cdot$ is an abelian group.

The unitary operator $\varepsilon(\rho_1,\rho_2)$ that we have discovered here is called the \emph{statistics operator}. It depends only on $\rho_1$, $\rho_2$ and not on $\tilde{\rho}_i$ nor the choice of $U_i$, $i=1,2$. 

Let $\varepsilon_\rho\equiv\varepsilon(\rho,\rho)$. In spacetime dimension $d>2$ we have $\varepsilon_\rho^2=\id$. For simple sectors $\pi_0\circ \rho^2$ is irreducible, so $\varepsilon_\rho$ is a multiple of $\1$ and in $d>2$ this implies that $\varepsilon_\rho=\pm \1$. Physically this corresponds to the alternative between Fermi and Bose statistics. For general sectors in $d>2$ one can also have para-statistics or even infinite statistics. In lower dimensions ($d\leq 2$), $\varepsilon_\rho$ is instead related to representations of the \emph{braid group}. This kind of behaviour appears also for cone-localized charges (BF analysis), but in this case the braided statistics appears already in $d\leq 3$. More details relating $\varepsilon_\rho$ to statistics will be given in the next section. 

The physical interpretation of the product of representations can be understood as follows \cite{DHR3}. Let $\omega_0$ be the vacuum state and $\rho_i\in\Delta(\Ocal_i)$, $i=1,2$. We have the two states $\omega_i=\omega_0\circ \rho_i$, which are vector states in their respective representations $\pi=\pi_0\circ \rho_i$; indeed, one suitable vector is the vacuum vector $\Omega$ that is the GNS vector induced by $\omega_0$. Clearly, $\omega=\omega_0\circ \rho_1\rho_2$ is a vector state in representation $[\pi_1\cdot \pi_2]$, with the same representing vector. Let
$\Ocal_1$ be spacelike to $\Ocal_2$. Then $\omega$
 looks like the vector state $\omega_0\circ\rho_1$ with respect to observations in $\Ocal_2'$ and like $\omega_0\circ\rho_2$ for observations in $\Ocal_1'$. Hence, taking the product $[\pi_1\cdot \pi_2]$ has a physical interpretation of composing two vector states that are localized ``far apart''.
\subsection{Intertwiners and permutation symmetry}\label{sec:intertwiners}
As stated in the previous section, states $\omega_0\circ \rho_k$, $k=1,\dots,n$ obtained from transportable localized morphisms $\rho_k\in \Delta$ are interpreted as vector states corresponding to localized charged particles and composition of morphisms describes creating several such charges in spacetime. Now we want to understand how one may create these charges in a given order. As the state $\omega_0\circ\rho_1\dots \rho_n$ is independent of the ordering of the $\rho_k$'s, this must be done by permuting a family of vectors all of which representing this state. In order to do this, we need some notation.

\begin{df}
	Let $\rho,\sigma\in\Delta$. Define the space of \emph{intertwiners} between $\rho$ and $\sigma$ by\footnote{In \cite{DHR3,DHR4} the intertwiners are denoted by $\mathbf{T}\equiv (\sigma|T|\rho)$, to emphasize to which space they belong.}
	\[
	(\sigma,\rho)\doteq \{T\in\Ac|\ \sigma(A) T=T\rho(A),~\forall~A\in\Ac\}\,.
	\]
\end{df}
We can now define some algebraic operations on intertwiners. Let $S\in (\tau,\sigma)$ and $T\in (\sigma,\rho)$ be intertwiners. We can compose them to obtain $ST\in(\tau,\rho)$ and can also define the adjoint $T^*\in(\rho,\sigma)$. Moreover, there is a natural product between intertwiners in different spaces. Let $T_1\in (\sigma_1,\rho_1,)$ and $T_2\in (\sigma_2,\rho_2)$. We define $ T_1\times T_2\in (\sigma_1\sigma_2,\rho_1\rho_2)$ as
\[
T_1\times T_2\doteq T_1\rho_1(T_2)=\sigma_1(T_2)T_1
\]
(using the fact that $T_1\in (\sigma_1,\rho_1)$ for the last equality).
The $\times$-product is associative, and distributive with respect to the composition of intertwiners (i.e., their product in $\Ac$).

Next we discuss localization properties of the intertwiners.
Consider $T\in(\sigma,\rho)$, where $\rho$ is supported in $\Ocal_1$ and $\sigma$ in $\Ocal_2$. We call $\Ocal_2$ the left support and $\Ocal_1$ the right support of the intertwiner $T$. 
If $A\in \Ac(\Ocal_1')\cap\Ac(\Ocal_2')$, then the support and intertwining properties imply
\[
TA = T\rho(A)= \sigma(A)T = AT
\]
so $T$ is \emph{bilocal}, in the sense that $T\in(\Ac(\Ocal_1')\cap\Ac(\Ocal_2'))'$ (here the commutant is taken in $\Ac$). In the case $\Ocal_1=\Ocal_2=\Ocal$, we have $T\in\Ac(\Ocal)$ by Haag duality~\eqref{Haagdual3} in the vacuum representation and the assumption that $\pi_0$ is faithful.
 
 We call two intertwiners \textit{causally disjoint} if their right supports lie space-like to each other and the same holds for their left supports. 
\begin{proposition}\label{prop:intcomm}
Let	 $T_i\in(\sigma_i,\rho_i)$, where $\rho_i,\sigma_i\in\Delta(\Ocal_i)$, $i=1,2$, with $\Ocal_1$ and $\Ocal_2$ spacelike separated (i.e. $T_1$ and $T_2$ are causally disjoint), then
\[
T_1\times T_2=T_2\times T_1\,.
\]
\end{proposition}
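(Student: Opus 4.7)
The plan is to exploit the observation, recorded in the paragraph immediately preceding the proposition, that an intertwiner $T\in(\sigma,\rho)$ with $\rho,\sigma\in\Delta(\Ocal)$ actually lies in $\Ac(\Ocal)$ (by bilocality together with Haag duality in the faithful vacuum representation). In our setting this gives $T_i\in\Ac(\Ocal_i)$ for $i=1,2$, and since $\Ocal_1$ and $\Ocal_2$ are spacelike separated, Einstein causality~\ref{it:Eins} yields $[T_1,T_2]=0$ in $\Ac$.

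The second ingredient is property~\ref{it:localised} of localised endomorphisms: $\rho_1$ is the identity on $\Ac(\Ocal_1')$, and since $T_2\in\Ac(\Ocal_2)\subset\Ac(\Ocal_1')$ we get $\rho_1(T_2)=T_2$. Symmetrically $\rho_2(T_1)=T_1$. Unpacking the definition of the $\times$-product we then have
\[
T_1\times T_2 = T_1\rho_1(T_2) = T_1 T_2 = T_2 T_1 = T_2\rho_2(T_1) = T_2\times T_1,
\]
which is the claim.

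There is really no hard step to speak of; the one subtlety is to confirm that the intertwiners may indeed be treated as bona fide local operators in $\Ac(\Ocal_i)$ rather than merely bilocal elements of $\Ac$, but this is exactly the statement that has already been deduced from Haag duality just before the proposition. Once that is in hand, commutativity of $T_1\times T_2$ reduces to a one-line combination of Einstein causality and~\ref{it:localised}.
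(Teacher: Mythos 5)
Your proof is correct and follows essentially the same route as the paper: both reduce the claim to the fact that $T_i\in\Ac(\Ocal_i)$ (from bilocality and Haag duality) together with the localisation property \ref{it:localised}. The only cosmetic difference is that you commute $T_1T_2=T_2T_1$ explicitly via Einstein causality and then apply $\rho_2(T_1)=T_1$, whereas the paper reaches $T_2\times T_1$ through its alternative form $\sigma_2(T_1)T_2$ with $\sigma_2(T_1)=T_1$; the two chains are interchangeable.
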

 \begin{proof}
 	To see this, note that $T_i\in\Ac(\Ocal_i)$, so $T_1\times T_2=T_1\rho_1(T_2)=T_1T_2=\sigma_2(T_1)T_2=T_2\times T_1$.
 \end{proof}
The statistics operator $\varepsilon_\rho$ is also an intertwiner and it has a nice expression in terms of products of other intertwiners.
\begin{exercise} Let $\rho_1,\rho_2\in\Delta(\Ocal)$ and $\tilde{\rho}_i=U_i\rho_i U_i^{-1}$, $i=1,2$, as in the construction leading to \eqref{eq:statop}. Show that $\varepsilon(\rho_1,\rho_2)\in(\rho_2\rho_1,\rho_1\rho_2)$ and 
	\[
	\varepsilon(\rho_1,\rho_2)=(U_2^*\times U_1^*)  (U_1\times U_2)\,.
	\]
\end{exercise}
Clearly, 
$\varepsilon_\rho\in(\rho^2,\rho^2)$, which means that $\varepsilon_\rho$ commutes in $\Ac$ with every element of $\rho^2\Ac$, $\varepsilon_\rho\in (\rho^2\Ac)'$, and therefore $\pi_0(\varepsilon_\rho)$ commutes with all observables in the representation $\pi_0\circ\rho^2$.

Now take $T_i\in (\sigma_i,\rho_i)$, where $\rho_i,\sigma_i\in\Delta$, $i=1,2$. One can also easily check (\textit{Exercise}) that
\be\label{eq:changeorder}
\varepsilon(\sigma_1,\sigma_2)  (T_1\times T_2)=(T_2\times T_1)  \varepsilon(\rho_1,\rho_2)\,,
\ee
so in particular, for $\sigma_1=\sigma_2$ and $\rho_1=\rho_2$, we have
\[
\varepsilon_\sigma \, (T_1\times T_2)=(T_2\times T_1)  \varepsilon_\rho\,.
\]
Hence $\varepsilon$ changes the order of factors in $\times$-product.
\begin{exercise}
	Prove \eqref{eq:changeorder}. You will need to use Proposition~\ref{prop:intcomm}.
\end{exercise}

Using $\varepsilon_\rho$, we can construct a representation $\varepsilon^{(n)}_\rho$ of the \emph{braid group} for each $\rho$. The braid group $B_n$ with $n$ strands is the group generated by $\varsigma_1,\dots \varsigma_{n-1}$ with relations:
\begin{align*}
\varsigma_i\varsigma_j&=\varsigma_j\varsigma_i\,,\quad \textrm{if}\ |i-j|>2\,,\\
\varsigma_i \varsigma_{i+1} \varsigma_i&=\varsigma_{i+1}\varsigma_i \varsigma_{i+1}\,.
 \end{align*}
Let $\varsigma_i$, $i\in\{1,\dots,n-1\}$ be a generator of  $B_n$, then we set $\varepsilon^{(n)}_\rho(\varsigma_i)\doteq \rho^{i-1}\varepsilon_\rho$. 
The following exercise shows that this is indeed a representation of the braid group.
\begin{exercise}
   Show that $\varepsilon_\rho\rho(\varepsilon_\rho)=\varepsilon(\rho^2,\rho)$ and use this together with \eqref{eq:changeorder} to prove $\varepsilon_\rho\rho(\varepsilon_\rho)\varepsilon_\rho=\rho(\varepsilon_\rho)\varepsilon_\rho\rho(\varepsilon_\rho)$.
\end{exercise}

In $d>2$, this also gives us a representation of the permutation group $S_n$ (as $\varepsilon_\rho^2=\1$). In the more detailed analysis that follows we will focus on the $d>2$ case and we refer the reader to \cite{FRS89} for the general case. Let  $\varepsilon^{(n)}_\rho(P)$ denote the representative of $P\in S_n$. More generally, we may also define $\varepsilon(\rho_1,\dots,\rho_n; P)\in (\rho_{P^{-1}(1)}\dots \rho_{P^{-1}(n)},\rho_1\dots\rho_n)$ by
\[
\varepsilon(\rho_1,\dots,\rho_n;P):=\boldsymbol{U}^*(P)  \boldsymbol{U}(e)\,,
\]
where (in analogy with the $n=2$ case) $\tilde{\rho}_i=U_i\rho_i U_i^{-1}$ are auxiliary morphisms localized in spacelike separated regions $\Ocal_i$, $i=1\dots,n$ and we use the notation $\boldsymbol{U}(P):=U_{P^{-1}(1)}\times\dots \times U_{P^{-1}(n)}$. One can easily check (\textit{Exercise}) that $\varepsilon(\rho,\dots,\rho;P)=\varepsilon^{(n)}_\rho(P)$. Note that in $n=2$ case we implicitly have $\varepsilon(\rho_1,\rho_2;\tau)\equiv \varepsilon(\rho_1,\rho_2)$, where $\tau$ is the transposition of $1$ and $2$.
Property \eqref{eq:changeorder} generalizes to:
\[
\varepsilon(\sigma_1\dots,\sigma_n;P)   \boldsymbol{T}(e)=\boldsymbol{T}(P)  \varepsilon(\rho_1,\dots,\rho_n;P)\,,
\]
where $T_i\in(\sigma_i,\rho_i)$  $i=1,\dots,n$. If $\sigma_i=\sigma$, $\rho_i=\rho$ for all $i=1,\dots,n$, we have
\[
\varepsilon^{(n)}_\sigma(P)   \boldsymbol{T}(e)=\boldsymbol{T}(P)  \varepsilon^{(n)}_\rho(P)\,.
\]

To understand better the physical interpretation of $\varepsilon^{(n)}_\rho(P)$, fix a morphism $\rho$ and consider a family of intertwiners $U_k\in(\sigma_k,\rho)$, where $\sigma_k$ are morphisms localized in spacelike separated regions $\Ocal_k$, $k=1,\dots,n$. Let $\Omega\in\Hcal_0$ be 
the vacuum vector. Clearly,
\[
\left<\pi_0(U_k^*)\Omega,\pi_0(\rho(A)) \pi_0(U_k^*)\Omega\right>=\left<\Omega,\pi_0(U_k\rho(A)U_k^*)\Omega\right>=\left<\Omega,\pi_0(\sigma_k(A)) \Omega\right>=\omega_0\circ \sigma_k(A)\,,
\]
so $\omega_0\circ\sigma_k$ is a vector state on $\rho\Ac$ with the distinguished vector $\pi_0(U_k^*)\Omega$.

Now consider vectors of the form
\[
\Psi_P=\pi_0(\boldsymbol{U}^*(P))\Omega\,, \qquad P\in S_n,
\]
which can be interpreted as a product of $n$ state vectors with
identical charge quantum numbers but with an ordering determined by $P$. The operator  $\varepsilon^{(n)}_\rho(Q)$ changes the order of factors in this vector \cite{DHR3}. To see this, recall that
$Q\mapsto \varepsilon^{(n)}_\rho(Q)\in\Ac$ is a unitary representation of $S_n$, whereupon $\Psi_P=\pi_0( \varepsilon^{(n)}_\rho(P)  \boldsymbol{U}^*(e))\Omega$ and
\[\pi_0( \varepsilon^{(n)}_\rho(Q))\Psi_P=\pi_0(\varepsilon^{(n)}_\rho(Q) \varepsilon^{(n)}_\rho(P)\boldsymbol{U}^*(e))\Omega = \pi_0(\varepsilon^{(n)}_\rho(QP)\boldsymbol{U}^*(e))\Omega=\Psi_{QP}
\]
Maybe: Each vector $\Psi_P$ induces the same state on $\rho^n\Ac$, namely $\omega_0\circ \sigma_1\dots\sigma_n$. Therefore, the action of $\varepsilon^{(n)}_Q$ is analogous to permutations of the wave functions of $n$ identical particles in quantum mechanics, which also leaves expectation values of observable quantities unchanged.

In the next step one introduces the notion of a \emph{conjugate sector}. Physically, the relation between a sector and its conjugate is that of having a charged particle localized in some compact region versus having the corresponding antiparticle localized in that region. We have already mentioned that sectors can be equipped with the structure of a semigroup and simple sectors form a group. In the latter case, the conjugate sector is just given by the group inverse $\varepsilon_\rho=\pm \1$, so we have the simple fermion/boson alternative.

More generally, to  obtain a left inverse to a given $\rho$, we want to find a map $\phi:\Ac\rightarrow \Ac$ with
\[
\phi(\rho(A)B\rho(C))=A\phi(B)C\,,\quad \phi(A^*A)\geq0\,,\quad \phi(\1)=\1\,.
\]
Note that $\phi$ on $\rho(\Ac)$ can be set as $\rho^{-1}$. We then use the Hahn-Banach theorem to extend the state $\omega_0\circ \rho^{-1}$ on $\rho(\Ac)$ to $\Ac$. Let $(\overline{\pi},\overline{\Hcal}, \overline{\Omega})$ be the corresponding GNS triple and define an isometry $V:\Hcal_{0}\rightarrow \overline{\Hcal}$, by means of
$VA\Omega=\overline{\pi}\circ \rho(A)\tilde{\Omega}$. We then define the left inverse of the given sector $\rho$ as $\phi(A)=V^* \overline{\pi}(A) V$. 

The left inverse is used to study the representations of the permutation
group in $d>2$ (or the braid group in low dimensions). We note that
\[
\phi(\varepsilon_\rho)\rho(A)=\phi(\varepsilon_\rho\rho^2(A))=\phi(\rho^2(A)\varepsilon_\rho)=\rho(A)\phi(\varepsilon_\rho)\,,
\]
which for irreducible $\rho$ implies
\[
\phi(\varepsilon_\rho)=\lambda_\rho \1\,,
\]
where $\lambda_\rho$ is the statistics parameter and it characterizes the statistics of the sector $\rho$.  One finds that the allowable values for this parameter are:  
\begin{itemize}
\item $\lambda=\frac{1}{d}$, $d\in\NN$ giving para Bose statistics of order $d$;
\item $\lambda=-\frac{1}{d}$, $d\in\NN$ is giving para Fermion statistics of order $d$; 
\item $\lambda=0$, giving infinite statistics. 
\end{itemize}
In DHR theory, $d$ is called \emph{the statistical dimension}.\footnote{Mathematically, as discovered by Longo in \cite{Lon89}, $d$ is in fact the square root  of the Jones index of the inclusion $\rho(\Ac(\Ocal))\subset\rho(\Ac(\Ocal'))'$, i.e. it quantifies how badly is the Haag duality broken in the given sector.}

To obtain the sector $\overline{\rho}$ conjugate to a given sector $[\rho]$, one shows (under appropriate, physically motivated assumptions, including the finite statistics $\lambda\neq 0$ \cite{DHR3}) that $\overline{\pi}$ satisfies the DHR criterion, so there exists a  morphism $\overline{\rho}\in \Delta(\Ocal)$ such that $\overline{\pi}=\pi_0\circ \overline{\rho}$ and an isometry $R$ such that $\overline{\rho}\rho(A) R= RA$.

In fact, in the language of category theory, localized morphisms and intertwiners form a symmetric (or braided in $d<2$) monoidal category, where the monoidal structure is given by $\times$. The existence of conjugate sectors (unique up to equivalence) means that the category is rigid.

 The reconstruction theorem of Doplicher and Roberts \cite{DR89} allows one to reconstruct the field net $\F(\Ocal)$ and the gauge group $G$ from the above data completing the programme set out in Sec.~\ref{interest}. Abstractly, firstly they show the equivalence of the DHR category to a category of representations of some compact group and then reconstruct the group from that category (this step is a version of the Tannaka--Krein duality), together with the algebra of fields on which this group acts.

\section{Conclusions}
In these notes, we have summarized some important aspects of quantum field theory in the algebraic formulation of Haag and Kastler, focusing on the features that make it very different from quantum mechanics. One such feature is the existence of inequivalent representations. This was illustrated by the case study of the van Hove model in Section~3, and further emphasized in Section~8, where inequivalent representations corresponding to different charges were discussed. In both situations, one can see that all the physical information can be recovered from the abstract net of algebras, so it is more advantageous to think of the net rather than the collection of Hilbert space representations as the fundamental object. The axioms for the net have been formulated in Section~4.1, followed by some simple examples in Section~4.2. The connection between the algebraic and the Hilbert space centred approaches can be made through the choice of an algebraic state. In particular, for QFT on Minkowski spacetime\footnote{Here we focused only on QFT on Minkowski spacetime, but the algebraic approach also easily generalizes to curved spacetimes \cite{BFV,FewVerchReview,JMPReview}.}, one can consider the distinguished Poincar{\'e} invariant state, \textit{the vacuum}; we have also described the more general class of quasi-free states that can be specified by their two-point functions and have Fock space representations. In Section~5, we showed that QFT in the vacuum representation has some peculiar features that make it very different from quantum mechanics, the most dramatic being the Reeh-Schlieder theorem. 

A net of $C^*$-algebras together with a state induces a net of von Neumann algebras via weak completion. In Section~6 we pointed out that the type of von Neumann algebras that arise in QFT (type III) is very different from the type characteristic for quantum mechanics (type I). We discussed the main consequences of this fact in Section~7, in the context of independence of measurements by spacelike observers, in the guise of the split property.

To close, let us emphasize that AQFT, although very different from quantum mechanics, is not
to be regarded as disjoint from ``traditional'' QFT as presented in standard textbooks. To the contrary, it is a framework that allows one to derive and study common structural and conceptual features of QFT, which then become realized in physical, experimentally testable models.



\appendix
\section{Some basic functional analysis}\label{appx:basic_fa}

A general reference for this brief summary is~\cite{ReedSimon:vol1}.
Recall that a Hilbert space $\HH$ is a complex inner product space, with an inner product $\ip{\cdot}{\cdot}$ that is linear in the second slot and conjugate-linear in the first, and for which the associated norm $\|\psi\|:=\sqrt{\ip{\psi}{\psi}}$ is complete, i.e., all Cauchy sequences converge. The Hilbert space is \emph{separable} if it has a finite or countably infinite orthonormal basis, and \emph{inseparable} otherwise.  

A linear operator $A:\HH_1\to\HH_2$ between Hilbert spaces $\HH_1$ and $\HH_2$ is \emph{bounded} iff $\|A\|\doteq \sup_{\|x\|_1=1} \|Ax\|_2$ is finite, where we use the subscript to denote the Hilbert space norm concerned. For maps between Hilbert spaces, boundedness and continuity are equivalent properties. In this case $A$ has an \emph{adjoint} $A^*:\HH_2\to\HH_1$, with the defining property 
\be\label{adjoint}
\ip{\ph}{A\psi}_2=\ip{A^*\ph}{\psi}_1\,,
\ee
for all $\psi\in\HH_1$, $\ph\in\HH_2$. Several interesting classes of bounded operator may be defined: a bounded operator $A:\HH\to\HH$ is \emph{self-adjoint} if $A=A^*$; while $A$ is a projection if $A=A^*=A^2$ (more strictly, this defines an `orthogonal projection' but we follow common usage in simply saying `projection'). A bounded operator $U:\HH_1\to\HH_2$ is \emph{unitary} if $U^*U = UU^* = \II$, and a \emph{partial isometry} if $U^*U$ and $UU^*$ are projections.  Every bounded operator $A:\HH\to\HH$ has a unique \emph{polar decomposition} $A = U|A|$ such that $U$ is a partial isometry with $\ker U=\ker A$, and $|A|$ is a positive operator such that $|A|^2 = A^*A$. Here, a self-adjoint operator $A:\HH\to\HH$ is said to be \emph{positive} if $\ip{\psi}{A\psi}\ge 0$ for all $\psi\in\HH$. For obvious reasons $|A|$ is called the positive square root of $A^*A$. An operator $A$ on $\HH$ is said to be of \emph{trace class} if
$\sum_{\alpha} \ip{e_\alpha}{\,|A|e_\alpha}$ is finite, where $e_\alpha$ is some
orthonormal basis of $\HH$; in this case, the trace $\tr A := \sum_{\alpha} \ip{e_\alpha}{A e_\alpha}$ is finite and independent of the basis used to compute it. 

If $A$ is a partially defined linear map between Hilbert spaces $\HH_1$ and $\HH_2$, 
we denote its domain of definition within $\HH_1$ by $D(A)$. We typically only consider the situation where $D(A)$ is dense. If $\sup\{\|Ax\|_2:x\in D(A),~\|x\|_1=1\}$ is finite, then $A$ can be extended by continuity to a unique bounded operator from $\HH_1$ to $\HH_2$; otherwise, $A$ is described as an \emph{unbounded operator}. The adjoint $A^*$ of a densely defined unbounded operator $A$ is again defined through \eqref{adjoint} and $D(A^*)$ is the set of all $\ph\in\Hcal$ for which this definition makes sense: $\varphi\in D(A^*)$ if and only if there exists $\eta\in\Hcal$ such that $\ip{\ph}{A\psi}=\ip{\eta}{\psi}$ holds for all $\psi\in D(A)$, whereupon we write $A^*\varphi=\eta$. An unbounded densely defined operator $A$ is called self-adjoint if $\ip{\ph}{A\psi}=\ip{A\ph}{\psi}$ for all $\ph,\psi\in D(A)$ and in addition $D(A^*)= D(A)$.
Any operator is completely described by its graph 
\[
\Gamma(A):=\{(x,Ax)\in \HH\times\HH: x\in D(A)\}.
\]
One says that $A$ is a \emph{closed operator} if $\Gamma(A)$ is a closed subset of $\HH\times\HH$ with respect to the norm of $\HH\oplus\HH$; it is \emph{closable} if $\Gamma(A)$ has a closure that is the graph of some (closed) operator $\overline{A}$, which is naturally called the \emph{closure} of $A$. All densely defined operators with densely defined adjoints are closable and their adjoints are closed, so in particular self-adjoint operators are closed. The polar decomposition extends to closed operators. 
 
The \emph{spectrum} $\sigma(A)$ of a (bounded or unbounded) operator $A$ on a Hilbert space $\HH$ is the set of $z\in\CC$ for which $A-z\II$ fails to have a bounded two-sided inverse. 
In particular, eigenvalues lie in the spectrum but not every spectral point is an eigenvalue.
The spectrum of a self-adjoint operator is real, $\sigma(A)\subset\RR$, while the spectrum of a unitary operator lies on the unit circle in $\CC$.  

Finally, suppose that $\HH$ is a \emph{real} Hilbert space (i.e., a real inner product space with a complete induced norm). To distinguish real spaces from complex ones in this appendix, we write the real inner product with round brackets and the adjoint with a dagger. 
A \emph{complex structure} on $\HH$ is a linear map $J:\HH\to \HH$ obeying $J^2 = -\II$, $J^\dagger = -J$. Then we may convert $\HH$ into a complex Hilbert space by adding two structures: first, the operation of
multiplication by a complex scalar,
\[
\CC\times\HH\owns (z,\psi)\mapsto  (\Re z)\psi - (\Im z)J\psi \in\HH ,
\]
in which sense multiplication by $i$ is implemented by $-J$ (this convention is annoying but avoids a proliferation of minus signs elsewhere in the main body of the text), and second, 
a sesquilinear inner product
\[
\ip{\psi}{\varphi} = (\psi,\varphi) + i(\psi,J\varphi).
\]
\begin{exercise}
	Check that $\HH$, with these additional structures, is indeed a complex Hilbert space.
\end{exercise} 

\section{Construction of an algebra from generators and relations} \label{appx:presentation}

Several algebras encountered in Sec.~\ref{sec:examples} were presented in terms of generators and relations. Here, we give more details on how an algebra may be constructed in this way, taking the real scalar field as our example. 
		\begin{itemize}
			\item First consider the free unital $*$-algebra $\Uc$ containing arbitrary finite linear combinations of finite products of the $\Phi(f)$'s and $\Phi(f)^*$'s and unit $\II$.
			\item Construct a two-sided $*$-ideal $\Ic$ in $\Uc$ generated by the relations. 
			Thus $\mathcal{I}$ contains all finite linear combinations of terms of the form
			\[
			A(\Phi(f)^*-\Phi(\overline{f}))B 
			\] 
			as $A$ and $B$ range over $\Uc$ and $f$ ranges over $\CoinX{\MM}$, and similar terms obtained from the other relations, and all terms obtained from these by applying $*$. Recall that a two-sided $*$-ideal is a subspace of the algebra that is stable under multiplication by algebra elements on either side, and under the $*$-operation.
			\item The algebra $\Ac(\MM)$ is defined as the quotient $\Uc/\Ic$, namely, the vector space quotient, equipped with product and $*$-operations so that
			\[ 
			[A] [B]= [AB],\qquad [A]^*=[A^*], \qquad \II_{\Ac(\MM)} = [\II_\Uc].
			\]
			One may check that the fact that $\Ic$ is a two sided $*$-ideal guarantees that these operations are well-defined (independent of the choice of representatives).
			
			\item For future reference: let $\Bc$ be another algebra obtained as a quotient $\Bc=\Vc/\Jc$, where $\Jc$ is a two-sided $*$-ideal in a unital $*$-algebra $\Vc$. Then
			any function mapping the $\Phi(f)$ into $\Vc$ extends uniquely to a unit-preserving $*$-homomorphism from $\Uc$ to $\Vc$, and induces a unit-preserving $*$-homomorphism from $\Ac(\MM)$ to $\Bc$, provided every element of $\Ic$ is mapped into $\Jc$, i.e., the map on generators is compatible with the relevant relations.
		\end{itemize}

\section{Fock space}\label{sec:Fock}

Let $\HH$ be a complex Hilbert space. As usual, $\HH^{\odot n}$ denotes the $n$'th symmetric tensor power of $\HH$, with $\HH^{\odot 0}=\CC$, whereupon the \emph{bosonic Fock space over $\HH$} is
\[
\FF(\HH) = \bigoplus_{n=0}^\infty \HH^{\odot n}.
\]
Thus, a typical Fock space vector is a sequence $\Psi = (\Psi_n)_{n\in\NN_0}$, where
$\Psi_n\in\HH^{\odot n}$ is called the $n$-particle component of $\Psi$. 
In particular the Fock vacuum vector is $\Omega=(1,0,\ldots)$, and the number operator
$N$ is defined by 
\[
(N\Psi)_n = n\Psi_n \qquad n\in\NN_0
\]
on the domain of all $\Psi\in\FF(\HH)$ for which $\sum_{n=0}^\infty n^2\|\Psi_n\|^2<\infty$.
We will describe the annihilation and creation operators and the number operator on $\FF(\HH)$ in the basis-free notation used in Sec.~\ref{sec:quasifree}. 
See references~\cite[\S 5.2.1]{BratRob:vol2} and~\cite[\S X.7]{ReedSimon:vol2} for more details.

In this framework, each $\psi\in\HH$ labels annihilation and creation operators $a(\psi)$ and $a^*(\psi)$ on $\FF(\HH)$, with $\psi\mapsto a(\psi)$ being \emph{antilinear}, and $\psi\mapsto a^*(\psi):=a(\psi)^*$ being linear in $\psi$. These operators are unbounded and have to be defined on suitable dense domains within $\FF(\HH)$, which may be taken as the domain of $N^{1/2}$, i.e., those $\Psi\in\FF(\HH)$ for which $\sum_{n=0}^\infty n\|\Psi_n\|^2<\infty$. The annihilation operator acts (on vectors in the domain) by
\[
(a(\varphi)\Psi)_n = \sqrt{n+1} \ell_{n+1}(\varphi)(\Psi_{n+1})
\]
where $\ell_{n+1}(\varphi):\HH^{\otimes (n+1)}\to\HH^{\otimes n}$ is defined by 
\[
\ell_{n+1}(\varphi)(\psi_1\otimes\cdots\otimes\psi_{n+1}) = \ip{\varphi}{\psi_1}\psi_2\otimes\cdots\otimes \psi_{n+1}
\]
and restricts to a map $\HH^{\odot (n+1)}\to\HH^{\odot n}$. It follows in particular that $a(\varphi)\Omega=0$ for all $\varphi\in\HH$. One may check that the adjoint operators obey
\[
(a^*(\varphi)\Psi)_0=0,\qquad
(a^*(\varphi)\Psi)_{n+1} = \sqrt{n+1} S_{n+1} (\varphi\otimes\Psi_{n}),\qquad n\in\NN_0
\]
where $S_{n+1}$ is the orthogonal projection onto $\HH^{\odot (n+1)}$ in $\HH^{\otimes (n+1)}$. 
Acting on vectors $\Psi$ in the domain of $N$, the canonical commutation relations hold in the form
\begin{equation}\label{eq:CCRsappx}
[a(\psi),a^*(\varphi)]\Psi = \ip{\psi}{\varphi}_\HH \Psi,
\end{equation}
and vectors obtained by acting with sums of products of $a^*(\varphi)$ operators on $\Omega$ are dense in $\FF(\HH)$.

This presentation of the annihilation and creation operators may seem unfamiliar to those who prefer their annihilation and creation operators to look more like $a_i$ and $a_j^*$. But let $(e_i)$ be any orthonormal basis for $\HH$ and define $a_i = a(e_i)$, $a_i^*=a(e_i)^*=a^*(e_i)$. Then the CCRs become
\[
[a_i,a_j^*] = \ip{e_i}{e_j}_\HH \II = \delta_{ij}\II
\]
(understood as acting on a suitable domain) and of course $a_i\Omega=a(e_i)\Omega=0$,
which provides a set of annihilation and creation operators labelled by a discrete index. At least formally (because infinite sums of unbounded operators should be handled with care)
\[
a(\psi) = \sum_i \ip{\psi}{e_i}a_i,\qquad 
a^*(\varphi) = \sum_i \ip{e_i}{\varphi}a_i^*.
\]
The advantage of the basis-independent approach is that it does not give any basis a privileged status, and avoids the need for infinite series of the type just given if changing basis, for example. The number operator can also be related to the annihilation and operators in the basis-independent form -- see~\cite[\S 5.2.3]{BratRob:vol2}.

It is also common in QFT to use annihilation and creation operators indexed by a continuous momentum variables, in situations where $\HH$ is a space of square-integrable functions of momentum. This is essentially a matter of using a continuuum-normalised `improper basis' for $\HH$, but one should be aware that, while
$a(\kb)$ does define an (unbounded) operator, it is sufficiently poorly-behaved that it does not have a densely defined operator adjoint. Nonetheless, any normal-ordered string 
$a^*(\kb_1)\cdots a^*(\kb_m)a(\kb'_{1})\cdots a(\kb'_n)$ can be given
meaning as a quadratic form, that is, defining its matrix elements as
\[
\ip{\Psi}{a^*(\kb_1)\cdots a^*(\kb_m)a(\kb'_{1})\cdots a(\kb'_n)\Psi'}:=
\ip{a(\kb_m)\cdots a(\kb_1)\Psi}{a(\kb'_{1})\cdots a(\kb'_n)\Psi'}
\]  
on suitable vectors $\Psi,\Psi'\in\FF(\HH)$. For more on this viewpoint, see~\cite[\S X.7]{ReedSimon:vol2}.

\providecommand{\bysame}{\leavevmode\hbox to3em{\hrulefill}\thinspace}
\providecommand{\MR}{\relax\ifhmode\unskip\space\fi MR }
\providecommand{\MRhref}[2]{%
	\href{http://www.ams.org/mathscinet-getitem?mr=#1}{#2}
}
\providecommand{\href}[2]{#2}
\small{}
\end{document}